\documentclass[12pt, reqno]{amsart}

\setlength{\textwidth}{6.0in} \setlength{\oddsidemargin}{0.25in}
\setlength{\evensidemargin}{0.25in} \numberwithin{equation}{section}

\usepackage{amsfonts}
\usepackage{amscd}
\usepackage{amssymb}
\usepackage{graphics}
\usepackage{amsmath}
\usepackage[english]{babel}
\usepackage{hyperref}
\usepackage{bbm}
\usepackage{yfonts}
\usepackage{mathrsfs}
\usepackage{color}
\usepackage{graphicx}

\addtolength{\hoffset}{-0.5cm} \addtolength{\textwidth}{1cm}
\newtheorem{theorem}{Theorem}
\newtheorem{claim}[theorem]{Claim}
\newtheorem{conclusion}[theorem]{Conclusion}
\newtheorem{conjecture}[theorem]{Conjecture}

\newtheorem{lemma}[theorem]{Lemma}

\newtheorem{proposition}[theorem]{Proposition}

\newtheorem{prop}{Proposition}[section]

\newtheorem{obs}[prop]{Observation}

\newtheorem{defn}[prop]{Definition}
\newtheorem{rem}[prop]{Remark}

\newtheorem{exam}[prop]{Example}

\newtheorem{rmk}[prop]{Remark}


\newcommand{\BN}{{\mathbb{N}}}
\newcommand{\BR}{{\mathbb{R}}}

\newcommand{\calF}{{\mathcal{F}}}

\newcommand{\CA}{{\mathcal{A}}}



\begin{document}
\title{Geometry and Dynamics in Zero Temperature Statistical Mechanics Models}
\author{Ran J. Tessler}
\maketitle
\tableofcontents
\newpage
To Yaki and Sara, my grandparents, and to Boris.
\newpage
\begin{abstract}

We consider several models whose motivation arises from statistical mechanics.
We begin by investigating some families of distributions of translation
invariant subgraphs of some Cayley graphs, and in particular subgraphs of the square lattice.
We then discuss some properties of the Spin-Glass model in that lattice. We
continue in describing some properties of the Spin-Glass models in some other
graphs.

The last two parts of this work are devoted to the understanding of two
dynamical processes on graphs. The first one is the well known zero-temperature Glauber dynamics on some families of graphs.

The second dynamics, which we call the Loop Dynamics, is a natural generalization of the zero-temperature Glauber dynamics, which appears to have some interesting properties. We analyzed some of its properties for planar lattices, though the exact same techniques are applied for larger families of graphs as well.
\end{abstract}


\newpage
\section{Introduction}
\subsection{Background}
$~$\\
This paper is divided into four main theoretical sections. The remaining parts of this paper are the introduction, the Hebrew abstract and most importantly the dedications. Each of the main theoretical sections contains subsections of background, notations and formulation of results, while the rest of each section is devoted to proving these results.

The first section deals with properties of distributions of translation invariant families of subgraphs of amenable Cayley graphs. In particular we investigate some properties of such distributions in the square lattice. These results, in addition to some independent interest are used in proofs of some results from later sections.

The Second part considers the Edwards-Anderson Ising spin glass model on the planar square lattice and other families of graphs. For the square lattice spin glass we obtain new results about the geometry of its ground states (to be defined below).

We then continue in exploring the Edwards-Anderson Ising spin glass model on other graphs. We show that for wide families of graphs we have exactly one ground state, while for a large family of other graphs we have an infinite amount.

The last two part parts are devoted to exploring two dynamical processes. We first consider zero-temperature Glauber dynamics on some families of graphs. Our main result in that section is  that on regular trees with an even degree this dynamics does not freeze. This is a complement to a known result that for regular trees with an odd degree the dynamics does freeze.

We then define and discuss a new dynamics $\emph{the Loop dynamics}$ which is a natural generalization of the zero-temperature Glauber dynamics. It is more difficult to show that this process is well defined, but it appears to be worth the effort, as weak limits of this dynamical process happen to have some interesting properties.

The first two parts and the forth part are taken from a joint work with Noam Berger. The third is a joint work with Oren Louidor.


\subsection{Ergodic Theory Background and the Mass Transport Principle}
\begin{defn}
Let $(X,\mathfrak{B},\mu)$ be a finite measure space, and $T:X\rightarrow X$ a measurable map.
\begin{enumerate}
\item We say that $T$ is $\emph{measure~preserving}$ if for every $B\in\mathfrak{B}$ \[\mu(B)=\mu(T^{-1}(B))\].

\item We say that a measure preserving map $T$ is $\emph{ergodic}$ if the only measurable sets $B\in\mathfrak{B}$ such that
\[ \mu(T^{-1}(B)\triangle B)=0\]
are either sets of measure zero or sets with complement $X\backslash B$ of measure zero.
\end{enumerate}
\end{defn}
\begin{defn}
Let $(X,\mathfrak{B},\mu)$ be a finite measure space, and $G$ a discrete group acting on it, such that $\forall g \in G$ the action of $g$ on $X$ is a measurable map.
\begin{enumerate}
\item We say that $\mu$ is $\emph{invariant}~\emph{under}~\emph{the}~\emph{action}~\emph{of}~G$ if for every $B\in\mathfrak{B}$ and every
$g \in G$  \[\mu(B)=\mu(g\cdot {B})\].

\item We say that the action of $G$ is $\emph{ergodic}$ if the only measurable sets $B\in\mathfrak{B}$ such that
\[ \mu(g\cdot {B}\triangle B)=0\] for every $g\in G$,
are either sets of measure zero or sets with complement $X\backslash B$ of measure zero.
\end{enumerate}
\end{defn}
$~$
\\ $\mathbf{The ~ Mass ~ Transport ~ Principle}$
\\ Let $G=(V,E)$ be a Cayley graph of some discrete group $\Gamma$ or more generally a transitive graph whose automorphism group $\Gamma$ is unimodular. Let $\mu$ be a $\Gamma$-invariant measure on $\{0,1\}^{E}$. A $\emph{mass}$ function $m=m(\omega;x,y)$ is a $\mu$-measurable function from $\{0,1\}^{E}\times{V}\times{V}$ to $\mathbb{R}$. We will also assume that it is invariant under the diagonal action of $\Gamma$, i.e. $m(\omega;x,y)=m(\gamma \omega;\gamma x,\gamma y)$, for any $\gamma\in\Gamma,\omega\in\{0,1\}^{E},x,y\in V$.

A mass function should be thought as the mass transferred from $x$ to $y$ conditioned on the configuration $\omega$.
We denote by $M(x,y)=\mathbb{E}_\mu m(\omega;x,y)$.
The Mass-Transport principle ($\emph{MT}$) says that
\begin{equation}\label{eq:MT}
\forall x\in V ~ \Sigma_{y\in V}M(x,y)=\Sigma_{y\in V}M(y,x)
\end{equation}
In particular the left hand side is finite iff the right hand side is finite.
A good reference for this principle can be found in \cite{lyonsperes}
\newpage
\section{Translation Invariant Measures of Subgraphs in the planar lattice and other graphs}\label{sec:inv subgraphs}
$~$\\

\subsection{Background}\label{sec:Background-inv subgraphs}
$~$\\
This section explores some properties of translation invariant measures of subgraphs of some families of graphs. Our main interest is the planar lattice, where we investigate some families of translation invariant paths and forests and show both qualitative and quantitative results. These objects appear naturally in probability (e.g. \cite{zerner}), graph theory, mathematical physics and more. Some of these results will be used later on in the section on spin glasses.
\subsection{Preliminaries}\label{Prelim-inv subgraphs}
This subsection is mainly devoted to definitions that of objects that we shall study throughout the section.
\begin{defn}\label{def:direction}
Given a simple, two-sided-infinite path $P=(V(P),E(P))$, a $\emph{direction}~t:V(P)\rightarrow\mathbb{Z}$ is a bijective function from the path's vertices to $\mathbb{Z}$, such that two vertices are neighbors in the path iff their images are consecutive numbers in $\mathbb{Z}$. Given such a direction, and a vertex $v\in V(P)$, we define the $\emph{past}$ of $v$ to be the set of vertices whose $t$-value is smaller than that of $v$, i.e. the set $\{u\in V(P)| t(u)<t(v)\}$. Similarly one defines the $\emph{future}$ of a vertex.

A $\emph{ray}$ is a half-infinite straight line segment which start at some point.
\end{defn}
We have the following trivial observation.
\begin{obs}\label{obs:past is transitive}
Let $u,v,w$ be three vertices of a path with a given direction. Then if $v$ is in the past of $u$, and $w$ is in the past of $v$ then $w$ is in the past of $u$.
\end{obs}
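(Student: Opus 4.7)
The plan is to reduce the statement directly to the transitivity of the strict order on $\mathbb{Z}$, using the direction function $t$ as a bridge. Since the observation is labeled as trivial, the proof proposal should be essentially a one-line unpacking of definitions.

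First, I would invoke the given direction $t:V(P)\to\mathbb{Z}$ supplied in Definition 2.1.1. By definition of the past, $v$ being in the past of $u$ means $t(v)<t(u)$, and $w$ being in the past of $v$ means $t(w)<t(v)$. Then I would simply concatenate these two strict inequalities in $\mathbb{Z}$ to deduce $t(w)<t(u)$. Reinterpreting this last inequality through the definition of the past gives that $w$ lies in the past of $u$, which is the claim.

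There is essentially no obstacle here: the direction is literally a bijection onto $\mathbb{Z}$, so the ``past'' relation on $V(P)$ is the pullback along $t$ of the relation $<$ on $\mathbb{Z}$. The transitivity of $<$ transfers verbatim. The only thing to note, for completeness, is that one does not need to use the ``neighbors iff consecutive'' part of Definition 2.1.1; only the bijectivity of $t$ (or even just the well-definedness of $t$ as a function into a totally ordered set) is used. This is why the observation is stated as trivial and needs no further machinery.
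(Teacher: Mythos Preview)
Your proposal is correct and is exactly the trivial unpacking the paper has in mind; in fact the paper gives no proof at all, merely labeling the observation as trivial immediately after stating the definition of past via the direction $t$. Your reduction to the transitivity of $<$ on $\mathbb{Z}$ is the intended content.
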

Next we define the notions of a $\emph{cross}$ and a $\emph{snail}$, that use us for obtaining quantitative bounds on biinfinite simple lattice paths which are in the support of some translation invariant measure of paths.
\begin{defn}\label{def:crosses}
Let $P$ be a bi-infinite path in the standard lattice graph $\mathbb{Z}^{2}$. Assume that $P$ intersects any translation, by integer shifts (in both directions),of the positive $x$-axis, negative $x$-axis, positive $y$-axis and negative $y$-axis infinitely many times. Let $p=(a,b)$ be a (lattice) vertex of $P$ and $n$ be a positive integer. We consider the intersection (lattice) points of the path with the half-line $\{(s,b)| s > a\}$. We order these points by their distance from $p$ (or their $x$-coordinate). Denote by ${p_n}^{x_+}$ the $n^{th}$ closest (to $p$) point out of the above intersection points, and similarly define ${p_n}^{x_-}$, ${p_n}^{y_+}$, ${p_n}^{y_-}$. We define the  $\emph{n}^{th}~\emph{cross}~\emph{of}~p$ as the union of two line segment, the one which connects ${p_n}^{x_+}$ and ${p_n}^{x_-}$, and the one which connects ${p_n}^{y_+}$ and ${p_n}^{y_-}$.
We define the $\emph{n}^{th}~\emph{snail}~\emph{of}~p$, as the smallest (with respect to inclusions) connected subgraph of the path which contains $p$ and all the path points from the $n^{th}$-cross of $p$.
The $\emph{length}$ of a snail is defined as the number of path edges included in it.
\end{defn}
We finish this subsection with some definitions concerning infinite trees and forests.
\begin{defn}
Let $T$ be an infinite tree of bounded degree. We say that it is $\emph{single-infinite}$ if it doesn't have two disjoint one-way infinite paths. We say it is $\emph{bi-infinite}$ if it has two disjoint one-way-infinite paths, but not three. Otherwise we say it is $\emph{multi-infinite}$. In a single-infinite tree, any vertex $v$ has a single one-way infinite path which starts at it. We call this path the $\emph{stem}$ of $v$, and denote it by $S(v)$. We denote by $R(v)$, and call it the $emph{roots}$ of $v$, the unique connected component of $T\backslash S(v)$ which contains $v$. Note that if $u$ is a vertex in $R(v)$ then $v$ is a vertex in $S(u)$.

In a bi-infinite tree, there is a single two-way infinite path. We call the $\emph{path}$ of the tree $P(T)$, or simply $P$.

In a multi-infinite tree, a vertex $v$ with the property that $T\backslash \{v\}$ has at least three infinite components is called an $\emph{encounter point}$.

We denote by $\delta_T(x,y)$ the tree distance between the vertices $x,y$. We shall sometimes write only $\delta$, if it is clear to which tree we are referring. The graph distance function will be denoted by $d(\cdot,\cdot)$
\end{defn}
\begin{rmk}\label{rmk:rmk forests}
It follows from the famous K\"{o}nig's Lemma (e.g. \cite{konig}), that any infinite tree of bounded degree contains at least one one-way infinite path. If it has no two such disjoint paths, then unless the tree itself is a one-way infinite path (what is impossible in the translation invariant scenario), it has no unique or canonic one-way infinite path. On the other hand each vertex has a single such path starting from it.

If the tree is bi-infinite is has exactly one bi-infinite path.

A multi-infinite tree always has at least one encounter-point. Moreover, it is easy to see, following \cite{burton_keane}, that the number of encounter points in some finite domain in the graph is never higher than the maximal number of disjoint one-way infinite paths which intersect the boundary of the domain (although we can decompose differently these paths, the maximal number of paths is finite and well defined).
\end{rmk}

\subsection{Main Results}\label{sec:Main Results-inv subgraphs}
$~$\\
In the subsection that handles translation invariant measure of bi-infinite paths in the planar lattice we prove two main results. The first one, which concerns intersection of paths from the support of the measure with rays is the following theorem:
\begin{theorem}\label{thm:past intersects a ray infinitely many times}
Let $\mu$ be a measure of infinite paths in the $\mathbb{Z}^{2}$ lattice. Assume that $\mu$ is ergodic with respect to the group of $\mathbb{Z}^{2}$-translations and invariant under the group of rotations of the plane around the origin by integer multiples of $\pi/2$. Let $v$ be a vertex of the path, then the past of $v$ intersects the ray which is parallel to the $y$-axis, starts at $v$ and whose direction is upwards (intersects the lower half lattice only finitely many times) infinitely many times. Similar propositions hold for the other rays and the future as well.
\end{theorem}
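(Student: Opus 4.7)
The plan is to reduce the theorem to the classical recurrence of a one-dimensional stationary-ergodic mean-zero random walk. By translation invariance of $\mu$ it suffices to work under the Palm measure $\mu_0$ (conditioning on $0\in\Pi$) and to show that, $\mu_0$-a.s., the past of the root contains infinitely many vertices on $R_0^\uparrow=\{(0,y):y>0\}$.

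First enumerate the path as $(v_k)_{k\in\BZ}$ in the given direction with $v_0=0$. Standard Palm calculus makes the along-path shift $\sigma:v_k\mapsto v_{k+1}$ measure-preserving and ergodic on $\mu_0$, so the step sequence $\Delta_k=v_k-v_{k-1}\in\{\pm e_1,\pm e_2\}$ is stationary ergodic, and $\pi$-rotation invariance forces $(v_k)\stackrel{d}{=}(-v_k)$ under $\mu_0$, whence $\BE[\Delta_1]=0$. In particular the horizontal process $x_k=\sum_{i=1}^k\Delta_i^{(1)}$ is a mean-zero stationary-ergodic random walk on $\BZ$ with steps in $\{-1,0,1\}$. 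Atkinson's recurrence theorem then gives $\limsup_{k\to\pm\infty}x_k=+\infty$ and $\liminf_{k\to\pm\infty}x_k=-\infty$ a.s., provided $(x_k)$ is not a.s.\ bounded; the latter alternative is ruled out by $\pi/2$-rotation symmetry, which would then force the whole path into a bounded box, contradicting $|\Pi|=\infty$. A discrete intermediate-value argument (steps are $\le 1$) then produces $x_k=0$ for infinitely many $k\ge 1$ and for infinitely many $k\le-1$, so $\Pi$ visits the vertical line $L=\{x=0\}$ infinitely often on each side of the root.

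Next I pass to the induced Palm system on the $L$-visits. Let $(k_j)_{j\in\BZ}$ list these visits in path order with $k_0=0$, and set $Y_j=y(v_{k_j})$. The induced shift $\sigma_J$ is measure-preserving and ergodic, the increments $D_j=Y_j-Y_{j-1}\in\BZ\setminus\{0\}$ (non-zero by simplicity of $\Pi$) are stationary ergodic, and $\pi$-rotation acts as $Y\mapsto-Y$, giving $\BE[D_1]=0$. Applying Atkinson to $Y_j=\sum_{i=1}^jD_i$, and separately to the time-reversed walk whose increment sequence $(-D_{-i+1})_{i\ge 1}$ is again stationary-ergodic mean-zero, and again excluding the a.s.-bounded alternative by the positive-density consequence of $\BZ^2$-invariance, yields $\limsup_{j\to+\infty}Y_j=+\infty$ and $\limsup_{j\to-\infty}Y_j=+\infty$ a.s. Since $Y_0=0$, there are infinitely many $j\le-1$ with $Y_j>0$, which are exactly the past intersections of $\Pi$ with $R_0^\uparrow$; the analogous statements for the other rays and for the future follow by $\pi/2$-rotation and by the time-symmetric version of the argument.

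The step I expect to be the main obstacle is the invocation of Atkinson's dichotomy in both places, i.e.\ ruling out the bounded/coboundary alternative where the partial-sum walk stays bounded. Both times this is done by combining $\pi/2$-rotation invariance with the consequence of $\BZ^2$-translation invariance and positive density, which together prevent the path (or its $L$-visits) from being confined to a bounded region. The remaining pieces — setting up the Palm framework, ergodicity of the induced shift on $L$-visits, and $\BE[\Delta_1]=0$ from $\pi$-symmetry — are routine Palm- and ergodic-theoretic manipulations.
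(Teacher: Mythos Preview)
Your approach is genuinely different from the paper's. The paper proceeds by three elementary geometric lemmas (finite intersection with one vertical ray forces the same for all; the past cannot miss both $H^+$ and $H^-$; if the past misses $H^+$ then the future hits it) and then a short combinatorial invariant---a subset $S$ of eight segments recording which axial rays are hit only finitely often---which must be $\mu$-a.s.\ constant by ergodicity yet cannot be fixed by the $\pi/2$-rotation unless it is empty. Crucially, the paper works only with events that are \emph{independent of the choice of direction} on the path, and explicitly flags (in the proof of its second lemma) that a measurable direction may not exist.

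Your proposal, by contrast, builds the entire argument on a chosen direction: the along-path shift $\sigma$, the steps $\Delta_k$, and the induced shift on $L$-visits all require it. There are several genuine gaps.

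\textbf{(i) Ergodicity of $\sigma$ fails in the natural setup.} If you symmetrize by choosing the direction uniformly at random (the standard fix for measurability), then the direction label is a nontrivial $\sigma$-invariant function, so $\sigma$ is \emph{not} ergodic on the randomized Palm measure, and Atkinson does not apply. You would need to argue on each direction-component separately, but on a single component neither $\BE[\Delta_1]=0$ nor ergodicity is automatic; getting either requires exactly the kind of analysis the paper's approach is designed to avoid. The assertion ``standard Palm calculus makes $\sigma$ measure-preserving and ergodic'' is therefore not justified.

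\textbf{(ii) Integrability of $D_1$.} For the induced walk on $L$-visits you claim $\BE[D_1]=0$ from $\pi$-symmetry, but symmetry gives this only if $\BE|D_1|<\infty$. The gap $k_1-k_0$ between consecutive $L$-visits is a return time of a mean-zero stationary walk, and such return times typically have infinite expectation (already for simple random walk). Since $|D_1|$ can be as large as $k_1-k_0$, integrability of $D_1$ is not available, and your second invocation of Atkinson is unsupported. One can try to bypass this via the boundedness/unboundedness dichotomy directly from symmetry and simplicity of $\Pi$, but that is a different argument than the one you wrote.

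\textbf{(iii) The induced system on $L$-visits is not the first-return map of $\sigma$.} In the re-centered Palm picture the root is always on $L$, so ``inducing on $\{v_0\in L\}$'' is inducing on a full-measure set and yields $\sigma$ itself. What you actually need is a variable-time shift $\sigma^{k_1}$, and its measure-preservation and ergodicity require their own justification, not just a reference to Kac.

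In short, the Palm/Atkinson route is attractive and could perhaps be made to work, but as written it rests on several unproven (and in one case false) ergodicity and integrability claims. The paper's direction-free geometric argument sidesteps all of these issues.
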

The second theorem shows a quantitative result about the 'growth' of paths in the support of a translation invariant measure.
\begin{theorem}\label{thm:lim inf snail is at least quadratic}
Let $\mu$ be a translation invariant measure of bi-infinite paths in the $\mathbb{Z}^{2}$-lattice. Let $P$ be a path in the support of the measure, and $p$ be a lattice point in the path. then there is a positive $c = c(p,P)$ such that for infinitely many values of $n\in \mathbb{N}$, the length of the $n^{th}$-snail of $p$ is at least $cn^{2}$.
\end{theorem}
Next, we consider measures of forests in planar lattices, lattices in higher dimensions and in general amenable Cayley graphs. We show that the infinite connected components must be of some given shape (i.e. have no more than two infinite disjoint paths as subgraphs) and we analyze expected properties and sizes of subgraphs in these trees.
The main result of that part is the theorem below:
\begin{theorem}\label{thm:single-infinite at least boundary sized or nlogn}
Let $\mu$ be an ergodic , translation-invariant and rotation-invariant (as described above) measure of single-infinite trees on the lattice $\mathbb{Z}^{2}$. Then there exist a positive constant $\rho$, such that for any vertex $v$, and an increasing sequence of boxes, $\{B_n\}_{n=1}^{\infty}$, centered at $v$, the following holds:

Let $e_n$ be the number of edges of the tree which lay in the $n^{th}$ box and lay on a simple path in the tree which connects two boundary points of the box, then for all large enough $n$, $\mathbb{E}_\mu e_n > \rho nlog(n)$.
\end{theorem}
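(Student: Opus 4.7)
\emph{Proof plan.}

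\emph{Step 1 (reducing $e_n$ to a stem-union).} I begin by describing the edges counted by $e_n$ in tree-theoretic terms. An edge $e\subset B_n$ is counted iff both components of $T\setminus\{e\}$ meet $\partial B_n$. Because $T$ is single-infinite $\mu$-a.s., removing $e$ produces exactly one finite component (the \emph{descendants} $D(e)$ on the roots side, where $D(e)$ is the component of $T\setminus\{e\}$ not containing the unique end of $T$) and one infinite component. The infinite component must meet $\partial B_n$ on its way to infinity, so the condition reduces to $D(e)\cap\partial B_n\neq\emptyset$. But $w\in D(e)\cap\partial B_n$ iff $e\in S(w)$ and $w\in\Lambda_n:=V(T)\cap\partial B_n$. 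Writing $F_n:=\bigcup_{v\in\Lambda_n}S(v)$, we obtain $e_n=|F_n\cap B_n|$, so it suffices to bound $\mathbb{E}|F_n\cap B_n|$ from below by $\rho\,n\log n$.

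\emph{Step 2 (merge-tree / hierarchical decomposition).} The $|\Lambda_n|$ stems $\{S(v):v\in\Lambda_n\}$ merge inside $T$ on their way to the unique end of $T$, producing an abstract merge tree $\mathcal{M}_n$ whose leaves are $\Lambda_n$, whose internal nodes are merge vertices of stems, and whose edges correspond to disjoint path-segments of $F_n$. Thus $|F_n\cap B_n|$ equals the total length in $B_n$ of these merge-tree segments. I organize the merges hierarchically: at dyadic level $j\ge 0$, two stems are in the same cluster iff their merge point is within tree-distance $2^j$ of each of them. Since $T\subset \mathbb{Z}^2$ has maximum degree $4$, each merge vertex has at most $3$ branches so $\mathcal{M}_n$ has branching degree $\le 3$, which forces it to have $\Omega(\log |\Lambda_n|)=\Omega(\log n)$ levels between $\Lambda_n$ and the end of $T$.

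\emph{Step 3 (per-level geometric lower bound).} I aim to show that for each level $j=0,1,\ldots,\lfloor c\log n\rfloor$, the segments of $\mathcal{M}_n$ of level exactly $j$ contribute in expectation at least $c'\,n$ edges to $F_n\cap B_n$. The planar embedding gives two opposing constraints: (i) any level-$j$ cluster contains boundary vertices pairwise at Euclidean distance $O(2^j)$, so since $\partial B_n$ is one-dimensional with $\sim n$ vertices each such cluster contains $O(2^j)$ leaves, giving $\Omega(n/2^j)$ clusters at level $j$; and (ii) each cluster at level $j$ must spread geometrically over scale $\sim 2^j$ before merging, so the segment above its merge has tree-length $\Omega(2^j)$ (using $\delta_T\ge \|\cdot\|$). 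Multiplying gives $\Omega(n)$ per level. Translation and rotation invariance, invoked through the mass transport principle in the manner of Section~\ref{Prelim-inv subgraphs}, are what promote these geometric statements to statements in expectation and control the failure probability on each level. Summing over the $\Omega(\log n)$ levels yields $\mathbb{E}\,e_n\ge\rho\,n\log n$.

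\emph{Main obstacle.} The subtle step is Step~3. The hierarchical decomposition is clean combinatorially, but making the per-level bound rigorous requires (a) showing that the inequality ``level-$j$ cluster has $O(2^j)$ leaves on $\partial B_n$'' holds in expectation (using rotation invariance to symmetrize the cluster around its merge point and $\delta_T\ge\|\cdot\|$ to convert tree-distance clustering to Euclidean clustering on $\partial B_n$), and (b) showing that the level-$j$ segments really contribute length $\Omega(2^j)$ and not merely the trivial length $1$. For (b) one needs a lower bound on the \emph{distinctness} of merge segments across levels, ultimately derived from a mass transport argument weighted by $1/\delta_T$ on stem edges: since $\sum_{e\in S(0)}1/\delta_T(0,e)=\sum_{k\ge 1}1/k=\infty$, the principle forces the descendant sets to be geometrically spread out, preventing the merge tree from collapsing to small scales. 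Combining all this with the planar-degree bound $\le 3$ on the branching of $\mathcal{M}_n$ completes the proof.
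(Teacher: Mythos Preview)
Your Step~1 is correct and matches the paper's own observation: an edge contributes to $e_n$ iff the finite side of $T\setminus\{e\}$ meets $\partial B_n$, equivalently iff it lies on $S(w)$ for some $w\in V(T)\cap\partial B_n$.

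The gap is in Step~3, and the fixes you sketch do not close it. Two concrete problems. First, you need the ``level-$j$ segments'' (those portions of $F_n$ that are new at level $j$ and not already counted at level $j-1$) to have total length $\Omega(n)$. Your argument is that a level-$j$ cluster ``spreads over scale $\sim 2^j$'', but this spreading is the content of the \emph{sub}-clusters at levels $<j$; the segment \emph{above} a level-$j$ merge, connecting it to the next merge, can perfectly well have length $1$. If instead you charge the full $\Theta(2^j)$ of cluster diameter to level $j$, you are double-counting across levels and the sum collapses to $O(n)$, not $\Omega(n\log n)$. Second, even if the lengths were right, nothing keeps the merge segments inside $B_n$: the stems $S(v)$ for $v\in\partial B_n$ run to infinity, and all of the merging may occur outside the box. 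The $1/\delta_T$-weighted mass transport you mention is the right kind of tool, but you have not said how it yields either of these two missing facts.

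The paper's proof avoids the merge tree entirely and applies mass transport directly at interior vertices. For $v\in V(T)$ at $\ell_\infty$-distance $k$ from $\partial B_n$, the mass function $m(u,v)=\mathbf{1}\{v\in S(u),\ \|u-v\|_\infty=k\}$ gives $\mathbb{E}\big[\#\{u\in R(v):\|u-v\|_\infty=k\}\big]\ge 1$, and since this count is bounded by the sphere size $\Theta(k)$, Markov's inequality yields $\mathbb{P}\big(\exists\,u\in R(v):\|u-v\|_\infty=k\big)\ge c/k$; rotation invariance lets one insist this $u$ lie on the side of $v$ facing the nearest boundary face, so that $u\in\partial B_n$. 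As there are $\Theta(n)$ interior vertices at each distance $k=1,\dots,n$, summing gives $\mathbb{E}\,e_n\ge c'\sum_{k=1}^{n}n/k=\Omega(n\log n)$. This argument is both shorter and immune to the two issues above: the contribution is anchored at interior vertices, hence automatically inside $B_n$, and the $\log n$ comes from the harmonic sum over distances to the boundary rather than from any hierarchical depth count.
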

\subsection{Translation Invariant Measures of Infinite $\mathbb{Z}^{2}$ Paths}\label{sec:infinite paths-inv subgraphs}
$~$\\
In this subsection we study translation-invariant measure supported on single two-sided infinite lattice paths. We consider the geometric properties of paths which are in the support of such measures. We prove some growth bounds on intersections of such paths with a fixed line, and we investigate intersections of such paths with other objects in the plain.

We begin by proving Theorem \ref{thm:past intersects a ray infinitely many times}. The proof is divided into several lemmas.

Under the assumptions of the theorem, denote by $H^{+}$ the ray through $v$ in the direction of the upper half $y$-axis. Denote by $H^{-}$ the second ray which is parallel to the $y$-axis and starts at $v$, but is a translation of the lower $y$-axis. We now state and prove some lemmas, each assumes the conditions of the theorem.
\begin{lemma}\label{lem:one ray -> all rays}
Assume that the origin is in the path, and that its past intersects the positive $y$-axis only finitely many times. Then for any point $(a,b)$ in the path, its past intersects the ray $\{(a,t)| t > b\}$ only finitely many times.
\end{lemma}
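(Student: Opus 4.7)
The plan is to reinterpret the hypothesis measure-theoretically and exploit $\mathbb{Z}^2$-translation invariance together with countable additivity. Neither the ergodicity nor the rotation invariance of $\mu$ is needed here; these will be used in the subsequent lemmas that leverage this one to handle all four rays simultaneously.

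For every lattice point $v = (a,b) \in \mathbb{Z}^2$, define the events
\[
G_v = \{\omega : v \in P(\omega)\}, \qquad F_v = \{\omega : v \in P(\omega) \text{ and past of } v \text{ meets } \{(a,t) : t > b\} \text{ finitely often}\},
\]
so that $F_v \subseteq G_v$. The natural reading of the hypothesis is that, conditionally on $G_0$, the past of the origin meets $H^+$ only finitely often $\mu$-almost surely; equivalently, $\mu(G_0 \setminus F_0) = 0$.

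The first step is to observe that $G_v$ and $F_v$ are precisely the translates of $G_0$ and $F_0$ under the shift by $v$: the definitions of ``lies on the path'' and ``past meets the upward vertical ray'' commute with integer translations of the lattice (the upward vertical ray from $(a,b)$ is the shift by $(a,b)$ of $H^+$). Consequently, by $\mathbb{Z}^2$-translation invariance of $\mu$,
\[
\mu(G_v \setminus F_v) = \mu(G_0 \setminus F_0) = 0 \qquad \text{for every } v \in \mathbb{Z}^2.
\]

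The second step is a union bound over the countable set $\mathbb{Z}^2$:
\[
\mu\Bigl(\bigcup_{v \in \mathbb{Z}^2} (G_v \setminus F_v)\Bigr) \le \sum_{v \in \mathbb{Z}^2} \mu(G_v \setminus F_v) = 0.
\]
Thus $\mu$-a.s. it holds that for every lattice point $v$ lying on the path, the past of $v$ meets the upward vertical ray issuing from $v$ only finitely many times. Restricting this a.s.\ statement to the event $F_0$, which is the measure-theoretic incarnation of the lemma's hypothesis, yields the conclusion.

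I do not anticipate a genuine obstacle here; the one subtlety is stating the hypothesis and the conclusion as measure-theoretic ``a.s.'' statements rather than as pointwise assertions about a single configuration, so that translation invariance can be applied directly. The harder content will presumably lie in the next lemmas, where one must combine such a one-ray statement with the rotation invariance and Observation~\ref{obs:past is transitive} (past-transitivity) to rule out simultaneous finiteness on orthogonal rays, thereby proving Theorem~\ref{thm:past intersects a ray infinitely many times}.
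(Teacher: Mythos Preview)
Your argument is internally valid, but it proves a different statement from the lemma because you have misread the hypothesis. The lemma is a configuration-wise implication: on the event $F_0$ (origin is on the path and its past meets $H^+$ finitely often), the conclusion ``every path vertex $(a,b)$ has its past meet its upward ray only finitely often'' must hold. In symbols, the claim is $\mu\bigl(F_0\cap\bigcup_v (G_v\setminus F_v)\bigr)=0$. You instead assume the much stronger hypothesis $\mu(G_0\setminus F_0)=0$, under which the conclusion is indeed a one-line consequence of translation invariance and a union bound. But that hypothesis is not available: in the proof of Theorem~\ref{thm:past intersects a ray infinitely many times} one negates the theorem and obtains only $\mu(F_0)>0$, not $\mu(G_0\setminus F_0)=0$. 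Passing from $\mu(F_0)>0$ to ``a.s.\ every path vertex has the finite property'' is exactly what Lemma~\ref{lem:one ray -> all rays} is for, so your reading makes the lemma circular for its intended use.

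The genuine content of the lemma is that one \emph{cannot} have, in a single path, both a vertex whose past meets its upward ray finitely often and another vertex whose past meets its upward ray infinitely often. The paper's proof assumes both types occur with positive probability, uses translation invariance and ergodicity to produce (in a typical configuration) an alternating pattern $a_1<b_1<a_2<b_2<a_3$ of finite-type and infinite-type vertical lines, and then gives a planar-topological argument: the past, in order to visit $b_1$ and $b_2$ infinitely often above a height $h$ beyond which it no longer touches the $a_i$'s, is forced to cross a fixed finite segment infinitely many times, contradicting simplicity of the path. This geometric step is the heart of the lemma and is entirely absent from your proposal.
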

\begin{proof}
Assume the opposite. It is clear that if a point $(a,b)$ in the path has the property that its past intersects the half ray above it which is parallel to the $y$-axis only finitely many times, then this is true for any point $(a,c)$ in the path. It is therefore follows that finite intersection is a property of lines which are parallel to the $y$-axis. Due to translation invariance, and to the assumptions, there must be infinitely many rays which are a parallel translation (in the $x$ direction) of the positive $y$-axis such that the past of a point in the path intersects infinitely many times, and infinitely many other rays which the past does not intersects infinitely many times. In addition it also follows that these types alternate infinitely many times. In particular, there must be five such rays $a_1,a_2,a_3, b_1,b_2$ which are translations of the positive $y$-axis such that the rays which are indexed $a_i$ intersect the past of any point only finitely many times, and those indexed by $b_j$ are intersected infinitely many times, and that if we order these rays according to the $x$-coordinate we have
that $a_1$ is the leftmost, $b_1$ comes after it, then $a_2$,$b_2$, and $a_3$ (i.e. if ${a_i}^{x}$ is the $x$-coordinate of $a_i$, and similarly for $b_j$ then ${a_1}^{x}<{b_1}^{x}<{a_2}^{x}<{b_2}^{x}<{a_3}^{x}$).

Let $v\in\mathbb{Z}^{d}$ be a point in the path. There is some $h\in\mathbb{Z}$ such that the past of $v$ does not intersect the $a_i$ rays at height (i.e. $y$-coordinate) higher than $h$, and we assume in addition that $h$ is higher than the $y$-coordinates of lowest point of each $a_i$.
Define $T_j$ to be the times of intersections of the path with $b_j$ at points with $y$-coordinate bigger than $h$ (by 'time of intersection' we mean precisely minus the path distance from the intersection point to $v$), $j\in\{1,2\}$. Since the past of $v$ intersects each $b_j$ infinitely many times, and in particular intersects in points with $y$-coordinate bigger than $h$, it follows that
\\ $\forall t_1 \in T_1~\exists\{t_i\}_{i\geq 2}\subseteq T_1,~\exists\{s_i\}_{i\geq 1}\subseteq T_2~t_1>s_1>t_2>s_2...$

Now, since between any two times $t_i$ and $s_i$ as above, the path must intersect the segment parallel to the $x$-axis, at height $h$, that is between $a_1$ and $a_2$, but there are only finitely many points on this segment, whence the past must intersect itself. A contradiction.
\begin{figure}[h]
\begin{center}
\includegraphics[width=0.7\textwidth]{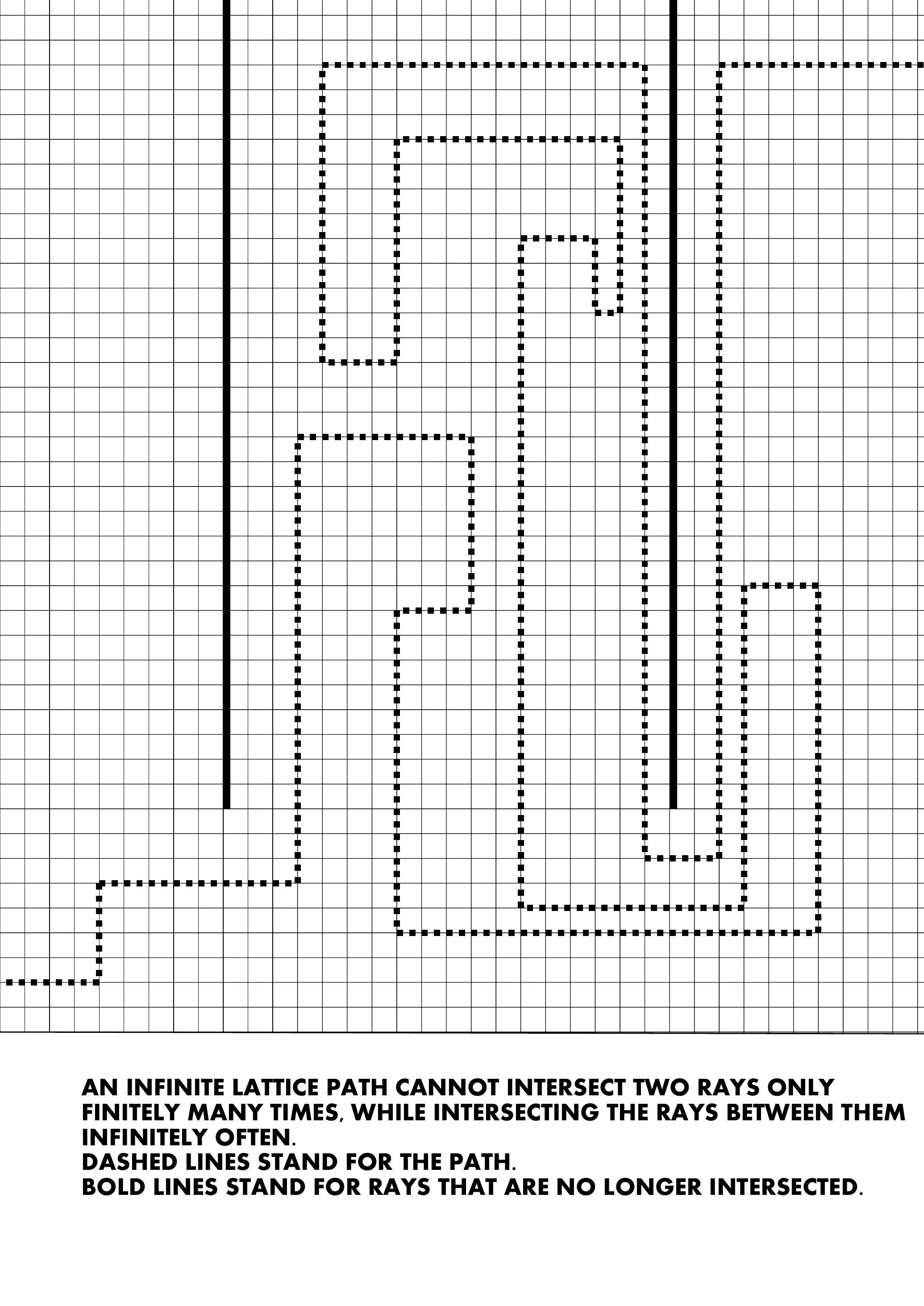}
\caption{Intersections of the past with rays - Lemma \ref{lem:one ray -> all rays}}
\end{center}
\end{figure}
\end{proof}
\begin{lemma}\label{lem:one ray -> not opposite ray}
If the past (of $v$) intersects $H^{+}$ only finitely many times, then it intersects $H^{-}$ infinitely many times.
\end{lemma}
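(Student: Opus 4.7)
I would proceed by contradiction: assume the past of $v$ also meets $H^{-}$ only finitely many times. Combined with the hypothesis, this forces the past to hit the full vertical line $L_v := \{(v_x,t) : t \in \mathbb{Z}\}$ only finitely often (since the path is simple, $v$ itself is not revisited). Hence a tail of the past lies in one of the two open half-planes determined by $L_v$; by relabeling I may assume this tail is contained in $R := \{x > v_x\}$. The plan is then to drive the past into an increasingly constrained geometric configuration until I can force a self-intersection of the simple path.

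First, I would apply Lemma \ref{lem:one ray -> all rays} to the given hypothesis to conclude that the past of every path-vertex hits its own upward ray only finitely often. Applying the same lemma to the $\pi$-rotated configuration (legitimate by the rotation-invariance of $\mu$, and using the contradiction assumption as the rotated hypothesis) would yield analogously that the past of every path-vertex hits its own downward ray only finitely often. Thus the past of every path-vertex is eventually confined to one open side of its own vertical line.

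Using this, I would show next that $x_{-n} \to +\infty$ along the past of $v$. Otherwise, the $x$-coordinates would remain bounded while eventually exceeding $v_x$, so by pigeonhole some vertical line $\{x = c\}$ with $c > v_x$ would be visited by the past infinitely many times. Picking any sufficiently late path-vertex $u$ on that line, the past of $u$ would be a tail of the past of $v$ that still visits $L_u = \{x = c\}$ infinitely often, contradicting the paragraph above. Since the past is eventually in $\{x > v_x\}$, it also hits the leftward ray from $v$ only finitely often, and the $\pi/2$-rotated analog of Lemma \ref{lem:one ray -> all rays} would then give that the past of every path-vertex hits its own leftward ray only finitely often as well.

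With these three finiteness conditions in place --- past of each path-vertex hitting its upward, downward, and leftward rays only finitely often --- I would conclude by replaying the five-line combinatorial argument used in Lemma \ref{lem:one ray -> all rays}, but with horizontal lines taking the role of the vertical ones there. Using translation and rotation invariance of $\mu$, I would extract five horizontal lines alternating in $y$ between ``finite-type'' and ``infinite-type'' intersection with the past; between two consecutive infinite-type lines the past would then be forced to revisit a point on a bounded vertical segment (bounded because it is trapped between two finite-type horizontal lines), producing a self-intersection and contradicting simplicity. The main obstacle I expect is precisely this last step: carrying the alternating-line construction of Lemma \ref{lem:one ray -> all rays} faithfully into the rotated, horizontal setting and verifying that the three derived finiteness properties do indeed yield the required alternating classification of horizontal line types.
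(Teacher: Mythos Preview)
Your approach diverges from the paper's and has a genuine gap at step~5. After steps 1--4 you have (under the contradiction hypothesis) that $x_{-n}\to+\infty$ along the past of $v$, and that the past of every path-vertex meets its upward, downward, and leftward rays only finitely often. But observe that all three finiteness conditions are automatic once $x_{-n}\to+\infty$: since the past (from any vertex) eventually lives in $\{x>c\}$ for every $c$, it meets any fixed vertical line and any fixed leftward half-line only finitely many times. So your three conditions carry no more information than ``$x_{-n}\to+\infty$'', and in particular they do not produce two competing classes of horizontal lines. The five-line argument of Lemma~\ref{lem:one ray -> all rays} needs both finite-type \emph{and} infinite-type lines, alternating; here every horizontal line has its left half hit finitely, and you have no handle on the right half (indeed $x_{-n}\to+\infty$ says nothing about whether the past returns to a given $y$-level). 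The ``replay'' cannot get started, and you correctly flag this as the obstacle---but it is not a detail to be filled in, it is the whole missing idea.

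The paper's proof avoids Lemma~\ref{lem:one ray -> all rays} entirely and is a short density argument. From the contradiction hypothesis one gets, going backward, a last intersection point $u$ of the past with the vertical line through $v$; the entire past of $u$ then lies strictly to one side of $u$. Let $PL$, $PR$, $FL$, $FR$ be the sets of path-vertices whose past (resp.\ future) lies strictly to their left (resp.\ right), and $A$ their union. The event $\{0\in A\}$ is direction-independent and hence $\mu$-measurable; call its probability $\rho$. If $\rho>0$, the ergodic theorem gives at least $\rho N^{2}/2$ points of $A$ in an $N\times N$ box. But any two points of $PL$ must have distinct $x$-coordinates (one lies in the other's past, hence strictly to its left), so each of the four sets contributes at most $N$ points, forcing $\rho=0$. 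Your steps 1--4 are essentially a long route to ``$u\in PR$''; what you are missing is this linear-versus-quadratic count.
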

\begin{proof}
Assume this is not the case, then after going backwards in time there is a point
$u\in P\bigcap\{H^{+}\bigcup H^{-}\}$ that all its past is either strictly to the left of $u$ or strictly to the right of $u$.  The set of points that all their past is strictly to their left will be called $PL$, similarly we define the set $PR$ for the right, and for the future $FL$, $FR$. Let $A = FL\cup FR \cup PL \cup PR$ their union. The event that the origin belongs to $A$ does not depend on the notion of 'direction', and hence is $\mu$-measurable (in Definition \ref{def:direction}, defining a direction required a choice, which may not be measurable). Let $\rho$ be the probability of that event. The ergodic theorem guarantees that for any large enough $N$, in the $N\times N$ square around the origin there are more than $\rho N^{2}/2$ points of $A$. On the other hand, the size of $PL$, and any of the other three sets is at most linear in $N$, since for any two points of $PL$ the $x$-coordinate must be different, as one must be strictly in the past of the other. Whence, $\rho$ must be $0$, and the lemma follows.
\end{proof}
\begin{lemma}\label{lem:one ray past -> not future}
If the past (of $v$) intersects $H^{+}$ only finitely many times, then its future intersects it infinitely many times.
\end{lemma}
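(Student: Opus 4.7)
The plan is to argue by contradiction: suppose that, in addition to the lemma's hypothesis, the future of $v$ also intersects $H^{+}$ only finitely many times. Since past and future together exhaust $P$, the entire bi-infinite path $P$ then meets $H^{+}$ in only finitely many points; this is a direction-free, translation-invariant event. Translating $v$ to the origin, the event
\begin{equation*}
F = \{\, 0 \in P \text{ and } P \cap \{(0,t) : t > 0\} \text{ is finite}\,\}
\end{equation*}
has positive $\mu$-measure. On $F$, the set of path-vertices on the column $\{x = 0\}$ lying at or above the origin is a non-empty finite set, so this column admits a unique topmost path-vertex.

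Next I would use the Mass Transport Principle to convert this into positive measure for the event
\begin{equation*}
G = \{\, 0 \in P \text{ and } 0 \text{ is the topmost path-vertex on its column}\,\}.
\end{equation*}
Take the $\mathbb{Z}^{2}$-invariant mass function $m(\omega;x,y) = \ind\{x \in P \text{ and } y \text{ is the topmost path-vertex on the column through } x\}$. For each $x$, the ``out-sum'' $\sum_{y} m(\omega;x,y)$ equals $\ind\{x \in P,\ \text{column through } x \text{ bounded above by } P\}$, whence $\sum_{y} M(0,y) \geq \mu(F) > 0$. For each $x$, the ``in-sum'' $\sum_{y} m(\omega;y,x)$ is supported on $\{x \text{ topmost on its column}\}$ and, when non-zero, equals $|P \cap \text{column}(x)|$. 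If $\mu(G) = 0$ this in-sum vanishes $\mu$-a.s.\ at $x=0$, contradicting the Mass Transport identity \eqref{eq:MT}. Hence $\mu(G) > 0$.

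Finally I would derive a contradiction by a density count. In any $(2N+1) \times (2N+1)$ box $B_{N}$ centered at the origin there is at most one topmost path-vertex per column, so at most $2N+1$ such vertices lie in $B_{N}$. Translation invariance gives
\begin{equation*}
(2N+1)^{2}\,\mu(G) \;=\; \mathbb{E}\bigl[\#\{v \in B_{N} : v \in G\}\bigr] \;\leq\; 2N+1,
\end{equation*}
so $\mu(G) \leq (2N+1)^{-1}$ for every $N$, forcing $\mu(G) = 0$ and contradicting the previous paragraph.

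The main obstacle, in my view, is the second step: the hypothesis controls $P$ only on the upper half of the column through the origin, whereas $G$ is a global constraint on that entire column. Mass transport is the natural bridge because per-column uniqueness of a topmost vertex is exactly what confines the in-sum to be supported on $G$; the density step is then a straightforward first-moment bound once the event $G$ is in hand.
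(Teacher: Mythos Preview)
Your proof is correct and follows essentially the same route as the paper: reduce to the event that the column through the origin has a topmost path-vertex, pass to the event $G$ that the origin itself is that topmost vertex, and kill $\mu(G)$ by a density count (at most one topmost point per column in any large box). The paper's version is terser---it simply introduces $\rho = \mu(G)$ and says ``an argument similar to the previous lemma shows that $\rho$ must be $0$'', referring to the ergodic-theorem count in Lemma~\ref{lem:one ray -> not opposite ray}; your first-moment bound achieves the same thing without invoking ergodicity. The one place you are more careful than the paper is the passage from $\mu(F)>0$ to $\mu(G)>0$, which the paper leaves implicit; your mass-transport argument is a clean way to do this (an equivalent formulation: on $F'=\{\text{column }0\text{ has a topmost point}\}$ write $\mu(F')=\sum_{k}\mu(\text{topmost at height }k)=\sum_{k}\rho$ by vertical translation invariance, forcing $\rho=0$ and hence $\mu(F')=0$ directly).
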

\begin{proof}
Assume for contradiction this is not the case, then there is a point $u$ in the path, with the same $x$-coordinate as $v$, and with highest $y$-coordinate among all the points with that property. Let $\rho$ be the probability that the origin is the point in the path with the highest $y$ coordinate among those with $x$-coordinate $0$. An argument similar to the previous lemma shows that $\rho$ must be also $0$, but it can also be
shown in a dozen different ways.
\end{proof}
We now continue to the proof of the theorem.
\begin{proof}
Clearly all of the above lemmas can be proved for rays in the $x$-axis direction as well. In addition, due to the first lemma, if the past (future) intersects some ray which is parallel to the $y$-axis ($x$-axis) infinitely (only finitely) many times, it will intersect any other ray of the same direction infinitely (only finitely) many times.
There are several ways to finish the proof from here. One way is the following. Let $\mathcal{S}$ be the set of line segments which connect $(0,0)$ to the eight points $(i,j)\neq (0,0)$ where $-1\leq i,j \leq 1$. We define a function from the paths in the support of our measure to subsets of $\mathcal{S}$, which is translation invariant yet in general not rotation invariant (in multiples of $\pi/2$). The function is defined as follows:
\\ Let $\mathcal{X} = \mathcal{X}_{x,+}\bigcup\mathcal{X}_{x,-}\bigcup\mathcal{X}_{y,+}\bigcup\mathcal{X}_{y,-}$, where
\\ $\mathcal{X}_{x,+} = \{(x,0)\in\mathbb{Z}^{2}| x > 0\}$
\\ $\mathcal{X}_{x,-} = \{(x,0)\in\mathbb{Z}^{2}| x < 0\}$
\\ and the sets $\mathcal{X}_{y,+}, \mathcal{X}_{y,-}$ are defined in a similar way.
Given a path $P$, the past of any point in $P$ intersects at least two of the sets $\mathcal{X}_{x,\pm},\mathcal{X}_{y,\pm}$ infinitely many times, and in the case it intersects just two of them, then one of them is from $\mathcal{X}_{x,\pm}$ and the other is from $\mathcal{X}_{y,\pm}$.
We now define the subset $S$ of $\mathcal{S}$, which is the image of $P$. In case the past of some point in $P$ intersects all the elements of $\mathcal{X}$ infinitely many times, it will not contribute an element of $\mathcal{S}$ to $S$. If it intersects three of $\mathcal{X}$'s elements infinitely many times, then it contribute to $S$ the segment which lays in the ray that it intersects only finitely many times (for example, if it were the negative $y$ axis, we would have put in $S$ the segment between the origin and $(0,-1)$). If it intersects two axis finitely many times many times we add the segment which lays on the (internal) bisector of these two rays (e.g. if the rays are the positive $x$ and $y$ axis, we add to $S$ the segment between the origin and $(1,1)$). We do the same procedure for the future.

\begin{figure}[h]
\begin{center}
\includegraphics[width=0.7\textwidth]{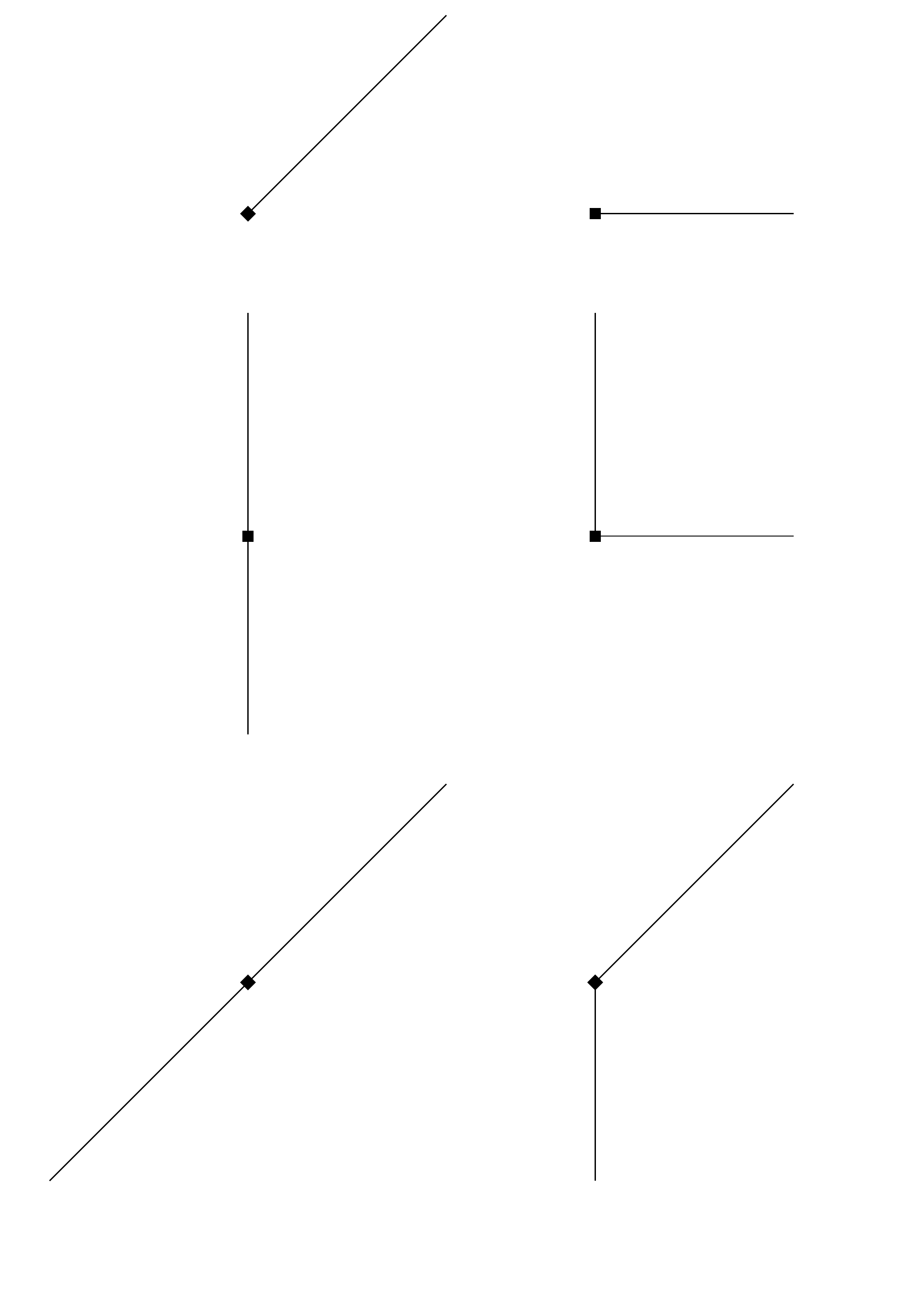}
\caption{Possible diagrams for the set S (Up to symmetries) - Theorem \ref{thm:past intersects a ray infinitely many times}}
\end{center}
\end{figure}

Note that the set $S$ does not depend on the direction of the path, i.e. on the choice of past and future. In addition, it is translation-invariant (due to the above arguments), but if we rotate the path in $\pi/2$ for example, we rotate (geometrically) the resulting set $S$ as well.
If either the past (of some point) or the future does not intersect at least one ray with positive probability, then the set $S$ has at least one element, but at most two. Due to ergodicity, the set $S$ must be constant a.s., on the other hand, no such set, which contains one or two elements of $\mathcal{S}$ returns to itself under a rotation of $\pi/2$, in contradiction with the translation invariance.
\end{proof}
\begin{rmk}
There is another proof to this claim, using a result of H. Kesten \cite{NOAM_KESTEN}.
\end{rmk}
Similar techniques lead to the following generalization
\begin{theorem}\label{thm:past intersects a rational ray infinitely many times}
Let $\mu$ be a measure of infinite paths in the $\mathbb{Z}^{2}$ lattice. Assume that $\mu$ is ergodic with respect to the group of $\mathbb{Z}^{2}$-translations and invariant under the group of rotations of the plane around the origin by integer multiples of $\pi/2$. Let $v$ be a vertex of the path, then the past of $v$ intersects any ray with a rational slope which starts at $v$ infinitely many times (intersections need not to be at lattice points this time). Similar proposition hold for the future.
\end{theorem}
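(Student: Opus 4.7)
My plan is to run the proof of Theorem~\ref{thm:past intersects a ray infinitely many times} essentially verbatim, with the four axis directions $(1,0),(0,1),(-1,0),(0,-1)$ replaced by the four rotated directions $(q,p),(-p,q),(-q,-p),(p,-q)$, where $(q,p)$ is a primitive integer vector representing the rational slope. These four rational directions are still cyclically permuted by the $\pi/2$-rotations that preserve $\mu$, so the global rotation-invariance / ergodicity argument at the end of that proof is unaffected. The roles of the horizontal and vertical coordinates are played throughout by the $(q,p)$- and $(-p,q)$-projections, both of which take discretely many values on lattice points because $\gcd(p,q)=1$.

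The first step is the analog of Lemma~\ref{lem:one ray -> all rays}: if the past of $v$ meets the ray $\{v+t(q,p):t\geq 0\}$ only finitely many times, then the past of every path point $u$ meets the parallel ray from $u$ only finitely many times. I argue by contradiction exactly as in the original: picking five consecutive parallel lines $a_1,b_1,a_2,b_2,a_3$ (ordered by their $(-p,q)$-projection) along which finite/infinite past-intersection behaviour alternates, and choosing a height $h$ in the $(q,p)$-projection above which the past avoids the $a_i$, the path must cross the bounded perpendicular segment at projection $h$ joining $a_1$ and $a_3$ infinitely often. Only finitely many lattice edges straddle this segment and each contributes at most one crossing, contradicting simplicity.

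The analogs of Lemmas~\ref{lem:one ray -> not opposite ray} and~\ref{lem:one ray past -> not future} go through by the same ergodic counting argument: replacing ``strictly to the left/right'' by ``strictly below/above in the $(q,p)$-projection'', the sets $PL,PR,FL,FR$ still contain at most one representative per line parallel to $(q,p)$, hence only $O(N)$ points in an $N\times N$ box, whereas positive density would force order $N^2$ points. The conclusion is then obtained by applying the $\mathcal{S}$-marker construction to the eight rotated directions (the four rays above together with their four inner bisectors), which the $\pi/2$-rotations permute cyclically and which therefore admit no non-empty rotation-invariant subset of cardinality at most two; the translation-invariant, a.s.\ constant marker is thus forced to be empty, meaning past and future each meet all four rotated rays infinitely often. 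The only genuine adjustment compared with Theorem~\ref{thm:past intersects a ray infinitely many times} is the first lemma, where the separating segment is no longer lattice-aligned and the pigeonhole must be phrased in terms of lattice edges straddling the segment rather than lattice points lying on it; I expect this to be the main, though still routine, technical obstacle.
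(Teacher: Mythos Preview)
Your proposal is correct and matches the paper's approach exactly: the paper does not give a separate proof of this theorem but simply states that ``similar techniques lead to the following generalization'', and what you outline is precisely that---rerunning the three lemmas and the $\mathcal{S}$-marker argument with the four $\pi/2$-rotates of a primitive vector $(q,p)$ in place of the axis directions. You have also correctly isolated the one point that needs care, namely that the pigeonhole in the analog of Lemma~\ref{lem:one ray -> all rays} must be applied to lattice edges crossing the (now non-lattice-aligned) separating segment rather than to lattice points on it. One small slip: when adapting Lemma~\ref{lem:one ray -> not opposite ray}, ``strictly to the left/right'' of the line through $u$ in direction $(q,p)$ is detected by the $(-p,q)$-projection, not the $(q,p)$-projection; your next sentence (``at most one representative per line parallel to $(q,p)$'') is the correct conclusion, so this is only a notational inconsistency.
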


Next we deal with measures of bi-infinite paths in the planar lattice, and try to achieve some growth bounds.
The next lemma, geometric in its spirit, will be our main tool for proving quantitative bounds on the behavior of translation invariant lattice paths.
\begin{lemma}\label{lem:cross lemma}
Let $P$ be a path as in the above definition. Let $p,q$ be two points of $P$ and assume that the $n^{th}$-cross of $p$ and the $m^{th}$-cross of $q$ intersect (perpendicularly) in a point. Then their snails intersect , and in particular the path distance between $p$ and $q$ is not more than the sum of the lengths of the corresponding snails.
\end{lemma}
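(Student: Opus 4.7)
The plan is to reduce the lemma to showing that the two snails share a common lattice vertex, from which the path-distance bound follows by the triangle inequality. The main tool is an intermediate-value argument for lattice subpaths of $P$, supplemented by a planar topology argument in a residual case. First I would observe that, since $P$ is a simple bi-infinite path, any connected subgraph of $P$ is itself a subpath, so the $n^{\text{th}}$ snail of $p$ is precisely the subpath $P[A,B]$ of $P$ whose endpoints $A,B$ are the two cross endpoints of $p$ with the most extreme values in some fixed $P$-parameterization. In particular the snail contains the east--west sub-subpath $\alpha_p=P[{p_n}^{x_-},{p_n}^{x_+}]$ and the north--south sub-subpath $P[{p_n}^{y_-},{p_n}^{y_+}]$, and similarly for the $m^{\text{th}}$ snail of $q$. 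Once a common vertex $z$ of the two snails is exhibited, the triangle inequality gives $d_P(p,q)\le d_P(p,z)+d_P(z,q)$, which is bounded by the sum of the snail lengths.

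WLOG the perpendicular intersection is $r=H_p\cap V_q=\{(x_q,y_p)\}$. The border cases in which $r$ is a lattice point of $P$ (for instance $r$ coincides with $p$, $q$, ${p_n}^{x_\pm}$, or ${q_m}^{y_\pm}$) give the common vertex directly, so I would assume $r$ is interior to both segments, whence $x_{p_n^{x_-}}<x_q<x_{p_n^{x_+}}$ and $y_{q_m^{y_-}}<y_p<y_{q_m^{y_+}}$. Next I would analyse the subpath $\beta_q=P[{q_m}^{y_-},{q_m}^{y_+}]$ of the snail of $q$: as a lattice curve going from below to above the horizontal line $y=y_p$, it must visit some lattice vertex on that line, and every such vertex is of the form $p$ or ${p_k}^{x_\pm}$. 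If one such visit has $x$-coordinate in $[x_{p_n^{x_-}},x_{p_n^{x_+}}]$, then it lies on $H_p$ and is a cross endpoint of $p$ of index at most $n$ (or is $p$ itself), hence in the snail of $p$, and the snails share that vertex. A symmetric analysis of $\alpha_p$ and its crossings with the vertical line $x=x_q$ dispatches a second case.

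The hard part is the residual case in which every crossing of $\beta_q$ with $y=y_p$ lies strictly outside $[x_{p_n^{x_-}},x_{p_n^{x_+}}]$ and every crossing of $\alpha_p$ with $x=x_q$ lies strictly outside $[y_{q_m^{y_-}},y_{q_m^{y_+}}]$. My plan there is to complete the snail of $p$ into a closed lattice loop $C_p$ by joining its endpoints $A,B$ along the two arms of the cross of $p$ that pass through $p$, choosing the arms so that $r$ lies in the bounded region enclosed by $C_p$. Since $V_q$ is a vertical segment through $r$ whose endpoints lie on $P$ on opposite sides of $C_p$ (by the residual assumption), the lattice subpath $\beta_q$ connecting these endpoints must cross $C_p$ at some lattice vertex, which must then lie either on an arm of the cross of $p$ (hence be a cross endpoint of index $\le n$, in the snail of $p$) or on the snail portion of $C_p$ (which is the snail of $p$ itself); either way the two snails share the desired common vertex. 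The main technical obstacle is that $C_p$ need not be a simple Jordan curve, because the snail of $p$ may revisit the cross arms at lower-index cross points ${p_k}^{x_\pm},{p_k}^{y_\pm}$ with $k<n$; careful bookkeeping of these self-intersections, or equivalently a winding-number argument for $P$ around $r$, is what makes this final topological step delicate.
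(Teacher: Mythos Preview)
Your reduction to finding a common vertex of the two snails is correct, and the intermediate-value cases are handled cleanly. For the topological core, however, the paper takes a different and simpler route. Rather than working with full snails, it restricts to \emph{partial} snails: for $p$, the minimal subpath of $P$ containing $p,{p_1}^{x_+},\ldots,{p_n}^{x_+}$ (choosing the single arm on which the intersection point $O$ lies), and similarly for $q$ in the $y_-$ direction. Each partial snail is closed into a planar curve by adjoining the single segment $[p,p']$ (resp.\ $[q,q']$). Under the contradiction hypothesis that neither partial snail meets the other's segment, the two closed curves meet only at $O$. After a small deformation each curve is a planar graph with all degrees $2$ or $4$, hence decomposes via an Euler cycle into edge-disjoint simple loops; since any two transversal simple closed curves in the plane cross an even number of times, the total crossing number between the two curves is even --- contradicting the single crossing at $O$.

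Your residual case is where the plan becomes shaky. The assertion that the endpoints of $V_q$ lie on opposite sides of $C_p$ ``by the residual assumption'' is not justified: the residual hypothesis tells you where $\beta_q$ meets the line $y=y_p$ and where $\alpha_p$ meets $x=x_q$, but says nothing about where ${q_m}^{y_\pm}$ sit relative to a closed curve built from the full snail of $p$ plus two cross arms. Moreover, the two closing arms are determined by the snail endpoints $A,B$, so there is no genuine ``choice of arms'' available to force $r$ into a bounded region; and with the full snail your $C_p$ is considerably more self-intersecting than the paper's one-segment closure, so ``sides'' and ``bounded region'' are ill-defined without the winding-number machinery you allude to. That machinery can be made to work, but its clean formulation is exactly the paper's mod-$2$ intersection argument, and for that you want the partial snails with single-segment closures so that the two closed curves have a unique forced crossing at $O$.
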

Though it is easy to be convinced in the correctness of the Lemma, the proof is slightly tricky.
\begin{proof}
We show that the snails intersect, and the rest of the claim follows immediately from this fact. Denote by $O$ the intersection point of crosses. Assume without loss of generality. that $O$ is on the segment between $p$ and ${p_n}^{x_+}$, and that $O$ is on the segment between $q$ and ${q_n}^{y_-}$. Denote by $p'$ the point ${p_n}^{x_+}$, and by $q'$ the point ${q_m}^{y_-}$. From now on (in this proof) when we talk about the $\emph{partial}~\emph{snail}$ of $p$ we mean the minimal connected subpath that connects the points $p, {p_1}^{x_+},...,{p_n}^{x_+}$, and similarly for $q$ (but in the direction of the lower $y$-axis). We show that even these partial snails intersect.
If $O$ is on either partial snail, we are done, since in this case it is in the path, and it must be one of ${p_1}^{x_+},...,{p_n}^{x_+}$ and one of ${q_1}^{y_-},...,{q_m}^{y_-}$.

In addition, if the partial snail of $p$ intersects the segment between $q$ and $q'$, we are done as well, as again this is a path point which belongs to the two snails.

Thus, we may assume, towards contradiction, that the partial snail of $p$ does not intersect the segment between $q$ and $q'$ (and in particular not $O$), and the same assumption for the snail of $q$. Thus, if we consider the graphs of the union of the partial snail of $p$ with the segment between $p$ and $p'$, and the corresponding graph of $q$, their only intersection is $O$.

\begin{figure}[h]
\begin{center}
\includegraphics[width=0.7\textwidth]{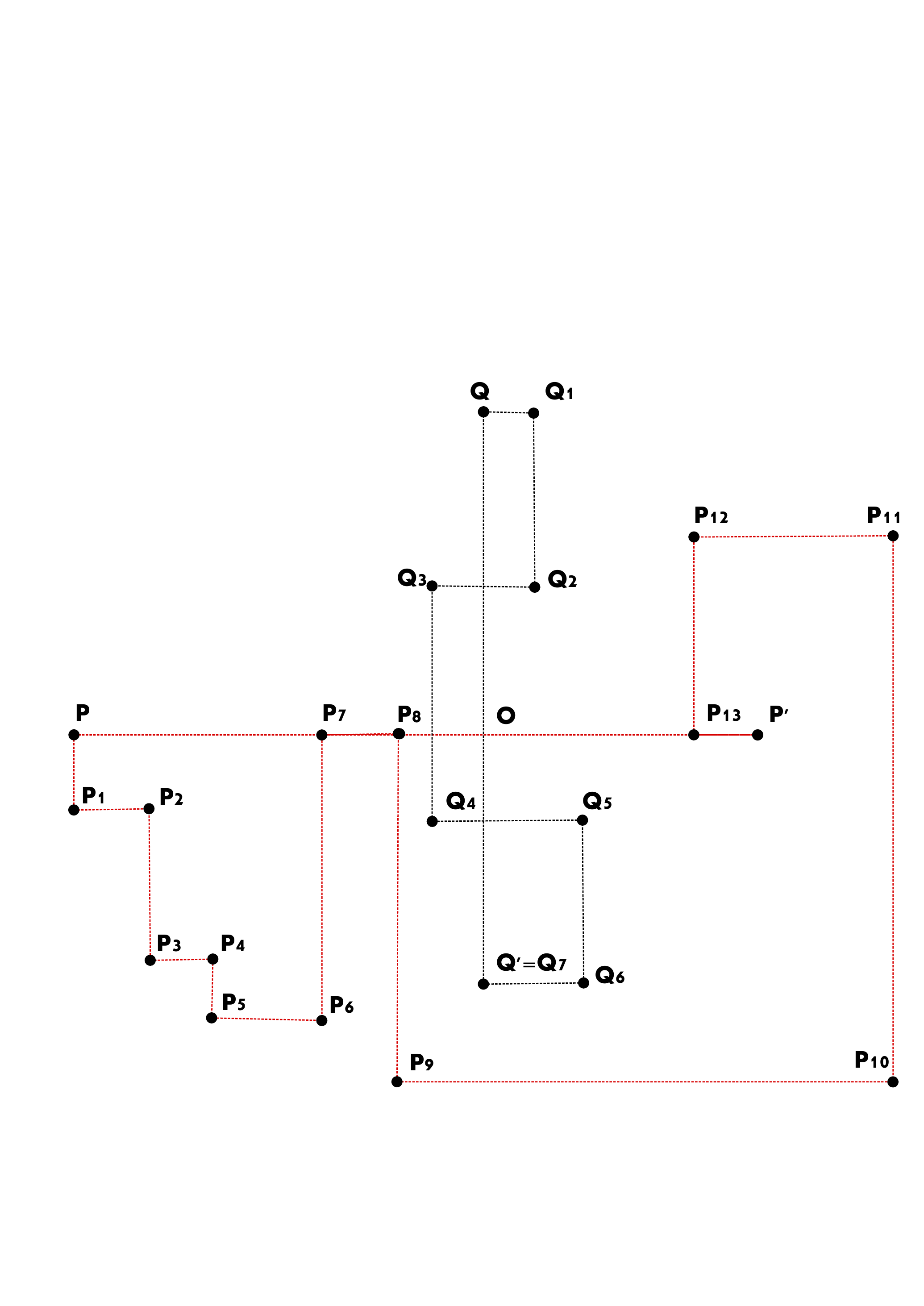}
\caption{Two intersecting snails - Lemma \ref{lem:cross lemma}}
\end{center}
\end{figure}
We can deform the graph (and by 'graph' we mean the picture of the graph, not a graph in the sense of graph theory yet) of $p$ slightly so that it remains a piecewise smooth graph (polygonal), such that the segment between $p$ and $p'$ remains in place, but the intersections in this graph are in a point, and there are not common segments to the deformed path and the $p$-$p'$ segment (e.g. if the path intersects the segment, lays on is for several edges and then cross to its other side - we leave one intersection point, and move slightly the rest of the curve. If after laying on the segment it does not cross, but returns to the previous side - we eliminate the intersection by moving that part of path slightly).
We do it for the other graph as well, and in a way that no new intersections are created or destroyed (between the two graphs). Note that $p, p',q,q'$ remain in place, and no deformation occurs in a neighborhood of $O$.
\begin{figure}[h]
\begin{center}
\includegraphics[width=0.7\textwidth]{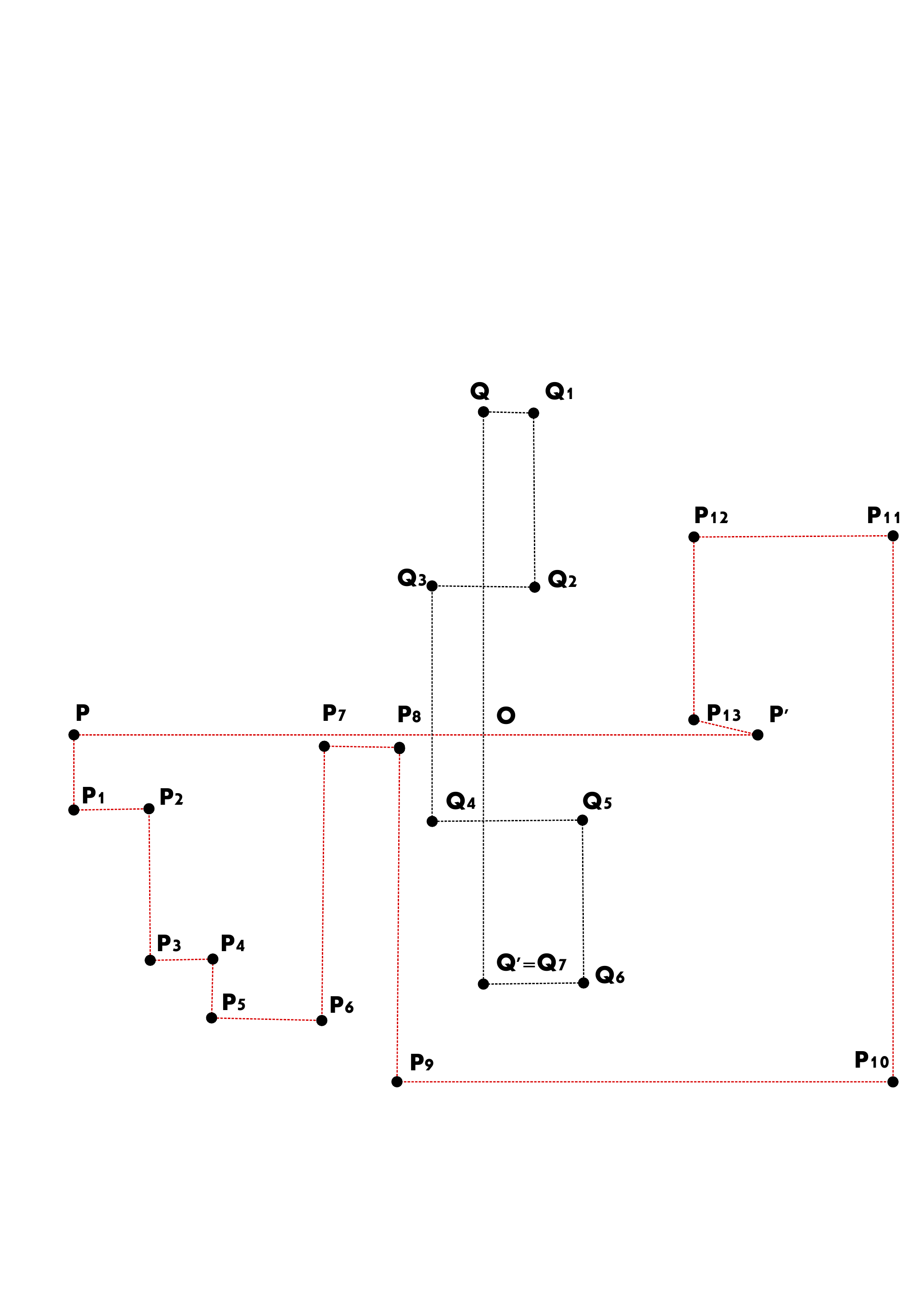}
\caption{Deformation of the intersection of figure 3 - Lemma \ref{lem:cross lemma}}
\end{center}
\end{figure}
Now we apply a graph theoretic argument. Consider the deformed graph of $p$, define self intersection points and points of intersection between the segment and the rest of the graph as vertices. Curves between them will stand for edges. This is a planar graph (with possibly multiple edges between vertices). It is easy to verify that every degree in the graph is either $2$ or $4$, and in particular - even. Hence it has an Euler-cycle. It is easy to see that such a cycle, in a planar graph can be decomposed as union of disjoint planar loops, such that none of which intersects itself, and mutual intersections are only in vertices. A similar argument can be applied on the deformed graph of $q$.

The total number of intersections of the two deformed graphs, is the sum over pairs of loops (one from each graph) of their intersections, as they have no common vertex. It is a well known fact that number of intersections of two transversal generic loops in the plain (e.g. polygonal closed paths, nowhere tangent to each other and whose intersection is made of a finite set of points, as in our case) is always an even number. Thus, their sum is even as well. But this is a contradiction to the fact that the only intersection point $O$ (and it is a generic intersection point). And the result follows.
\end{proof}
One may wonder about how the length of the path which connects $p$, a point on the path, and ${p_n}^{x_+}$ behaves. It is not difficult to verify that as a function of $n$, the expected (path) distance between an arbitrary vertex $p$ which is on the path and ${p_1}^{x_+}$, the closest intersection of the path with the positive $x$-ray from $p$ must be infinite, and hence the expected length of the path which connects $p$ and ${p_n}^{x_+}$ must be super linear. But can we say more?

A natural guess would be that if we consider the length of the path which connects $p$ and ${p_n}^{x_+}$, it will always be at least $cn^2$, for some positive $c$, for large enough $n\in\mathbb{N}$. A construction of B.Weiss shows that this is not the case (\cite{Weiss}).
Yet we prove that a weaker claim does hold, that infinitely many times, this inequality holds if we consider the $n^{th}$-snail. This is Theorem \ref{thm:lim inf snail is at least quadratic} that we state again below. It is followed by several conclusions, which show, for example, that if we add some more restrictions on the paths we can show that the path-distance between $p$ and ${p_n}^{x_+}$ is at least $cn^2$ infinitely many times.
\begin{theorem}\label{thm:lim inf snail is at least quadratic}
Let $\mu$ be a translation invariant measure of bi-infinite paths in the $\mathbb{Z}^{2}$-lattice. Let $P$ be a path in the support of the measure, and $p$ be a lattice point in the path. then there is a positive $c = c(p,P)$ such that for infinitely many values of $n\in \mathbb{N}$, the length of the $n^{th}$-snail of $p$ is at least $cn^{2}$.
\end{theorem}
\begin{proof}
Without loss of generality we assume that $\mu$ is ergodic, otherwise we apply ergodic decomposition.

For any positive $\varepsilon$, we define the set $A_{\varepsilon}$ as the set of points in $P$ such that for only finitely many values of $n\in\mathbb{N}$, the length of the $n^{th}$-snail is at least $\varepsilon n^{2}$. We define $A_0$ as
$\bigcap_{\varepsilon >0}A_\varepsilon$. The theorem would follow if we prove that $A_0$ is empty. Define $A_{\varepsilon , N}$ to be the set of points in $A_\varepsilon$ such that for any $n\geq N$, the length of their $n^{th}$-snail is not more than $2 \varepsilon n^{2}$. It is clear that $A_\varepsilon = \bigcup_{N\varepsilon\mathbb{N}}A_{\varepsilon,N} $, and that $A_{\varepsilon,N}$ is increasing in $N$. Hence, if we prove a uniform bound (in $N$) on the probability that the origin is in $A_{\varepsilon ,N}$, this bound will hold for $A_{\varepsilon}$ as well. If this bound tends to $0$ with $\varepsilon$, then the probability that the origin is in $A_0$ is $0$ as well. Thus, due to translation invariance, $A_0$ will be empty a.s. We now turn to prove the required bounds.
\begin{lemma}\label{lem:the density of A_eps}
Under the above definitions and conditions, the probability that the origin is in $A_{\varepsilon , N}$ (given it is on the path) is bounded by $8 \varepsilon$.
\end{lemma}
\begin{proof}
Denote by $\rho$ the probability that the origin is in $A_{\varepsilon , N}$. Due to the ergodic theorem, for large enough $M$ (that in particular we assume that is larger than $N$), in the $M\times M$ square around the origin there are at least $\rho M^{2}/2$ lattice points from $A_{\varepsilon , N}$. Since for any such two points, their $M$-crosses must intersect, it follows from Lemma \ref{lem:cross lemma}, that the path distance between these two points is no more than the sum of the lengths of their snails. As these two points are in $A_{\varepsilon , N}$, the sum of lengths of their snails is bounded by $4 \varepsilon M^{2}$.

Thus, the diameter of the connected subgraph of the path that contains all the points of $A_{\varepsilon , N}$ which are in the $M \times M$ square is at most $4 \varepsilon M^{2}$. But any connected subgraph of a path is a path, and its diameter is its length. Thus, we have found a path of length at most  $4 \varepsilon M^{2}$, which contains at least $\rho M^{2}/2$ vertices. Hence, $8 \varepsilon \geq \rho$
\end{proof}
And thus, the theorem is proved.
\end{proof}
The above theorem and its way of proof lead to some conclusions.
\begin{conclusion}
Under the above conditions, if we also assume that the measure is ergodic with respect to translation by every group element (not only ergodic with respect to the whole group's action), then there is a positive $c = c(\mu)$ such that for $\mu$-almost every path $P$, any point $p$ on the path belong to $A_c$.
\end{conclusion}
\begin{proof}
Let $B_\varepsilon$ be the complement of $A_\varepsilon$.
We define ${B_\varepsilon}^{x}$ as the set of points $p$ such that for infinitely many values of $n$ the minimal connected subpath that contains all the points $p,{p_1}^{x_+},..., {p_n}^{x_+},{p_1}^{x_-},...,{p_n}^{x_-}$ is of length at least $\varepsilon n^{2}$. Similarly one defines ${B_\varepsilon}^{y}$.
We have the following simple properties:
\\ 1. If $p$ is in $B_\varepsilon$ then either $p$ is in ${B_{\varepsilon /2}}^{x}$ or in ${B_{\varepsilon /2}}^{y}$.
\\ 2. If $p$ is in ${B_\varepsilon}^{x}$ or ${B_\varepsilon}^{y}$ then $p$ is in ${B_\varepsilon}$.
\\ 3. If $p$ is in ${B_\varepsilon}^{x}$ then any other (path) point with the same $y$ coordinate is also in ${B_\varepsilon}^{x}$. If $p$ is in ${B_\varepsilon}^{y}$ then any other point with the same $x$ coordinate belongs to ${B_\varepsilon}^{y}$.

Assume that the origin is in the path. Due to the above theorem, there is some $\varepsilon > 0$, for which the origin is in $B_\varepsilon$. Due to property 1, and without loss of generality the origin is in ${B_{\varepsilon /2}}^{x}$. Consider the event that the origin is in ${B_{\varepsilon /2}}^{x}$, it is invariant under translations of direction $x$. Thus, by ergodicity in each direction separately this is a $0-1$ event. But it has a positive probability (for some $\varepsilon$), and therefore it occurs a.s. Due to invariance to the other direction we see that a.s. any path point is in ${B_{\varepsilon /2}}^{x}$. And hence, due to property 2, a.s. any path point is in ${B_{\varepsilon /2}}$.
\end{proof}
\begin{conclusion}
Under the assumption of the previous conclusion, if in addition the distribution $\mu$ is invariant under rotations around the origin by angles which are integer multiples of $\pi /2$ then there exist a unique $\varepsilon = \varepsilon (\mu)$ such that for every point in the path, for any positive $\delta < \varepsilon$, for each $\alpha\in\{ x_+, x_-, y_+, y_- \}$ there are infinitely many values of $n$, such that the length of the minimal connected subpath which contains $p, {p_1}^{\alpha},...{p_n}^{\alpha}$ is at least $\delta n^{2}$, and such that no path point that property for any $\delta > \varepsilon$.
\end{conclusion}
The proof is similar to the above proof, and the rotation invariance guarantees that there is a uniform positive $\varepsilon$ to each of the four directions.

\begin{conclusion}
Assume $\mu$ is rotation-invariant, as in the above conclusion, and ergodic with respect to the group's action (and not necessarily to each direction separately), then almost surely, for any point $p$ in the path $P$ there exist positive ${\varepsilon}_x , {\varepsilon}_y$, which depend on the path and the point such that, in the notation of the first conclusion, $p$ belongs to
${B_{{\varepsilon}_x}}^{x} \bigcap {B_{{\varepsilon}_y}}^{y}$.
\end{conclusion}
\begin{proof}
$\mathbf{Sketch.}$ The event that there exist a point with no positive ${\varepsilon}_x$ for example, is translation invariant. Hence it must be a $0-1$ event. If it were an event of probability $1$, then the same would hold for points with no positive ${\varepsilon}_x$. As in the first conclusion, if $p$ has no positive $\varepsilon_x$ with the given property, then the same holds for any other point with the same $y$ coordinate as $p$. Thus, it is a property of lines in the lattice which are parallel to the $x$-axis. Thus, with probability $1$, there should be such a line. Similarly, there should be a line with a similar property which is parallel to the $x$ axis. Denote by $q$ their intersection. From property 1 in the first conclusion it follows that $q$ is not in any $B_\varepsilon$. But this contradicts Theorem \ref{thm:lim inf snail is at least quadratic}.
\end{proof}

\subsection{Translation Invariant Measures of Infinite Lattice Trees}\label{sec:infinite trees-inv subgraphs}
$~$\\
This subsection investigates measures of trees and forests which are subgraph of a given graph, usually the graph $\mathbb{Z}^{d}$. Most of the results in this subsection are rather simple, but allow us to imagine the geometric picture trees which arise in translation invariant contexts. The last theorem of the subsection is more interesting and also slightly more sophisticated.
We consider measures, invariant under the graph's group of symmetries (or a subgroup of it), of infinite trees or forests whose connected components are infinite trees. Some of the results hold for a general unimodular transitive graph, some are proved for amenable Cayley graph and some hold for Euclidean lattices (and in particular the planar square lattice).

\begin{claim}
Let $G$ be a connected Cayley graph of an amenable group. Let $\mu$ be a measure on forests which are unions of infinite trees in $G$. Assume that $\mu$ is invariant under the groups's action. Then in almost every forest which is in the support of the measure, any tree component is either single-infinite or bi-infinite.
\end{claim}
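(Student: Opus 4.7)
My plan is a Burton--Keane style density argument, which is tailor-made for the amenable setting. Suppose for contradiction that with positive $\mu$-probability some tree component in the forest is multi-infinite. By Remark \ref{rmk:rmk forests}, such a tree has at least one encounter point, and by the assumed invariance of $\mu$ under the group action there is consequently a constant $\rho > 0$ such that, for every vertex $v$, the probability that $v$ is an encounter point equals $\rho$.

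The key input is the classical Burton--Keane counting lemma: there is a constant $C$, depending only on the maximal degree of $G$, such that for every finite $F \subset V(G)$ and every forest configuration $\omega$, the number $k(\omega,F)$ of encounter points contained in $F$ satisfies $k(\omega,F) \leq C |\partial F|$. I would prove it as a deterministic combinatorial statement. For each encounter point $v \in F$ one selects, in a measurable way (e.g.\ lexicographically with respect to the fixed Cayley generators), three pairwise disjoint one-way infinite rays in the tree-component of $v$; each of these rays must cross $\partial F$. One then forms the finite auxiliary forest whose vertices are the encounter points inside $F$ together with the first exit-vertices of the chosen rays on $\partial F$, and whose edges are the corresponding tree-paths between consecutive such vertices. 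In this auxiliary forest every encounter point has degree at least three while every leaf lies on $\partial F$, and the elementary handshake inequality $\sum_v \deg(v) = 2(\#V - \#\text{components})$ for a forest yields $k \leq C|\partial F|$.

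To conclude, let $\{F_n\}$ be a F\o lner sequence for the amenable group, so $|\partial F_n|/|F_n| \to 0$. Taking expectations and using translation invariance,
\[
\rho\,|F_n| \;=\; \mathbb{E}_\mu\!\left[\,\#\{v \in F_n : v \text{ is an encounter point}\}\,\right] \;\leq\; C\,|\partial F_n|,
\]
hence $\rho \leq C |\partial F_n|/|F_n| \to 0$, contradicting $\rho > 0$.

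The main obstacle is executing the Burton--Keane deterministic lemma cleanly: one must ensure that the selection of the three rays per encounter point is genuinely measurable and $\Gamma$-equivariant (so that the left-hand side above is indeed $\rho |F_n|$), and that the exits of rays from distinct encounter points charge essentially distinct portions of $\partial F_n$, so the upper bound scales like $|\partial F_n|$ rather than like $k(\omega,F_n)$ itself. Once this is pinned down, amenability does the rest. Alternatively, since the graph is unimodular, one may substitute for the F\o lner estimate a Mass Transport argument using \eqref{eq:MT}, transporting unit mass from each encounter point outward along its three rays to a distinguished boundary vertex; but in the amenable setting the F\o lner approach is the most direct.
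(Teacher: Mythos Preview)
Your proof is correct and follows essentially the same Burton--Keane argument as the paper: encounter points have positive density by invariance, but their count in any finite set is bounded by the size of the boundary, which contradicts amenability via a F\o lner sequence. Your worry about measurable, equivariant ray selection is a red herring---the identity $\mathbb{E}_\mu[\#\{\text{encounter points in }F_n\}] = \rho|F_n|$ needs only linearity of expectation and invariance of the event ``$v$ is an encounter point'', while the Burton--Keane upper bound $k(\omega,F)\leq C|\partial F|$ is a purely deterministic statement about each fixed configuration $\omega$, so no measurable selection enters at all.
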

\begin{proof}
By ergodic decomposition we may assume that $\mu$ is ergodic.
Assume there are some multi infinite components, in that case there are encounter points as well.
Let $F$ be any subgraph of the graph $G$. By the last part of Remark \ref{rmk:rmk forests}, $\mu$-a.s the number of encounter points inside $F$ cannot be larger than the number of boundary edges of $F$ (this is actually always true, not just $\mu-$a.s). Thus, the $\mu$-expectation of the number of encounter points inside $F$ is not more than the number of boundary edges. On the other hand, if we assume that the probability for having encounter points is positive, then
there exists a positive number $\rho$ such that an arbitrary vertex $v$ is an encounter point with probability $\rho$. The linearity of expectation yields that the expected number of encounter points in $F$ is $\rho |F|$. Thus $| \partial F| \geq \rho |F|$, where $\partial F$ is the set of boundary edges of $F$.
Let $\{ F_n \}$ be a F$\phi$lner sequence. The above argument shows that $| \partial F_n| / |F_n| \geq \rho$. But this contradict the definition of F$\phi$lner sequences, where the LHS of the inequality must tend to $0$.
\end{proof}
We turn examine the shape of single-infinite trees (or forests made of infinite trees), that are in the support of an invariant measure (under the group actions). Our assumption on the full graph are that it is a (unimodular) Cayley graph. Our main tool will be the Mass-Transport principle. 

\begin{claim}\label{clm:trees MT}
Let $G$ be a unimodular transitive graph (a Cayley graph of a discrete group). Let $\mu$ be a measure, invariant under the automorphism group of the graph (the group's actions), of forests made of single-infinite trees. Let $f:\mathbb{N}\rightarrow \mathbb{R}$ be any nonnegative valued function. Then
\begin{equation}
\forall x\in V~ \sum_{k\in\mathbb{N}}f(k) = \mathbb{E}_\mu[\sum_{y\in R(x)\backslash{x}}f(\delta(y,x))]
\end{equation}
In particular, one side is infinite iff the second side is.
\end{claim}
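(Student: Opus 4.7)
The plan is to apply the Mass-Transport principle \eqref{eq:MT} with a carefully chosen mass function, and to read off the claimed identity by computing its two sides separately. I would define
$$m(\omega; x, y) := f(\delta(x,y)) \cdot \ind\{y \in R(x) \setminus \{x\}\},$$
where $R(x)$ is understood with respect to the single-infinite tree component of $\omega$ containing $x$ (and the indicator is $0$ if $y$ lies in a different component). Because the forest is composed $\mu$-a.s.\ of single-infinite trees, the stem $S(v)$ and the roots $R(v)$ are canonically determined by $\omega$, so any automorphism $\gamma \in \Gamma$ sends $R(x)$ (with respect to $\omega$) to $R(\gamma x)$ (with respect to $\gamma \omega$) and preserves the tree metric $\delta$. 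This gives the diagonal invariance $m(\omega; x, y) = m(\gamma \omega; \gamma x, \gamma y)$ required to invoke MT.

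With $M(x,y) = \mathbb{E}_\mu m(\omega; x, y)$, the right-hand side of \eqref{eq:MT} is by definition
$$\sum_{y \in V} M(x,y) = \mathbb{E}_\mu \Big[ \sum_{y \in R(x) \setminus \{x\}} f(\delta(y,x)) \Big],$$
which matches the right-hand side of the target identity. For the left-hand side $\sum_{y \in V} M(y,x)$ I would invoke the duality recorded in the definition of stems and roots: $u \in R(v) \setminus \{v\}$ is equivalent to $v \in S(u) \setminus \{u\}$. Applied with the roles of $x$ and $y$ swapped, this gives
$$\sum_{y \in V} m(\omega; y, x) = \sum_{y \in S(x) \setminus \{x\}} f(\delta(y,x)) = \sum_{k = 1}^{\infty} f(k),$$
where the last equality uses that $S(x)$ is a one-way infinite simple path starting at $x$, so its vertices meet each distance $k = 1, 2, 3, \dots$ from $x$ exactly once. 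This sum is deterministic, hence taking $\mathbb{E}_\mu$ leaves $\sum_{k \in \mathbb{N}} f(k)$, which is the left-hand side of the target identity.

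Equating the two sides via \eqref{eq:MT} yields the claim, and the finite/infinite dichotomy asserted in the last sentence is the corresponding clause of the MT principle (valid since $f \geq 0$, so monotone convergence applies to both sides). The only genuine point to verify is the $\Gamma$-equivariance of $m$; this is the main "obstacle" but is essentially immediate, because stems and roots are intrinsic to the tree structure of $\omega$ and thus automatically transported by automorphisms. No other combinatorial input is needed.
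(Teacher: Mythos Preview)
Your proof is correct and is essentially the paper's own argument: the paper also applies the Mass-Transport principle with the mass function $m(\omega;x,y)=f(\delta_\omega(x,y))\,\mathbf{1}_{x\in S(y)}$, which by the stem/root duality you invoke is the same indicator as your $\mathbf{1}_{y\in R(x)\setminus\{x\}}$ (up to the trivial diagonal term). Your write-up is in fact more detailed than the paper's, which simply names the mass function and declares the claim a ``direct conclusion''; the only cosmetic slip is that you label $\sum_y M(x,y)$ and $\sum_y M(y,x)$ as the right- and left-hand sides of \eqref{eq:MT} when they are the other way around.
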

\begin{proof}
It is a direct conclusion from Equation \ref{eq:MT} if we define the mass function
\\ $m(\omega;x,y)=f(\delta_\omega(x,y))\times\mathbf{1}_{x\in {S(y)}}$
\\ where $\delta_\omega(\cdot,\cdot)$ is the tree distance between $x,y$ in configuration $\omega$, and it is defined to be $0$ if $x,y$ are not in the same tree (or not in a tree), and $\mathbf{1}_{x\in {S(y)}}$ is the characteristic function of the event that $x$ is in the stem of $y$ (in configuration $\omega$).
\end{proof}
We give some conclusions from the above claim. In all the conclusions we assume the assumptions of Claim \ref{clm:trees MT}.

\begin{conclusion}\label{con:infinite functions of root}
Let $f:\mathbb{N}\rightarrow \mathbb{R}$ be such that $\sum_{k\in\mathbb{N}}f(k)=\infty$, then
\\ $\mathbb{E}_\mu [\sum_{y\in R(x)\backslash{x}}f(\delta(y,x))]=\infty$
\end{conclusion}

\begin{conclusion}\label{con:infinite expected root}
For an arbitrary vertex $v$ the $\mu$-expected size of $R(v)$ is infinite.
\end{conclusion}
\begin{proof}
This follows immediately from the above conclusion, just by taking the constant function $f\equiv 1$, but we give an equivalent proof, in order to improve the intuition for the scenario.

If each vertex of the forest sends one unit of mass to each vertex in its stem, since any vertex in a tree sends an infinite amount, and any vertex has a positive probability to be in the tree, it follows that the expected total mass that each vertex in the forest sends is infinite.
Thus, the MT principle tells us that any vertex receives on average an infinite amount of mass. But any vertex not in the forest receive nothing, and a vertex $v$ in the forest receives exactly the size of $R(v)\backslash{v}$.
\end{proof}

\begin{conclusion}\label{con:one expected root element of dist n}
Let $v$ be a vertex in the graph. The $\mu$-expected value of the number of vertices in $R(v)$ (conditioned on $v$ being in the forest), with tree-distance $n$, is $1$.
\end{conclusion}
\begin{proof}
This follows immediately from Claim \ref{clm:trees MT}. If each vertex of the forest sends one unit of mass to the (unique) vertex in its stem that is in distance $n$ from it. Any vertex of the forest sends exactly one unit of mass. The MT principle tells us that a vertex, conditioned on being in the forest, receives an expected number of one unit of mass. This implies the claim.
\end{proof}

\begin{conclusion}\label{con:lower bounds for large root in amenable graphs}
Let $\alpha^{v}(n)$ be the number of vertices in a ball of radius $n$ in the graph $G$ around a given vertex $v$, i.e. the number of vertices in the graph, whose shortest path to $v$ is of length at most $n$. Let $\beta^{v}(n)$ be the size of the sphere of radius $n$ around $v$ in the graph, i.e. the set of vertices in the graph, whose shortest path to $v$ is of length exactly $n$. Then the $\mu$-probability that a given vertex $v$, conditioned on being in the forest, has at least one vertex in $R(v)$ of tree-distance $n$ is not less than $1/\alpha^{v}(n)$. The $\mu$-probability that a given vertex $v$, conditioned on being in the forest, has at least one vertex in $R(v)$ of graph-distance $n$ is not less than $1/\beta^{v}(n)$.
\end{conclusion}
\begin{rmk}
Note that in $R(v)$ there is a vertex whose graph (tree) distance from $v$ is at least $n$, is exactly the probability there is a vertex in $R(v)$ whose graph (tree) distance from $v$ is exactly $n$, since in the path between $v$ and $u$ which is of length more than $n$, there is at least one vertex whose distance from $v$ is $n$.
\end{rmk}
\begin{proof}
First note that since we always assume that $G$ is transitive $\alpha^{v}(n)$ does not depend on the vertex $v$, and hence we may denote it simply by $\alpha(n)$. Similarly to $\beta^{v}(n)$.
Second observe that for any tree $T$ which is a subgraph of a graph $G$, the tree-distance of two tree vertices is at least the graph distance, i.e. $\delta_T(x,y)\geq d(x,y)$. Take a vertex $v$, and consider the ball of radius $n$ around it (in $G$). denote by $\alpha$ the number of the vertices in $B$. Any vertex in $R(v)$, which has a tree distance exactly $n$ from $v$, its graph distance from $v$ is at most $n$, and hence it belongs to $B$. Thus, the number of elements of $R(v)$ whose tree distance from $v$ is exactly $n$ is (nonnegative, and) bounded from above by $\alpha$ and by Conclusion \ref{con:one expected root element of dist n} has expectation $1$. Now the Markov inequality guarantees that the probability there are vertices in $R(v)$ of tree-distance exactly $n$ is at least $1/\alpha$.

We would like to achieve the corresponding result for graph-distances. For this first consider the following mass function: Each vertex in the tree sends one mass unit to each vertex in its stem whose graph distance from it is exactly $n$. It is clear that each vertex in the tree sends at least one mass unit, and hence, in average, according to Claim \ref{clm:trees MT} receives at least $1$ unit of mass. This time we consider $\beta$, which is the number of vertices of the $n-$sphere centered in some given vertex $v$. Again $\beta$ bounds the number of vertices in $R(v)$ of graph distance $n$ from $v$, and again Markov's inequality shows that the probability that there is at least one vertex in $R(v)$ of graph-distance $n$ from $v$ is at least $1/\beta$.
\end{proof}

\begin{conclusion}\label{con:lower bounds for large root in lattices}
If the graph $G$ is the $\mathbb{Z}^d$-lattice, then for a vertex $v$ in the forest the probability that there is some element of $R(v)$ with graph distance at least $n$ is $\Omega(n^{1-d})$. In addition, the probability that there is an element of $R(v)$ with tree-distance at least $n$ is $\Omega(n^{-d})$
\end{conclusion}
\begin{proof}
The ball (box) of radius $n$ in the $\mathbb{Z}^d$-lattice has $\Theta(n^{d})$ vertices and its boundary has $\Theta(n^{d-1})$ vertices. Using Conclusion \ref{con:lower bounds for large root in amenable graphs} we get the result immediately.
\end{proof}

\begin{rmk}
Most of the above can be established in a similar way to a bi-infinite tree. In a bi-infinite tree $T$, there is a unique path between any vertex and $P(T)$. Thus, even though we do not have a canonic "forward" direction, as in the single-infinite case, we do have two "forward" directions, that may have a finite path in common. Indeed, we can move "forward" in the path which connects it to $P(T)$, after we reach $P(T)$ we have two legal directions towards infinity. Thus, one can construct mass functions as we did for this case as well, such that mass is again sent only forward.
\end{rmk}

Our next results handle densities of bi-infinite and single-infinite trees which are in the support of some invariant measure. As before, due to ergodic decomposition we may assume the measure is ergodic. The main question we are interested in is the following - if we consider a large ball in the graph, and count edges which lay on simple paths in the tree which connect two of the ball's boundary points, how many such edges are there? We show that in the double-infinite case the number is proportional to the volume of the ball. In the single infinite it is at least a fixed portion of the boundary (we consider only lattices for simplicity). The surprising result is that for the two dimensional lattice we show a superlinear bound. A generalization of this fact plays an important role in our study of spin systems. All these results can be easily generalized to forest of infinite trees as well.

\begin{claim}\label{clm:bi-infinite have densities}
Let $\mu$ be an ergodic measure of bi-infinite trees on a connected Cayley graph of
a group, invariant under the group actions. Then there exist a positive constant $\rho$, such that for any vertex $v$, and any finite subgraph $F$ the following holds:

Let $S$ be the $\mu$-expected number of edges of the tree which belong to $F$ and lay on a simple path in the tree which connects two boundary points of $F$, and $E$ is the number of all edges of $F$, then $S \geq \rho E$.
\end{claim}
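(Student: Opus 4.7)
The plan is to show that the unique bi-infinite path $P(T)$ inside each bi-infinite tree (guaranteed by Remark \ref{rmk:rmk forests}) already supplies enough good edges. The observation driving this is that since $F$ is finite while $P(T)$ is bi-infinite, whenever an edge $e$ belongs to $P(T)\cap E(F)$ the connected component of $P(T)\cap F$ containing $e$ is a finite sub-path of $P(T)$, whose two endpoints $p,q$ are the places where $P(T)$ first exits $F$ on either side of $e$; these endpoints lie on $\partial F$, and the $P(T)$-segment from $p$ to $q$ is a simple path in the tree connecting two boundary vertices of $F$ and containing $e$. Hence every such $e$ is counted in $S$, and
\[
S \;\geq\; \mathbb{E}_\mu \bigl|E(F)\cap P(T)\bigr| \;=\; \sum_{e\in E(F)} \mu\bigl(e\in P(T)\bigr).
\]

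First I would apply ergodic decomposition so that $\mu$ may be assumed ergodic. For each edge $e$ set $\rho_e:=\mu(e\in P(T))$, where $P(T)$ refers to the component containing $e$ (and $\rho_e=0$ if $e$ is not in the forest). By invariance under the group action $\rho_e$ depends only on the orbit of $e$ under the Cayley group, and since the graph has only finitely many edge orbits (one per generator), the quantity $\rho:=\min_e \rho_e$ is well defined. Once $\rho>0$ is secured, the bound $S\geq \rho E$ is immediate from the display above.

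The crucial analytic step is therefore to prove $\rho>0$. I would begin by showing that the vertex density $q:=\mu(v\in P(T))$ is positive: otherwise countable additivity together with translation invariance would give $\mu(P(T)=\varnothing)=1$, contradicting the bi-infiniteness hypothesis. To upgrade this to positivity of $\rho_e$ on each edge orbit, I would apply the Mass-Transport principle \eqref{eq:MT} to the mass function in which every $v\in P(T)$ sends one unit to each of its two $P(T)$-neighbors; by MT the expected outgoing mass $2q$ at a fixed vertex equals a non-negative linear combination of the orbit densities $\rho_O$, and an ergodic/symmetry argument (most transparent when the automorphism group acts edge-transitively, as for the Euclidean lattices that are the main later application) rules out any single orbit contributing zero.

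The main obstacle I expect is precisely this uniform positivity across every orbit: the bi-infinite path could in principle be supported on a strict sub-family of generator-types, in which case one must invoke additional symmetry of $\mu$ (or of the graph) to exclude it. Everything else is a clean linearity-of-expectation computation on top of the geometric observation that $P(T)\cap F$ is a finite union of sub-paths with endpoints on $\partial F$.
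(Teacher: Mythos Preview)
Your approach coincides with the paper's brief sketch: both observe that every edge of $P(T)$ lying in $F$ automatically sits on a simple tree-path between two boundary points of $F$ and hence is counted by $S$, and then take $\rho$ to be (the minimum over edge-orbits of) the probability that a given edge lies in $P(T)$. The paper simply asserts that ``any edge has a positive probability to be in $P(T)$'' without addressing the edge-orbit subtlety you correctly flag as the only non-trivial step; your caution there is warranted, and in the paper's later applications it is the additional rotation-invariance that handles it.
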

\begin{proof}
$\mathbf{Sketch.~}$The idea is that any edge in $P(T)$, the path of the tree, has the above property. And any edge has a positive probability to be in $P(T)$. Thus the expected number of edges that both of whose vertices are in $F$ and belong to $P(T)$ is a fixed, positive portion of the size of $E$.
\end{proof}
\begin{rmk}
If we assume that the above Cayley graph is also amenable, an ergodic theorem due to Lindenstrauss (\cite{pointwise_erg}) guarantees that by replacing the single set $F$ with a F$\phi$lner sequence, we get that the above conclusion holds not only in expected value but also a.s. for some subsequence, i.e. there is an infinite subsequence of the given F$\phi$lner sequence such that for almost every set in the subsequence the number of edges which lay on simple paths that connect two boundary edges is at least a $\rho-$portion of all the edges in that set.
\end{rmk}
Our last result is the main theorem of this subsection. A variant of this theorem will appear later as one of the key stones in the proof that in the ground-state of a planar Ising spin-glass system, there is no infinite component made of edges which are all unsatisfied. Note that we have stated a partial version of this theorem in Subsection \ref{sec:Main Results-inv subgraphs}, we now state and prove a slightly more general theorem.
For simplicity the next result is formulated for the lattice $\mathbb{Z}^{d}$, and for measures which are also invariant under rotations (of multiples of $\pi/2$ in any integer axis). Yet, it can be be extended to larger families of graphs, and the rotations-invariance is, in fact, not necessary.
\begin{theorem}\label{thm:single-infinite at least boundary sized or nlogn}
Let $\mu$ be an ergodic , translation-invariant and rotation-invariant (as described above) measure of single-infinite trees on the lattice $\mathbb{Z}^{d}$. Then there exist a positive constant $\rho$, such that for any vertex $v$, and an increasing sequence of boxes, $\{B_n\}_{n=1}^{\infty}$, centered at $v$, the following holds:

Let $e_n$ be the number of edges of the tree which lay in the $n^{th}$ box and lay on a simple path in the tree which connects two boundary points of the box, then for all large enough $n$, $\mathbb{E}_\mu e_n > \rho n$.

Moreover, for $d=2$, we have $\mathbb{E}_\mu e_n > \rho nlog(n)$.
\end{theorem}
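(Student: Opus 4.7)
My plan is to reformulate $e_n$ using the natural stem-orientation of a single-infinite tree, and then to apply Conclusion \ref{con:lower bounds for large root in lattices} pointwise together with rotation invariance. Since $T$ is single-infinite, each tree vertex $v$ has a unique neighbor $p(v)$ in $S(v)\setminus\{v\}$, and this orients every tree edge toward infinity. The assignment $v\mapsto e(v):=\{v,p(v)\}$ is then a bijection between tree vertices and tree edges; removing $e(v)$ splits $T$ into $R(v)$ and a complementary set that contains the infinite ray $S(p(v))$, and the latter automatically reaches $\partial B_n$ once $e(v)\subset B_n$. So I will write
\[
e_n\ =\ \#\bigl\{v\in B_n\ :\ p(v)\in B_n,\ R(v)\cap\partial B_n\neq\emptyset\bigr\}.
\]

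Next I will estimate the probability of the second event. Fix $v$ in the interior of $B_n$, let $k=k(v)$ be its $L^\infty$ distance to $\partial B_n$, and let $\mathbf{e}$ be a coordinate unit vector pointing toward the nearest face. If $R(v)$ contains a vertex $u$ with $(u-v)\cdot\mathbf{e}>k$, then $u$ lies outside $B_n$; since $R(v)$ is connected, the tree path from $v$ to $u$ must then cross $\partial B_n$. Conclusion \ref{con:lower bounds for large root in lattices} gives, conditionally on $v$ being in the tree, a vertex of $R(v)$ at graph distance at least $dk$ from $v$ with probability $\Omega(k^{1-d})$. Because $L^\infty\ge L^1/d$ this forces an $L^\infty$-displacement of at least $k$, and rotation invariance then distributes this displacement uniformly among the $2d$ signed coordinate directions, so the probability of the directional event above is at least a constant multiple of $k^{1-d}$. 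The density $p_T:=\mathbb{P}_\mu(v\in T)$ is constant in $v$ by translation invariance and strictly positive (otherwise the tree would be a.s.\ empty), so
\[
\mathbb{P}_\mu\bigl(v\in T,\ R(v)\cap\partial B_n\neq\emptyset\bigr)\ \ge\ c\,p_T\,k(v)^{1-d}.
\]

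Summing over interior $v$ and grouping by $k$: the number of lattice vertices of $B_n$ at $L^\infty$ distance exactly $k$ from $\partial B_n$ is $\Theta(n^{d-1})$ for $1\le k\le n/4$, so
\[
\mathbb{E}_\mu e_n\ \ge\ c\,p_T\sum_{k=1}^{n/4}\Theta(n^{d-1})\cdot k^{1-d}.
\]
For any $d\ge 2$ the $k=1$ term alone contributes $\Omega(n^{d-1})\ge\Omega(n)$, proving the main inequality. For $d=2$ the sum is harmonic, $n\sum_{k=1}^{n/4}k^{-1}$, which produces the sharper bound $\Omega(n\log n)$.

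The main obstacle I foresee is in the second step: Conclusion \ref{con:lower bounds for large root in lattices} supplies only an undirected tail bound on graph distance, so to conclude that $R(v)$ crosses a specific face of $B_n$ one must transfer this information into a directional statement. Rotation invariance is what makes the transfer clean, by spreading the mass equally over the $2d$ coordinate directions. Without it one can still extract a face-blind bound $\Omega(n^{1-d})$ for every interior vertex (using the worst-case directional reach), which suffices for the $\Omega(n)$ conclusion but loses the harmonic improvement that gives $\Omega(n\log n)$ when $d=2$.
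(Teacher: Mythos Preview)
Your proposal is correct and follows essentially the same strategy as the paper's proof. Both arguments hinge on the same key observation: a tree vertex $v$ inside the box lies on a boundary-to-boundary simple path if and only if $R(v)$ meets $\partial B_n$ (since $S(v)$ always exits the box), and both then combine a mass-transport estimate with rotation invariance to obtain the directional bound $\Omega(k^{1-d})$ before summing over $L^\infty$-shells to produce the harmonic series for $d=2$. The only cosmetic difference is that the paper sets up a fresh mass function with the $L^\infty$ distance directly, whereas you cite Conclusion~\ref{con:lower bounds for large root in lattices} for the $L^1$ graph distance and convert via $L^\infty\ge L^1/d$; this costs only an immaterial constant.
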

We prove the result for the case $d=2$ and the case $d > 2$ together. Yet there is a simpler proof for the latter case (which is also less interesting).
\begin{proof}
\begin{figure}[h]
\begin{center}
\includegraphics[width=0.7\textwidth]{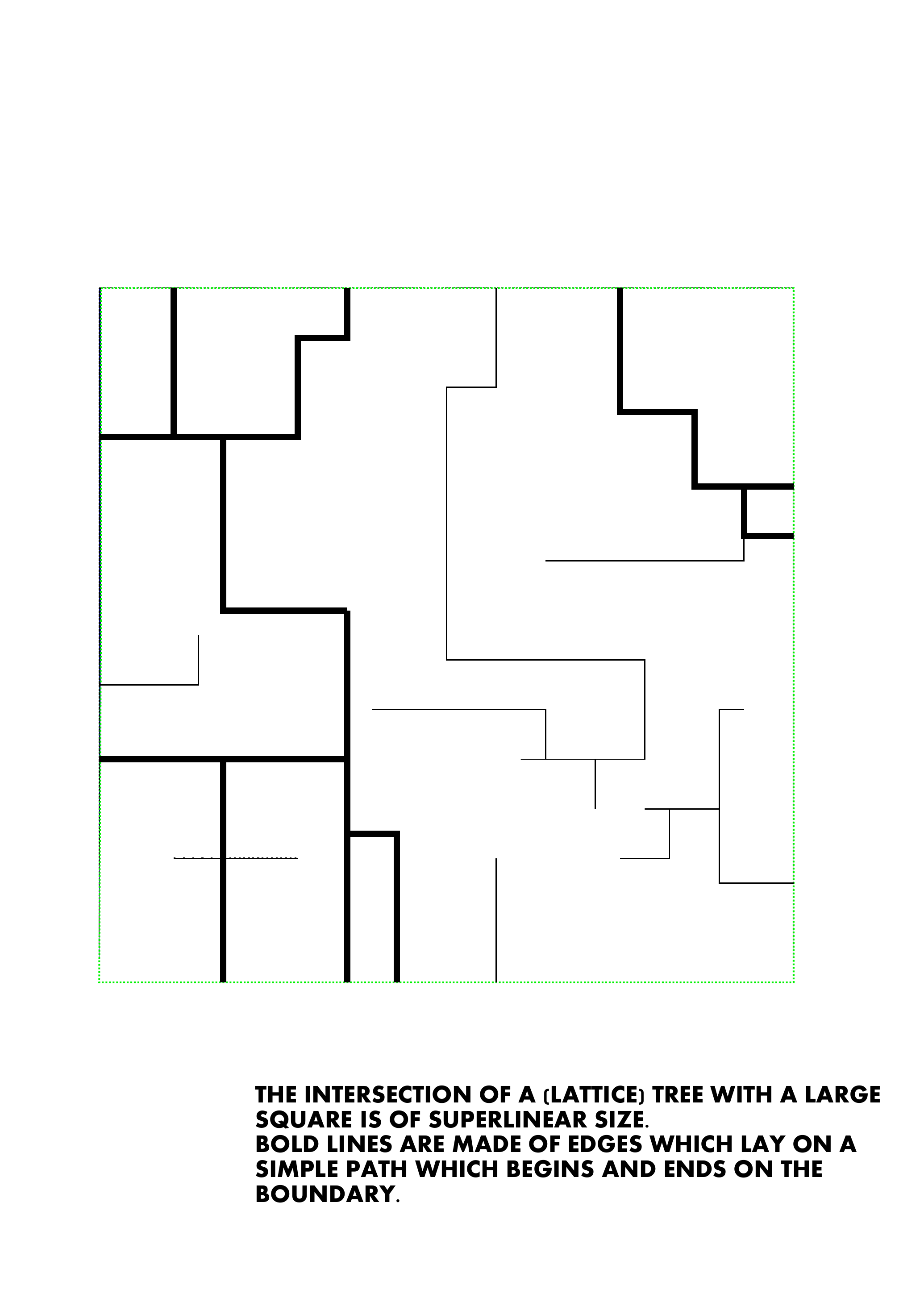}
\caption{The intersection of an infinite lattice tree with a large square - Theorem \ref{thm:single-infinite at least boundary sized or nlogn}}
\end{center}
\end{figure}
First we observe that similarly to Conclusion \ref{con:lower bounds for large root in lattices} we can show that with probability of $\Omega(n^{1-d})$ for a vertex $v$ in the single-infinite tree there is a vertex $u\in R(v)$ with $\| u-v \|_\infty = n$, this is done in the standard way, using the mass function $m(u,v)$ which is always $0$ unless $v\in S(u)$ and $\| u-v \|_\infty = n$, and then it is $1$. Thus, there is a positive constant $c_1$, such that for a given vertex $v$ and $n\in\mathbb{N}$, the probability there is another vertex $u\in R(v)$ such that $\| u-v \|_\infty = n$ is at least $c_1 n^{1-d}$.
 for any $n$ with probability at least $c_1 n^{1-d}$ for a given vertex $v$ there is another vertex $u\in R(v)$ such that $\| u-v \|_\infty = n$. In particular, due to the rotation-invariance, if we denote by $c := c_1/2d$, then for any $n$ with probability at least $c_1 n^{1-d}$ for a given vertex $v$ there is another vertex $u\in R(v)$ such that $\| u-v \|_\infty = n$, and that $u^{1}-v^{1}=n$, where $x^{i}$ is the $i^{th}$ coordinate of $x$ (and of course the same probability bound holds for any $u^{i}-v^{i}=\pm n$).
There is a constant $b>0$ such that for any $n$ the number of vertices in the boundary of a box of $l_\infty$ norm $n$ is at least $bn^{d-1}$.

Every vertex $v$ from the tree, which lays in a given box, is a part of a simple path (contained in the tree) which connects two points on its boundary iff there is another vertex from $R(v)$ which is in that boundary. The reason is that there is always a "half-path" from $v$ to the boundary - $S(v)$ (the part from $v$ until first intersection with the boundary of the box).

In addition, the number of tree edges which lay on simple paths which connect to boundary edges is clearly proportional to the number of vertices with the same property (as the degree in the graph is bounded by $2d$). Thus, it is enough to give a lower estimate for the expected number of vertices with that property. Denote by $p$ the probability a given vertex lays in $T$ (of course, $p$ does not depend on the vertex, it is an a.s. constant which depends only on $\mu$ and is the same for any point and any configuration in the support of $\mu$). Then the probability a given vertex lays in the tree, and in its roots set there is a vertex at $l_\infty$ distance $t$ from it, in a given direction is at least
$pc n^{1-d}$.

Consider a fixed vertex $v$, denote by $S_t$ the set of points with $l_\infty$ distance $t$ from $v$, $B_t$ be the box of vertices whose $l_\infty$ distance from $v$ is at most $t$. Combining all of the above gives that the expected number of tree vertices in a box, which lay on simple paths (simple paths which are contained in $T$) and connect two of $B_n$'s boundary points is at least
\begin{equation}
pbc\sum_{1\leq k \leq n-1}(n-k)^{d-1}/{k^{d-1}} > (n/2)^{d-1}\sum_{1\leq k \leq \lfloor n/2 \rfloor}k^{1-d}
\end{equation}
For $d=2$ this gives an $\Omega(nlog(n))$ bound. For higher $d$ is gives the expected linear bound. As required.
\end{proof}

\newpage
\section{Spin Glass Models}\label{sec:spinglasses}
$~$\\


\subsection{Background}\label{sec:Background-spinglass}
$~$\\
The understanding of spin glass models in large or infinite graphs has been the subject of many studies in physics, mathematics and neuroscience. Questions regarding the multiplicity of ground states in finite dimensional short-ranged systems, such as the Edwards-Anderson (EA) Ising spin glass, and in particular the 2D case were the subjects of several researches and simulations (e.g. \cite{EA}, \cite{Newman_Stein_plain}, \cite{Newman00natureof}, \cite{Newman_Stein_halfplain},\cite{PhysRevLett.58.57},\cite{0022-3719-17-18-010},\cite{PhysRevLett.56.1601},\cite{PhysRevLett.83.5126},\cite{PhysRevB.46.973}, \cite{HARTMANN}). Even less is known about the geometry of the ground states.

\subsection{Preliminaries}\label{sec:Prelim-spinglass}
$~$\\
This part of the thesis concerns the EA Ising spin glass with the Hamiltonian
\begin{equation}
\mathcal{H}_\mathcal{J}(\sigma) = -\sum_{<x,y>}J_{xy}\sigma_x\sigma_y
\end{equation}
Where $\mathcal{J}$ denotes a specific realization of the couplings $J_{xy}$ and $\sigma$ is a spin configuration, i.e. for any vertex $x$ in the graph, $\sigma_x = \pm 1$. The sum is taken over nearest neighbor pairs $<x,y>$ in a given graph G. We confine ourselves to the case where $J_{xy}$ are independently chosen from a mean zero Gaussian or any other symmetric, continuous distribution supported by the real line. The overall disorder measure is denoted by $\nu(\mathcal{J})$.
A $\emph{Configuration}$ for some realization $\mathcal{J}$ is a choice of spin to each site. Although the Hamiltonian might not be defined, due to divergence, we can still compare the "Energy difference" (defined as the Hamiltonian's difference) between two configurations which differ by a finite amount of spins. A $\emph{Ground}~\emph{State}$ is a configuration whose energy cannot be lowered by flipping any finite subset of spins. That is, all ground state spin configurations must satisfy the constraint
\begin{equation}
-\sum_{<x,y>\in\mathcal{{\partial C}}}J_{xy}\sigma_x\sigma_y \leq 0
\end{equation}
for any closed finite subset of vertices $\mathcal{C}$. One should note that if we restrict our attention to a planar graph, ${\partial C}$ is actually a union of loops in the dual lattice. One should note that if $\sigma$ is a ground state, so is $-\sigma$, the configuration which is the result of flipping all the spins in $\sigma$. Thus, it makes more sense to talk about $\emph{ground state pair}$ or $\emph{GSP}$. The amount $J_{xy}\sigma_x\sigma_y$ in some configuration is sometimes called the $\emph{value}$ of the edge $\{x,y\}$ (in this configuration).

Let $\mu_\mathcal{J}$ be a general conditional (on $\mathcal{J}$) distribution which is translation-invariant and is supported on GSPs (or more precisely - the ground state representatives of them) for $\mathcal{J}$.

\begin{defn}
Let $G$ be a graph, $J$ a realization of couplings. An edge $e= \{ x, y \} $ is said to be $\emph{fixed}$, if either $|J_{xy}|>\sum_{z\sim x}|J_{xz}|$ or $|J_{xy}|>\sum_{z\sim y}|J_{yz}|$.
A subgraph $H$ is said to be $\emph{fixed}$ if there is a tree $T\subseteq G$ such that all the edges in $T$ are fixed and all the vertices of $H$ are vertices of $T$.

Under some spin configuration the product of an edge's interaction with its two vertices' spins is called its $\emph{value}$.

We say that a bond $e=\{x,y\}$ is $\emph{unsatisfied}$ (in some spin configuration $\sigma$) if $J_{xy}\sigma_x\sigma_y$ is negative. In this case we also say that the dual bond $e^{*}$ is unsatisfied.
\end{defn}
\begin{rmk}
In every GSP, the value of a fixed edge is positive and hence either in every GSP its vertices have the same spin, or in every GSP its vertices have opposite spins. In other wards, the products of its vertices' spins is the same in any GSP.
In a similar manner, all the edges of a fixed subgraph have the same value in each GSP and thus, the spins of $H$'s vertices equal each other.
\end{rmk}
\begin{defn}
Let $G=(V,E)$ be a graph, $S\subseteq V$ a subset of vertices. We say that $S$ bounds a subgraph $H$ if $H$ is a connected component of $G\mid_{V\setminus {S}}$.
\end{defn}
\begin{defn}
Let $G=(V,E)$ be a graph, the $\emph{boundary}$ of a subgraph is the set of edges which connect it to its complement in $G$.
\end{defn}

We continue by defining several families of graphs that we explore.
\begin{defn}
Let $H = (V_H,E_H)$ be a graph. We say that a graph $G=(V_G,E_G)$ is a $\emph{H-type Graph}$ if $V=\bigcup_{h\in V_H}{\{h\}}\times{V_h}$ and if there is an edge between $(h,v)$ and $(h',v')$ then either $h=h'$ or $h\sim_H h'$, where by $\sim_H$ we mean that $h,h'$ are neighbors in $H$.
 We denote by $G_h$ the subgraph of $G$ over the vertices with H-coordinate $h$. We call it the $\emph{the}~$ $h^{th}~$ $\emph{slice}$. For a vertex of the form $(h,v)$ we shall say that $h$ is its $\emph{level}$. The $\emph{width}$ of the $h^{th}$-slice is the size of $V_h$. The $\emph{width}$ of $G$ is defined as $sup_{h\in H} |V_h|$.

A $\mathbb{Z}$-type graph is called a $\emph{deformed cylinder}$.
\end{defn}
An important special case is the product of graphs, defined as follows
\begin{defn}
Let $H=(V,E),G=(U,F)$ be graphs. We define their $\emph{product}, G\times H$ as the graph whose vertex set is $V\times U$ and there is an edge between $(h,g), (h',g')$ iff exactly one of the following occurs - $h\sim_H h'$ and $g = g'$ or $g\sim_G g'$ and $h = h'$.

$\mathbb{Z}\times G$ will be called a $\emph{G-cylinder}$ or simply a $\emph{cylinder}$.
$K_n$ will denote the complete graph on $n$ vertices. $C_n$ will stand for the cycle of length $n$.
\end{defn}

Throughout this section, unless stated otherwise the coupling are picked from a product measure of some continuous symmetric distribution whose support is all the real line. In addition, we use notations from the previous section, when handling infinite trees or forests.
Several writers analyzed the spin glass objects in several scenarios, but only few rigorous results were attained. Some of the most impressive results were obtained in \cite{Newman00natureof}, \cite{Newman_Stein_halfplain}. It was shown that GSPs that belong to the support of some $\emph{metastate}$ in the plane (a metastate is a special type of distribution over GSPs), if there is more than one, must agree on all the bonds except for some bi-infinite path. They also showed that in the half plane, there is a unique metastate.
\subsection{Main Results}\label{sec:main res-spinglass}
$~$\\
Throughout most of this section we assume that the interactions are distributed according to some nontrivial product measure. Our first result, concerns the geometry of any translation invariant distribution of GSPs (not necessarily a metastate). We show that in any translation-invariant measure for GSPs, no GSP contains an infinite cluster of unsatisfied  (dual) edges. This may be stated as
\begin{theorem}\label{thm:no infty unsat cluster}
In almost every GSP in the support of a translation-invariant measure of GSPs (and interactions) the dual unsatisfied edges do not percolate.
\end{theorem}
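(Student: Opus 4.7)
The plan is to combine the ground state (GS) inequality with the quantitative forest-growth estimates of the previous section. The key reduction is to show that the set $U^{*}$ of \emph{unsatisfied dual edges} (the duals of those primal edges with $J_{xy}\sigma_x\sigma_y<0$) is a forest in the dual lattice $(\BZ^{2})^{*}\cong\BZ^{2}$, then exploit ergodic and mass-transport consequences to rule out an infinite component.

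First I would show that $U^{*}$ has no finite cycles almost surely. Take any simple dual cycle $C^{*}$; in the planar lattice, $C^{*}$ encloses a finite set of primal vertices $\mathcal{C}$ and equals $\partial\mathcal{C}$. The GS condition gives
\[
\sum_{e\in\partial\mathcal{C}}J_{xy}\sigma_{x}\sigma_{y}\;\geq\;0.
\]
If every edge of $C^{*}$ were unsatisfied, every term on the left would be strictly negative (a.s., because $J$ has a continuous law), a contradiction. Hence no simple dual cycle is entirely unsatisfied, so $U^{*}$ is a forest. Since the joint distribution of $(\sigma,\mathcal{J})$ is translation invariant, the induced law of $U^{*}$ is translation invariant as well.

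Next, suppose for contradiction that $U^{*}$ percolates with positive probability. Pass to an ergodic component; then a.s.\ some infinite cluster exists, and by the classification in Subsection~\ref{sec:infinite trees-inv subgraphs} each infinite component of $U^{*}$ is either single-infinite or bi-infinite. By Theorem~\ref{thm:single-infinite at least boundary sized or nlogn} in the single-infinite case (and by Claim~\ref{clm:bi-infinite have densities} in the bi-infinite case, which is stronger), the expected number $e_{n}$ of edges of $U^{*}$ inside the box $B_{n}$ that lie on a simple tree-path between two boundary vertices of $B_{n}$ satisfies $\BE_{\mu}e_{n}\geq \rho\, n\log n$ for all large $n$ and some $\rho>0$.

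The contradiction comes from an energy argument. For any such tree path $\gamma$ from $a\in\partial B_{n}$ to $b\in\partial B_{n}$, choose a dual arc $\alpha$ lying in (a neighbourhood of) $\partial B_{n}$ joining $a$ and $b$; then $\gamma\cup\alpha$ is a simple dual cycle and hence the dual boundary of a finite primal region, so the GS inequality gives
\[
\sum_{e\in\gamma}|J_{e}|\;\leq\;\sum_{e\in\alpha}|J_{e}|,
\]
because every edge of $\gamma$ is unsatisfied. Aggregating this inequality over a suitable family of pairs $(a,b)$ associated with the Steiner subtree of $U^{*}\cap B_{n}$ spanning its boundary vertices, the left-hand side grows at least like $\rho\, n\log n\cdot\BE[|J_{e}|\mid e\in U^{*}]$, whereas $|\partial B_{n}|=O(n)$ forces the right-hand side to be at most $O(n)\cdot\BE|J|$; for $n$ large this is a contradiction, proving Theorem~\ref{thm:no infty unsat cluster}.

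The main obstacle is the aggregation in the last paragraph: a naive sum over all pairs of boundary vertices in the cluster would let each boundary edge be reused arbitrarily many times, so the right-hand bound $O(n)\cdot\BE|J|$ becomes useless. I would handle this by choosing a near-matching on the boundary-hitting vertices of the cluster (so that each edge of $\partial B_{n}$ lies in boundedly many closing arcs) and by observing that, although conditioning on $e\in U^{*}$ biases $|J_{e}|$ towards small values, the continuity and full support of the law of $J$ guarantee $\BE[|J_{e}|\mid e\in U^{*}]>0$, which is all that is needed. With this controlled double counting in place the contradiction $n\log n=O(n)$ goes through.
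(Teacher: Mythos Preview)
Your overall architecture matches the paper's: show the unsatisfied dual edges form a forest, pass to an ergodic component, classify infinite trees as single- or bi-infinite, and then run an energy argument comparing the ``interior'' unsatisfied contribution against the $O(n)$ boundary of $B_n$. Your closing of tree paths by boundary arcs is a variant of the paper's region decomposition (the paper lets the boundary-connecting tree paths, together with $\partial B_n$, cut $B_n$ into regions and sums the GS inequality over all regions; each interior tree edge is counted twice, each boundary edge once). Either aggregation scheme yields the same target inequality
\[
\BE\Big[\sum_{e\ \text{on a boundary-connecting tree path}} |J_e|\Big]\ \le\ C\,n,
\]
so the matching idea is fine in spirit.

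The genuine gap is the step where you convert the edge-count bound $\BE[e_n]\ge \rho\, n\log n$ into a bound on the \emph{weighted} sum $\BE[\sum |J_e|]$. You assert this follows because $\BE[|J_e|\mid e\in U^{*}]>0$, but that is not the relevant conditioning: you need a uniform lower bound on $\BE[|J_e|\mid e\in U^{*}\ \text{and}\ e\ \text{lies on a boundary-connecting path in }B_n]$, and there is no a priori reason this does not tend to $0$ as $n\to\infty$ (edges sitting deep inside long unsatisfied arms could be systematically lighter). The paper flags exactly this point---``This would suffice if the absolute values of the interactions were bounded away from $0$. But in the continuous case we have to work harder''---and proves a separate lemma showing directly that the expected \emph{sum of interactions} over boundary-connecting tree edges is superlinear. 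The mechanism is: pick $\delta>0$ small enough that edges with $|J_e|<\delta$ form a subcritical bond percolation; then along any stem $S(u)$, every annulus of width $C\log t$ must contain an edge with $|J_e|\ge\delta$, and a mass-transport computation upgrades the $n\log n$ edge count to an $n\log\log n$-type count of \emph{edges with $|J_e|\ge\delta$}, which is still superlinear. Without this (or an equivalent device) your final contradiction $n\log n = O(n)$ does not go through.
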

Moreover, the collection of unsatisfied (dual) edges forms a forest of positive density. This is one of the first rigorous results about the GSPs of this model.

We then consider the number of GSPs in an infinite graph (when the interactions are distributed according to a product measure of continuous distribution supported on the real line).
We prove that cylinders and deformed cylinders (with some restriction) a.s. have a unique GSP.

On the other hand, we show that regular trees (of degree at least $3$) have infinitely many GSPs, for almost every realization of interactions, and moreover a translation invariant measure of GSPs, supported on uncountably many configurations.

\subsection{Geometry of 2D GSPs}\label{sec:2D-spinglass}
$~$\\
This section will be devoted to exploring the set of unsatisfied (dual) edges in GSPs in the 2D EA Ising spin glass model. More specifically, we restrict ourselves to GSPs which are in the support of some translation-invariant joint distribution of couplings and GSPs. We shall sometimes call such a distribution a translation-invariant $\emph{scheme}$ of GSPs. Throughout this section we consider $\mu_\mathcal{J}$, a translation-invariant scheme of GSPs (where the graph is the square lattice), and that $\mathcal{J}$ is the product measure of a symmetric distribution
which is supported on all the real line.

We begin with a couple of easy lemmas.
\begin{lemma}\label{lem:un-sat form a forest}
For any GSP in the support of $\mu_\mathcal{J}$, the set of dual edges whose primal edges are not satisfied form a forest.
\end{lemma}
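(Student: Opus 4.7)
The plan is to argue by contradiction using planar duality together with the defining property of a ground state, which says that the energy cannot be lowered by flipping any finite set of spins.

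Suppose, toward a contradiction, that in some GSP in the support of $\mu_\mathcal{J}$ the set of unsatisfied dual edges contains a simple cycle $\gamma^{*}$. By planar duality in $\mathbb{Z}^{2}$, the primal edges corresponding to $\gamma^{*}$ form the edge boundary $\partial C$ of a finite subset of primal vertices $C\subseteq\BZ^{2}$ (take $C$ to be the set of primal vertices lying in the bounded face enclosed by $\gamma^{*}$; since $\gamma^{*}$ is a finite simple closed curve in the plane, this set is finite and nonempty, and its outer edge boundary is precisely $\gamma^{*}$).

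Now consider the configuration $\sigma'$ obtained from $\sigma$ by flipping every spin in $C$. Because $C$ is finite and only the edges of $\partial C$ straddle $C$, the energy difference is well defined and equals
\begin{equation}
\mathcal{H}_{\mathcal{J}}(\sigma')-\mathcal{H}_{\mathcal{J}}(\sigma)=2\sum_{\{x,y\}\in\partial C}J_{xy}\sigma_{x}\sigma_{y}.
\end{equation}
By the ground state constraint recalled in the Preliminaries, this quantity is nonnegative. On the other hand, every edge of $\partial C$ is unsatisfied by assumption, so $J_{xy}\sigma_{x}\sigma_{y}<0$ for each term; note here that the symmetric continuous distribution of the couplings ensures that $J_{xy}\neq 0$ almost surely, so the inequality is strict. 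Summing strictly negative terms gives a strictly negative total, contradicting the ground state property.

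The argument is essentially a one-line application of planar duality plus the ground state inequality, so there is no real obstacle; the only care needed is in the duality step, namely verifying that a simple closed loop in the dual graph of $\BZ^{2}$ really does arise as $\partial C$ for a unique finite $C$, and in noting that the a.s.\ absence of zero couplings makes the unsatisfied inequality strict. The same reasoning in fact works for any planar graph, so the lemma is not special to $\BZ^{2}$, though we only need it in that setting.
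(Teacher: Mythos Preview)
Your proof is correct and is essentially the same argument as the paper's, which is the one-line observation that a closed cycle of unsatisfied dual edges would bound a finite region whose flip strictly lowers the energy. You have simply spelled out the duality step and the energy computation in more detail; nothing differs in substance.
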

\begin{proof}
Indeed, if there was a closed cycle in this dual graph, then flipping all the spins in the finite plane region which is bounded by this cycle would reduce the energy.
\end{proof}

We denote by $\mathfrak{F} = (\mathfrak{V},\mathfrak{E})$ the forest made of the unsatisfied dual edges of some GSP in the support of the measure.

\begin{lemma}\label{lem:the forest has positive dens}
The edges (vertices) of $\mathfrak{F}$ have a positive density, i.e. the limit
$\lim_{n\rightarrow\infty}\sharp(\mathfrak{E}\bigcap{E_n})/n^{2}$ exists and is greater than $0$, where $E_n$ is the
set of edges of a $n\times n$ dual square which has a fixed center (e.g. $(0.5,0.5)$), and there is a similar expression for the vertices.
\end{lemma}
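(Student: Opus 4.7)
The plan is to reduce the claim to a uniform lower bound on the single-edge probability that a given dual edge is unsatisfied, and then invoke Birkhoff's pointwise ergodic theorem for the $\mathbb{Z}^2$-action.

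First, by passing to the ergodic decomposition of $\mu_\mathcal{J}$ I may assume, without loss of generality, that the scheme is ergodic under $\mathbb{Z}^2$-translations. Applying Birkhoff's ergodic theorem to the indicator functions of the events ``the horizontal (resp.\ vertical) dual edge based at the origin is unsatisfied'' then shows that $\sharp(\mathfrak{E}\cap E_n)/n^2$ converges $\mu_\mathcal{J}$-almost surely to the deterministic constant $2(p_h+p_v)$, where $p_h$ and $p_v$ are the $\mu_\mathcal{J}$-probabilities that a fixed horizontal and vertical dual edge, respectively, lies in $\mathfrak{E}$. Existence of the limit is thus immediate, and it remains to prove positivity.

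For this I would run a frustrated-plaquette argument. Fix a primal unit plaquette $P$ with edges $e_1,\ldots,e_4$ traversed cyclically, and call $P$ \emph{frustrated} if $J_{e_1}J_{e_2}J_{e_3}J_{e_4}<0$. Since the couplings are i.i.d.\ with a symmetric continuous distribution, $\mathbb{P}(P\text{ is frustrated})=1/2$. The key algebraic identity, valid in every spin configuration, is
\[
\prod_{i=1}^{4}J_{e_i}\sigma_{x_i}\sigma_{y_i}\;=\;\prod_{i=1}^{4}J_{e_i},
\]
because along the closed boundary of $P$ each spin variable appears exactly twice and $\sigma^2\equiv 1$. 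Consequently, whenever $P$ is frustrated the number of unsatisfied edges on its boundary is odd, and in particular at least one.

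Summing the indicators of unsatisfiedness over the four edges of $P$, taking expectation under $\mu_\mathcal{J}$, and using translation invariance (which equates the probabilities of all horizontal edges to $p_h$ and all vertical edges to $p_v$), I obtain
\[
2p_h+2p_v \;=\; \sum_{i=1}^{4}\mathbb{P}(e_i\text{ is unsatisfied}) \;\geq\; \mathbb{P}(P\text{ is frustrated}) \;=\; \tfrac{1}{2},
\]
so $p_h+p_v\geq \tfrac{1}{4}$, giving the desired a.s.\ lower bound on the edge density. The vertex statement follows at once, since every edge of $\mathfrak{E}$ contributes endpoints to $\mathfrak{V}$ and the lattice degree is bounded. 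The only mild subtlety is checking that the plaquette bound remains valid in every ergodic component of the decomposition, but the frustration calculation uses only the joint distribution of the couplings and the GSP constraint, both of which are preserved under ergodic decomposition, so no real obstacle arises.
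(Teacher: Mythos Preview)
Your proof is correct and follows essentially the same line as the paper's: existence of the limit via ergodic decomposition, and positivity via the frustrated-plaquette observation that a unit square with $\prod_i J_{e_i}<0$ forces at least one unsatisfied edge in \emph{every} spin configuration. You are simply more quantitative than the paper (which only says frustration occurs ``with positive probability''); the harmless factor-of-two slip in the limiting constant---it should be $p_h+p_v$ rather than $2(p_h+p_v)$, since an $n\times n$ box has about $n^2$ edges of each orientation---does not affect the positivity conclusion.
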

\begin{proof}
Due to ergodic decomposition the existence of the limit is clear. Positiveness will follow if we can show that there are unsatisfied edges a.s. More generally, we show that a.s. there is a positive bound for the fraction of unsatisfied edges in any spin configuration, given the interactions.

Consider a unit square in the lattice. With positive probability the product of the interactions of its edges is negative. But then under any choice of spins, the product of its edges' values is negative, and hence at least one of these edges must be positive.
\end{proof}

We now reach to the main aim of this chapter, proving Theorem \ref{thm:no infty unsat cluster}. This theorem is one of the main results of this work, and its proof is slightly more entangled then previous proofs. We formulate the theorem slightly different than in Subsection \ref{sec:main res-spinglass}, though the formulations are clearly equivalent.
\begin{theorem}\label{thm:no infty unsat cluster}
All the connected components of $\mathfrak{F}$ are finite a.s.
\end{theorem}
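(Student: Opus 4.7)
The plan is to argue by contradiction. Suppose that with positive $\mu$-probability $\mathfrak{F}$ contains an infinite component; by ergodic decomposition we may take $\mu$ ergodic, and then translation invariance makes the set of vertices in infinite components a set of positive density, whose union we denote by $\mathfrak{F}_\infty$. The earlier claim for invariant forests on amenable unimodular graphs (which applies since $\mathbb{Z}^2$ is an amenable Cayley graph) forces each tree of $\mathfrak{F}_\infty$ to be almost surely either single-infinite or bi-infinite. Symmetrizing the joint law of $(\mathcal{J},\sigma)$ under the four $\pi/2$-rotations preserves the class of translation-invariant GSP schemes because the coupling law is symmetric, so one may further assume rotation-invariance. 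Then Theorem \ref{thm:single-infinite at least boundary sized or nlogn} (or, in the bi-infinite case, the stronger Claim \ref{clm:bi-infinite have densities}) applied to $\mathfrak{F}_\infty$ yields a constant $\rho>0$ such that, for
\[
e_n \;:=\; \#\{\, e^*\in\mathfrak{F}_\infty\cap B_n^*\;:\; e^*\text{ lies on a simple }\mathfrak{F}_\infty\text{-path joining two points of }\partial B_n^*\,\},
\]
one has $\mathbb{E}_\mu[e_n]\geq \rho\,n\log n$ for all large $n$, where $B_n^*$ is the dual box of side $n$.

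The matching upper bound comes from the GSP inequality. For any simple dual path $P\subset\mathfrak{F}_\infty\cap B_n^*$ whose endpoints lie on $\partial B_n^*$, closing $P$ by one of the two arcs $A\subset\partial B_n^*$ between its endpoints produces a closed dual loop, equal to the dual boundary of a finite primal region $\mathcal{C}$; the ground-state inequality $\sum_{e\in\partial\mathcal{C}} J_e\sigma_x\sigma_y\geq 0$ becomes
\[
\sum_{e^*\in P}|J_e|\;\leq\;\sum_{e^*\in A}|J_e|,
\]
since every edge of $P$ is unsatisfied while those of $A$ contribute at most their absolute value. I would then show, by a greedy extraction on innermost leaf pairs of the Steiner subforest of $\mathfrak{F}_\infty\cap B_n^*$ spanning its boundary points, that there exist edge-disjoint simple boundary-to-boundary paths $P_1,\dots,P_m$ in this Steiner subforest whose closing arcs $A_1,\dots,A_m$ are pairwise disjoint subsets of $\partial B_n^*$, with $\sum_i|P_i|\geq c\,e_n$ for some universal $c>0$; the planarity of the dual forest makes such a non-crossing extraction possible. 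Summing the displayed inequalities then gives the deterministic bound
\[
c\sum_{e^*\text{ on the extracted paths}}|J_e|\;\leq\;\sum_i\sum_{e^*\in P_i}|J_e|\;\leq\;\sum_i\sum_{e^*\in A_i}|J_e|\;\leq\;\sum_{e^*\in\partial B_n^*}|J_e|,
\]
whose right-hand side has expectation $O(n)\cdot\mathbb{E}|J|$.

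The main technical obstacle is to convert $\mathbb{E}[e_n]\geq\rho\,n\log n$ into a matching lower bound on $\mathbb{E}\bigl[\sum_{e^*\text{ on extracted}}|J_e|\bigr]$, since membership of $e^*$ in $\mathfrak{F}_\infty$ is correlated with $|J_e|$ (a small $|J_e|$ makes $e$ easier to unsatisfy). I would handle this by truncation: fix $\epsilon>0$ with $q:=\mathbb{P}(|J|>\epsilon)>0$ and use the fact that, conditional on $|J_e|>\epsilon$, the sign of $J_e$ is uniform and independent of the magnitudes of the other couplings; a conditional-flip/resampling argument then yields $\mathbb{E}[|J_e|\mathbf{1}_{e^*\text{ on extracted}}]\geq c'\epsilon q\cdot\mathbb{P}(e^*\text{ on extracted})$, so that summing over $e^*\in B_n^*$ gives an expected left-hand side of order at least $cc'\epsilon q\rho\,n\log n$, contradicting the $O(n)$ upper bound for large $n$. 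The most delicate step is the combinatorial extraction lemma --- simultaneously controlling the total length lost to pruning and keeping the closing arcs disjoint --- but the $n\log n$ versus $n$ gap leaves substantial slack, so even a logarithmic loss in either factor can be absorbed.
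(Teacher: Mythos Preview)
Your overall strategy---contrast the superlinear lower bound from Theorem~\ref{thm:single-infinite at least boundary sized or nlogn} against an $O(n)$ upper bound coming from the ground-state inequality on the box boundary---is exactly the engine the paper uses. One cosmetic difference: the paper never extracts individual boundary-to-boundary paths with disjoint closing arcs. Instead it lets the boundary-to-boundary subforest together with $\partial B_n^*$ cut the box into planar regions and sums the ground-state inequality over \emph{all} regions; each separating forest edge then contributes $-|J_e|$ twice and each boundary edge at most $+|J_e|$ once, yielding $2\sum_{e^*\text{ separating}}|J_e|\le\sum_{e^*\in\partial B_n^*}|J_e|$ directly. This sidesteps your extraction lemma entirely.

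The genuine gap is the truncation step. You want $\mathbb{E}\bigl[|J_e|\,\mathbf{1}_{e^*\text{ extracted}}\bigr]\ge c\,\mathbb{P}(e^*\text{ extracted})$ and propose to get it by resampling the sign of $J_e$. But the sign-flip does not decouple $|J_e|$ from the event: the ground state $\sigma$---and hence which edges are unsatisfied and lie in an infinite component of $\mathfrak{F}$---depends on the \emph{magnitudes} as well as the signs. Conditioning on $|J_e|>\epsilon$ and averaging over $\mathrm{sign}(J_e)$ still leaves the GSP (via $\mu_{\mathcal{J}}$) a function of $|J_e|$, and there is no reason $\mathbb{P}(e^*\in\mathfrak{F}_\infty\mid |J_e|)$ is bounded below uniformly in $|J_e|$; heuristically it should \emph{decrease}, which is precisely the correlation you acknowledged. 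The paper confronts this by proving a \emph{weighted} version of the $n\log n$ bound from scratch: it fixes $\delta>0$ small enough that edges with $|J_e|<\delta$ form a subcritical bond percolation, and reruns the mass-transport argument of Theorem~\ref{thm:single-infinite at least boundary sized or nlogn} counting only stem edges with $|J_e|\ge\delta$. Subcriticality forces every stem crossing a logarithmic-width annulus to contain an above-threshold edge, which keeps the weighted count superlinear. That percolation threshold is the missing idea.
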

\begin{proof}
From the results of Subsection \ref{sec:infinite trees-inv subgraphs} it follows that any infinite connected component must be either a bi-infinite tree or a single-infinite tree.

We begin by showing that no bi-infinite trees components exist in $\mathfrak{F}$.
Without loss of generality. $\mu_\mathcal{J}$ is ergodic. For any bi-infinite component we can define, as in Subsection \ref{sec:infinite trees-inv subgraphs} its $\emph{path}$, the single bi-infinite path which is contained in it. Let $\mathfrak{P}$ denote the union of these paths. It has a well defined density (in the sense of Lemma \ref{lem:the forest has positive dens}), due to ergodicity. We would like to show this density is $0$. Assume the density equals some $\rho > 0$. Consider a large $N\times N$ square in the dual lattice. There are constants $A,B$ s.t. with probability $1$ we can choose $N$ large enough such that such that the sum of absolute values of the squares boundary interactions (of the primal edges) is not more that $AN$, the number of edges of $\mathfrak{P}$ which lay in the square is at least $(\rho/2)N^{2}$ and that the some of these edges interactions, in absolute value is at least $BN^{2}$. Indeed, standard analysis shows that for $N$ large enough, even if we consider the smallest (in absolute value) $(\rho/2)N^{2}$ interactions, their sum will be at least some fixed multiple of $N^{2}$. Moreover, we can choose $N$ to be so large that $AN-2BN^{2}<0$

But now note that these paths divide the square into disjoint regions. each edge of $\mathfrak{P}$ which is in the interior of the square appears in the boundaries of exactly two such regions. The rest of the regions' boundaries are edges from the original square's boundary, each appears exactly once. Thus, the sum of the values over the boundaries (with multiplicity) is at most $AN-2BN^{2}<0$ (as the edges of $\mathfrak{P}$ have all negative value). But this means that there exists at least one region that if we flip all its spins the Hamiltonian decreases. A contradiction.

We now move to the more sophisticated part - showing that there are no single-infinite components.
The crux is the following. We still can decompose the square into regions whose boundaries are either part of the square's boundary or of the trees. But now we do not know whether the sum of boundary values is negative. We do know, due to Theorem \ref{thm:single-infinite at least boundary sized or nlogn}, that the number of edges from the trees that form these boundaries is superlinear (it was shown for a single tree - but the exact same consideration works for a forest). This would suffice if the absolute values of the interactions were bounded away from $0$. But in the continuous case we have to work harder, yet, the result will follow immediately from the next lemma.
\begin{lemma}
Let $\mathcal{J}$ be a product measure of a continuous distribution whose support is all the positive real line on the dual square lattice.
Let $\tau_\mathcal{J}$ be a translation invariant scheme of forests of single-infinite trees (by scheme we mean, again, a joint distribution of couplings and trees).
Then the expected sum of interactions of edges of the forest which lay on a simple path which starts and ends in the boundary of a $N \times N$ square around the origin is a super linear function of $N$.
\end{lemma}
Knowing the lemma guarantees that we can divide the square into regions, bounded by the square's boundary, and the trees' edges, that at least one of the regions has a boundary whose sum of values is negative, and hence a flip reduces the Hamiltonian.
\begin{proof}
Let $u,v$ be two vertices in the forest. Define the mass function $m_t(u,v)$ in the following manner. $m_t(u,v)=1$ if $v\in S(u)$, $|v-u|_\infty = t$ and the interaction of the single edge in the forest that connects $v$ to $S(v)$ is at least $\delta$, where $\delta$ is a constant to be chosen later on. Otherwise $m_t(u,v)=0$. Let $O$ be the "origin" of the dual graph, the point $(0.5,0.5)$. Denote by $E_t$ the value of $\mathbb{E}[\sum_{v\in\mathbb{Z}^{2}}m_t(O,v)|O~is~in~the~forest]$, this value will not change if we replace $O$ by any other vertex $u$, of course. By the MT principle
$E_t = \mathbb{E}[\sum_{u\in\mathbb{Z}^{2}}m_t(u,O)|O~is~in~the~forest]$.
Denote by $p_t$ the probability that for the origin $O$ (or any other vertex), which is conditioned to be in the forest, there is a vertex $u\in R(O)$, at $l_\infty$ distance $t$ from $O$ and that the edge which connects $O$ to $S(O)$ has an interaction at least $\delta$. Note that
\begin{equation}\label{eq:markov eq}
p_t \geq E_t/4t
\end{equation}
This is due to Markov's inequality, as $\sum_{u\in\mathbb{Z}^{2}}m_t(u,O) \leq 4t$ ($4t$ is the total number of vertices at $l_\infty$-distance $t$ from $O$, similarly to Theorem \ref{thm:single-infinite at least boundary sized or nlogn}). We now show that for a small enough $\delta$ there is a constant $C$ which satisfies the following condition
\begin{equation}\label{eq:perc ineq}
\sum_{t\leq s\leq t+C\log{(t+1)}}E_s \geq 1
\end{equation}
Indeed, choose some $\delta$ such that the edges with interaction smaller than $\delta$, if taken as open edges form a subcritical percolation. For a large enough $c$ the probability a given $m\times m$ square contains an open cluster of size greater than $c\log{(m)}$ decreases polynomially in $m$, and the degree of the polynomial in an increasing function of $c$ (which grows to $\infty$ as $c$ does), see \cite{grimmett}. Since the forest has a positive density, if we consider a given $m\times m$ square around a vertex, conditioned this vertex is in the forest, we get a similar bound.

Take some vertex $u$, conditioned to be in the forest. since $S(u)$ is a single-infinite path, if we consider the intersection of $S(u)$ with the annulus $A(u,t,C) := \{v ~s.t.~  t\leq |u-v|_\infty \leq t+C\log{(t+1)}\}$, then there is some connected path in this intersection which connects the internal and external boundary. In particular it length is at least $C\log{(t+1)}$. Due to the above remarks, for large enough $C$, at least two of its edges will have an interaction of at least $\delta$, with very high probability. Thus, $C$ can be taken as such that Inequality \ref{eq:perc ineq} is valid.

Denote by $p_t ^{R}$ the probability that for the origin $O$ (or any other vertex), which is conditioned to be in the forest, there is a vertex $u\in R(O)$, at $l_\infty$ distance $t$ from $O$, and moreover $x(u)-x(0) = t$, where $x$ denotes the $x$-coordinate of the vertex (in words - the distance in achieved in the $x$-axis, from the $\mathbf{right}$), and that the edge which connects $O$ to $S(O)$ has an interaction at least $\delta$. Similarly we define $p_t ^{L}$, $p_t ^{U}$, $p_t ^{D}$ (L stands for left, U - up, D - down). It is trivial that
\begin{equation}\label{eq:trivial eq}
p_t ^{R} + p_t ^{L} + p_t ^{U} + p_t ^{D} \geq p_t
\end{equation}

We now finish the argument in a similar manner to that of Theorem \ref{thm:single-infinite at least boundary sized or nlogn}.
Consider a large $n\times n$ square around $O$, the expected sum of edges whose interaction is at least $\delta$ and that lay on a simple path that is contained in the forest and connects two boundary points of the square, which is a lower bound to the number we are chasing after, is at least
$\Delta := \sum_{\gamma\in\{L,R,U,D\}}\sum_{t = 1}^{t=\lfloor(n-1)/2\rfloor} (n-2t)p_t^{\gamma}$.
The reason is that $e={v,u}$ is an edge (with $u\in S(v)$) which lays on a simple path that connects two boundary edges, iff $R(v)$ intersects the boundary, as $S(v)$ always intersect it.
Due to Equation \ref{eq:trivial eq} $\Delta \geq \Delta_1 := \sum_{t = 1}^{t=\lfloor(n-1)/2\rfloor} (n-2t)p_t$.
Due to Equation \ref{eq:markov eq} we have $\Delta_1 \geq \Delta_2 := \sum_{t = 1}^{t=\lfloor(n-1)/2\rfloor} (n-2t)E_t/4t$.
And due to Inequality \ref{eq:perc ineq} we have $\Delta_2 \geq \sum_{m=1}^{N(n)} (n-2s_m)/4s_m$, where $s_m$ is the sequence that is defined by $s_1=1+C\log 2$, $s_k = s_{k-1} + C\log{(s_{k-1}+1)}$, and $N(n)$ is the last such $m$ with $s_m < n/3$.
It is clear that $\Delta_2 \geq n\sum_{t=1}^{n/(C'\log(n))} (C''t\log(t+1))^{-1}$, and this is superlinear, as $\sum_{m=1}^{\infty}(m\log(m))^{-1})$ diverges. And the result follows.
\end{proof}
Thus, the forest is made only of finite components
\end{proof}

\subsection{Families of Graphs with Unique GSPs}\label{sec:Graphs with Unique GSPs}
$~$\\
In this section we prove two main results. The first is that under a variety of conditions, cylinders and deformed cylinders have only one GSP. The second shows a sufficient condition for planar graphs to have only one GSP. In a future paper we generalize the planarity condition to wider families of graphs. Throughout the thesis we state the results and prove them assuming planarity of graphs, as it is easier both to formulate and to visualize.

\begin{theorem}\label{thm:one ground state in cylinders}
Let $G$ be a connected deformed cylinder that each of its slices is connected as well. Assume that there exist some $N\in\mathbb{N}$ such that there are both infinitely many positive levels and infinitely many negative levels whose slices have the following property:
For each of these slices there are no more than $N$ edges touching it or contained in it. Then $G$ has exactly one GSP.
\end{theorem}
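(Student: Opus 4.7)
Suppose for contradiction that there are two distinct GSPs $\sigma, \sigma'$ with $\sigma' \neq \pm\sigma$, and set $S = \{v : \sigma_v \neq \sigma'_v\}$ with coboundary $D = \partial S$. The first observation is that both $S$ and $S^c$ must be infinite. Indeed, if $S$ were a nontrivial finite set then $\sigma'$ would be a finite spin flip of $\sigma$ on $S$; the ground-state inequalities applied in both directions force $H(\sigma) = H(\sigma')$, equivalently $\sum_{e \in D} J_e \sigma_e = 0$. This is a nontrivial $\pm 1$-coefficient linear combination of the continuously and independently distributed couplings on the finite edge set $D$, which vanishes with probability zero; a union over the countably many finite candidates for $S$ and the $2^{|D|}$ coefficient vectors is still null.

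Next I would exploit the narrow-slice hypothesis as a bottleneck structure. Enumerate the narrow positive levels $h_1 < h_2 < \cdots$ and negative levels $h_{-1} > h_{-2} > \cdots$. At any narrow level $h$, the slice $G_h$ has at most $N+1$ vertices and each vertical cut $C_{h-1}, C_h$ contains at most $N$ edges, so for any choice of negative narrow $h_-$ and positive narrow $h_+$ the slab $R = \{v : h_- \le \mathrm{level}(v) \le h_+\}$ has edge-boundary $\partial R \subseteq C_{h_- - 1} \cup C_{h_+}$ of size at most $2N$. The combinatorial data at a narrow level --- the isomorphism type of $(G_h, C_{h-1}, C_h)$ together with the values of $\tau := \sigma\sigma'$ on $V_h$ and its neighbors in $V_{h\pm 1}$ --- lies in a finite set depending only on $N$. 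By pigeonhole, infinitely many narrow levels on each side realize a common type, and after possibly replacing $\sigma'$ by $-\sigma'$, one can arrange that the resulting pair $(h_-, h_+)$ has compatible boundary data.

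The core of the proof is a swap argument: define $\sigma'' = \sigma'$ on $R$ and $\sigma$ elsewhere, and dually $\sigma''' = \sigma$ on $R$ and $\sigma'$ elsewhere. When $R$ is finite, both are finite modifications and $H(\sigma'') \ge H(\sigma)$, $H(\sigma''') \ge H(\sigma')$. Summing and cancelling all interior contributions, one arrives at the localized identity
\[
\sum_{(x,y) \in \partial R} J_{xy}(\sigma_x - \sigma'_x)(\sigma_y - \sigma'_y) \;\ge\; 0,
\]
which using $(\sigma_x - \sigma'_x)(\sigma_y - \sigma'_y) = 4 \sigma_x\sigma_y \mathbf{1}_{\{x, y \in S\}}$ is a sign constraint on the at most $2N$ couplings along $\partial R$. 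Adjoining the analogous inequality obtained from the global flip $\sigma' \mapsto -\sigma'$, together with the two constraints from flipping the whole of $R$ in $\sigma$ and in $\sigma'$, one obtains a finite system of sign-constraints on $\{J_e\}_{e \in \partial R}$. Varying over the countably many narrow pairs $(h_-, h_+)$, whose boundary cuts are mutually disjoint and whose couplings are therefore independent, a Borel--Cantelli argument powered by the continuity of the coupling law forces at least one of these constraints to be strictly violated almost surely, contradicting the ground-state assumption.

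The main obstacle is that the intermediate slices strictly between $h_-$ and $h_+$ need not be narrow, so $R$ may be an infinite subgraph; then $\sigma'', \sigma'''$ differ from $\sigma, \sigma'$ on infinite sets and the ground-state inequalities are not directly applicable. The remedy, which must be executed carefully, is to perform the exchange only on an increasing exhaustion $R_n \nearrow R$ of finite connected subgraphs containing both narrow endpoints $V_{h_-}$ and $V_{h_+}$, and to show that the contributions from the internal interface $\partial R_n \cap \mathrm{int}(R)$ cancel in the limit. This cancellation uses that the ground-state property of $\sigma$ and $\sigma'$ already pins down their restrictions inside any non-narrow intermediate slice up to finite perturbations, leaving only the $\partial R$-boundary as a residual contribution. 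Making this localization rigorous --- so that the narrow slices act as genuine bottlenecks even when $R$ is infinite --- is the technical heart of the proof.
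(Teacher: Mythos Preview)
Your swap argument is set up correctly for finite $R$, but the resulting system of constraints does not produce a contradiction. Write $A=\sum_{e\in\partial R,\,e\subset S}J_e\sigma_e$, $C=\sum_{e\in\partial R,\,e\subset S^c}J_e\sigma_e$, and $B$ the corresponding sum over $e\in\partial R\cap\partial S$. Your four inequalities amount to $A\ge 0$, $C\ge 0$, $A+B+C\ge 0$, $A-B+C\ge 0$, and these are mutually compatible --- nothing forces a strict violation. More seriously, the Borel--Cantelli step is circular: the sign pattern in your ``constraint'' at a given cut is $\sigma_x\sigma_y\mathbf{1}_{x,y\in S}$, which is a function of the \emph{global} coupling realization through the GSPs $\sigma,\sigma'$. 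So the event ``the constraint fails at $\partial R_k$'' is not $\sigma(\{J_e:e\in\partial R_k\})$-measurable, and independence of the couplings on disjoint cuts does not give independence of those events. Finally, the infinite-$R$ difficulty that you flag is real and your proposed remedy (``contributions from the internal interface cancel in the limit'') is not an argument; there is no reason a growing internal boundary should contribute $o(1)$.

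The paper's proof avoids all of this by introducing a notion that is purely local in the couplings and does not refer to any GSP: a narrow slice is called \emph{fixed} if a suitably ordered spanning tree of the slice has each edge outweighing (in absolute value) the sum of all remaining incident couplings. This is an event of probability at least $\varepsilon(N)>0$, depending only on the $\le N$ couplings near the slice, and independent for well-separated narrow slices; hence a.s.\ every vertex is trapped between two fixed slices. A fixed slice has the same edge-values in every GSP, so no disagreement edge of $D=\partial S$ can cross it; this immediately confines the connected clusters of agreement/disagreement, forces at most two of them, and localizes $D$ to a single finite slab. The final contradiction is then obtained not from a swap inequality but from a second, independent probabilistic input: a.s.\ there is a narrow slice, disjoint from $D$, whose total coupling mass is less than $|h|/(N+1)$ where $h=\sum_{e\in D}J_e\sigma_e\ne 0$; flipping the finite region between that slice and $D$ lowers the energy in whichever of $\sigma,\sigma'$ has $h<0$. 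The concept of a fixed slice is the missing idea in your attempt.
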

\begin{proof}
Due to what was said above, we shall prove only the uniqueness. Assume, in order to reach a contradiction, that there are at least two GSPs. Remember that a GSP is actually a pair of configurations, which differ only by a global flip. Choose a representative to each GSP. Let these representatives be $\mathcal{C}_1,\mathcal{C}_2$. We can divide the vertices of $G$ into two sets - the set of vertices where $\mathcal{C}_1,\mathcal{C}_2$ agree, and the set where they do not agree. We can divide each of these two sets into connected clusters according to connectivity in the graph. Note that for any such a cluster from one set, all of its neighbors must belong to cluster from the other set, as if there where two different clusters from the same set, with an edge between them, since the division was according to connectivity in $G$, these two clusters must have been the same cluster. An immediate consequence is that an edge has a different value in $\mathcal{C}_1,\mathcal{C}_2$ if and only if it connects two different clusters.
If there is only one cluster, we are done, and there is only one GSP. On the other hand, if there are more than one cluster, each cluster must be infinite. Indeed, if there were a finite cluster, the sum of the values of the edges which connect it to the other clusters (there are such edges, since the graph is connected) must have been non zero (since the coupling distribution is continuous). But then in one of the two GSPs it should have been negative (this amount in $\mathcal{C}_1$ is minus the amount in $\mathcal{C}_2$). Hence, in one of the two ground states we could have reduced the energy by flipping a finite set of spins.

We should now notice that any vertex $v$ belongs to a finite connected subgraph of $G$ which is bounded by two slices, each of them has no more that $N$ edges, each edge with the property that one of its vertices belongs to the slice, where $N$ is the integer from the formulation of the theorem. Indeed, it follows from the assumptions of the theorem that if $v$ is of level $a$, there are integers $b,c$ with $b<a<c$ such that the slices in the levels $b,c$ have the above property. Note that any edge from this subset to its complement in the graph must have one vertex in one of these border slices, and one vertex "outside".

The main observation is that there exist a positive $\varepsilon=\varepsilon(N)$ such that any slice with the above property is fixed, with probability at least $\varepsilon$. Indeed, there are only finitely many isomorphism classes of graph to that slice. Consider such a class. Consider some specific spanning tree of it. Choose an order on its vertices, such that the vertex in the $i^{th}$ place, has all its neighbors (in the tree!) in places $j\leq i$, except at most one neighbor. Now, there is a positive probability such that for any $i$, if it has an edge in the tree, $e$, which is to a neighbor in higher place in the order, then the absolute value of the coupling in this edge is higher than the sum of all the edges which have at least one vertex in the slice, except for those in the spanning tree, together with the sum of absolute values of all couplings of the edges of $i$ in the tree (other than $e$).
But in this case the tree is fixed. As there are only a finitely number of isomorphism classes of the slices' graph (and we needed no information about the rest of $G$, except the bound for the number of edges which touch the slice), there exist a positive $\varepsilon=\varepsilon(N)$ such that any slice with the above property is fixed.

It should be noted that the spanning tree and the order were crucial. If we had not have such an order, we could not have guaranteed a fixed spanning subgraph. If the tree were not spanning, it might have happened that only parts of the graph would have been fixed.

Thus, standard arguments show that with probability $1$ (on the interactions) any vertex belongs to a finite subgraph which is bounded by two fixed slices (and the boundary edges of this subgraph are only those from these two subgraphs to the rest of the graph.
A fixed edge has the same value in any GSP, and the same holds for a fixed slice. We can find an order preserving, injective and surjective map, $S$ from the set of fixed slices to $\mathbb{Z}$, where the order of the slices is induced from the order of their levels.

We remember that any of the above clusters must be infinite. From the last remark it follows that any fixed slice must be contained fully in the interior of a cluster, and there is no cluster which is contained in a finite subgraph which is bounded from both sides by fixed slices (it must be infinite). In addition, due to the connectivity of clusters, the images under $S$ of the fixed slices which are contained in it must be in the form $\{z\in\mathbb{Z}| a<z<b\}$ where $a,b\in\mathbb{Z}\bigcup\{-\infty,\infty\}$.
From here it is immediate that there are at most two clusters. Indeed, if there were three, one of them must have contained only fixed slices whose image under $S$ is of the form $\{z\in\mathbb{Z}| a_0<z<b_0\}$ where $a_0,b_0\in\mathbb{Z}$. But then this cluster is contained in the finite subgraph of $G$ which is bounded between the slices $S^{-1}(a_0),S^{-1}(b_0)$, and in particular be finite.
\begin{figure}[h]
\begin{center}
\includegraphics[width=0.7\textwidth]{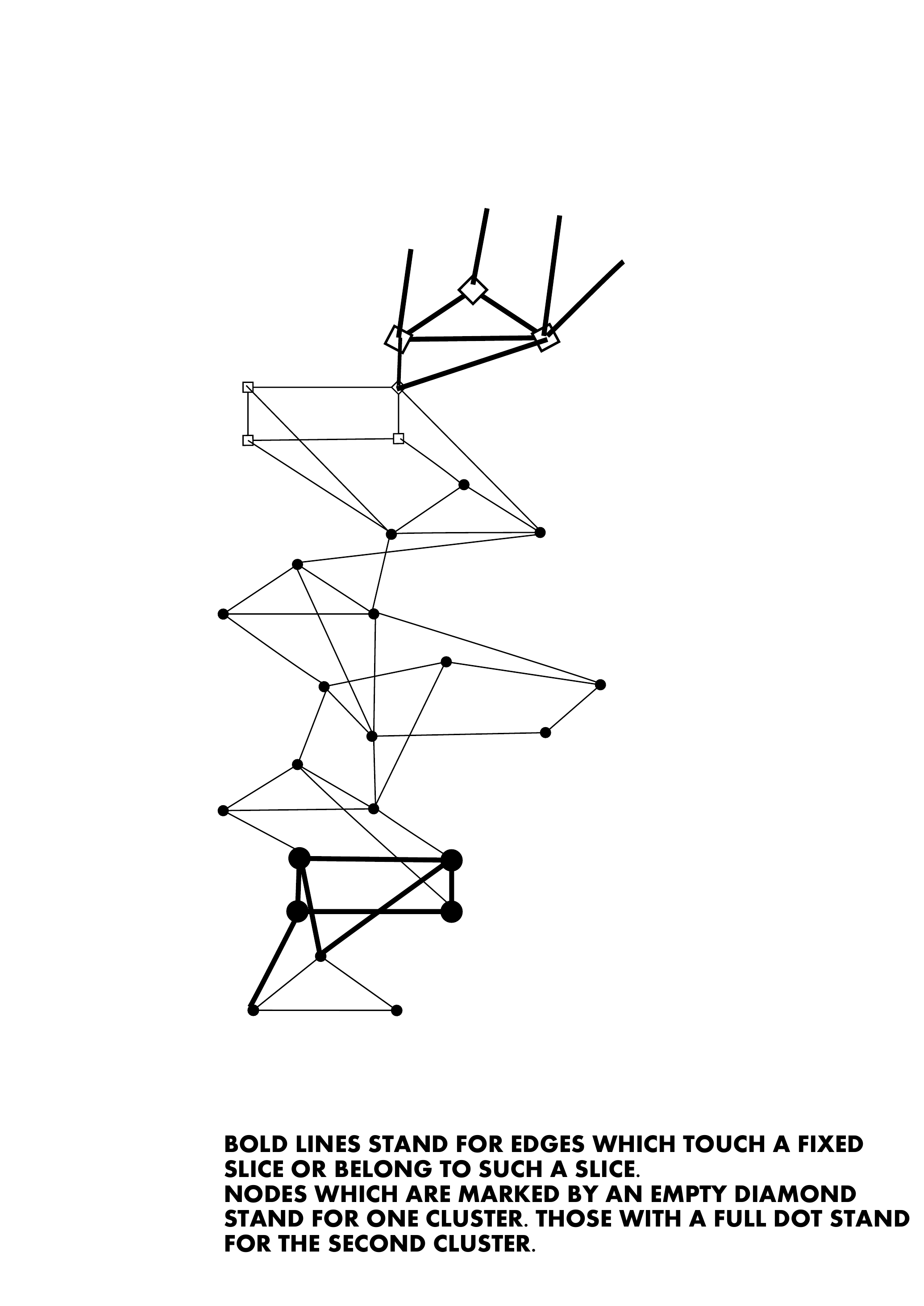}
\caption{Clusters of agreement$\backslash$ disagreement between two GSPs - Theorem \ref{thm:one ground state in cylinders}}
\end{center}
\end{figure}

We are left with case that there are two clusters. Assuming there were two, the sum of values of boundary edges between them (which is finite, as  it belongs to a finite subgraph bounded between two fixed slices) must be positive in one GSP and negative in the other: denote by $h$ the value of that sum in one GSP, the corresponding sum in the other GSP will be $-h$. As the coupling distribution is continuous, $h\neq 0$ a.s. With probability $1$ there is a slice with at most $N$ edges which touches it or belongs to it, and has no edge in common with this boundary, and that each of its edges has an absolute value less then $|h|/(N+1)$, where $h$ is the sum of values of boundary edges in $\mathcal{C}_1$. Assume, without loss of generality. that $h<0$. Note that this boundary and the slice we have just describe are a boundary of some subgraph of $G$. But the sum of the values of this subgraph's boundary edges must be at most $-|h|/(N+1)$, hence, flipping all of this subgraph's spins reduces the Hamiltonian, in contradiction to the assumption that $\mathcal{C}_1$ is a ground state.
\end{proof}
Similar considerations lead to the next result (we formulate it for cylinders, for clarity, though it can be stated for deformed cylinders that satisfy conditions similar to those of the above theorem):
\begin{theorem}\label{thm:no infinite unsat cluster in cylinders}
Let $G$ be a connected cylinder. Then for almost every realization of couplings, in the GSP of $G$ there is no infinite connected cluster of unsatisfied edges.
\end{theorem}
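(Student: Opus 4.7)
The plan is to argue by contradiction, mirroring the blueprint of Theorem \ref{thm:one ground state in cylinders}. By that theorem there is, almost surely, a unique GSP $\sigma$; suppose that on a set of positive measure for the couplings the set $U$ of unsatisfied edges of $\sigma$ contains an infinite connected component $C$. Because $G_0$ is finite and $V(C)$ is connected in $G=\BZ\times G_0$, the projection of $V(C)$ onto $\BZ$ is a sub-interval, which must therefore be unbounded in at least one direction; without loss of generality $V(C)$ extends to $+\infty$.

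Next I would import the fixed-slice abundance argument from the proof of Theorem \ref{thm:one ground state in cylinders}: there is a uniform $\varepsilon=\varepsilon(|G_0|)>0$ so that every slice is fixed with probability at least $\varepsilon$, and by independence of couplings in disjoint neighborhoods almost surely there are fixed slices at arbitrarily large positive and negative levels. Between two consecutive fixed slices $S_a,S_b$ the slab $R=[a,b]\times G_0$ is finite, the spin configuration on $S_a$ and $S_b$ is rigid (determined by the fixed spanning trees), and the cluster $C$ can escape $R$ only through the at most $2|G_0|$ bridge edges across $\partial R$. Since $V(C)$ is unbounded above, for arbitrarily large $b$ at least one bridge at the crossing $b\to b+1$ is forced to belong to $C$ and is therefore unsatisfied.

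The heart of the argument is to combine these observations with the GSP inequality $Q(T):=\sum_{e\in\partial T}J_e\sigma_u\sigma_w\ge 0$, valid for every finite $T$, in order to extract a finite $T'$ with $Q(T')<0$. The natural starting choice is $T=V(C)\cap R$: the edges of $\partial T$ lying inside $R$ are all boundary edges of $V(C)$ (any unsatisfied edge incident to $V(C)$ would already belong to $C$) and hence are all satisfied, while the rest of $\partial T$ is a bounded number of bridge edges, some of which are unsatisfied because they lie in $C$. I would then enlarge $R$ between ever-more-distant pairs of fixed slices and correct $T$ by finite pieces near each fixed slice, using the rigidity of the fixed spanning trees to cancel the satisfied boundary contribution inside $R$; an ergodicity and counting argument should then force the net unsatisfied-bridge contribution to dominate, yielding a finite $T'$ with $Q(T')<0$ and hence a finite flip lowering the energy of $\sigma$, contradicting the GSP property.

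The main obstacle is this last quantitative step: the satisfied mass along $\partial V(C)\cap R$ may a priori grow with $R$, and one needs the rigidity of the fixed slices together with a density estimate to cancel it against the bridge contribution. A cleaner alternative -- directly in the spirit of Theorem \ref{thm:one ground state in cylinders} -- is to try to construct a second, distinct GSP by flipping $\sigma$ on the set $W=V(C)\cap\{\textrm{levels}\ge b+1\}$ for a suitable fixed slice $S_b$: since $\partial V(C)$ consists entirely of satisfied edges, the only edges whose value changes are on $\partial W$, and verifying the GSP inequality for the resulting configuration $\sigma'$ on every finite flip reduces to bounded combinatorial checks controlled by the fixed-slice spanning trees. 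A successful verification would contradict the uniqueness of the GSP from Theorem \ref{thm:one ground state in cylinders} and thus rule out infinite unsatisfied clusters.
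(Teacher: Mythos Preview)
Your proposal is incomplete: both routes you sketch (finding a finite $T'$ with $Q(T')<0$, and constructing a second GSP by flipping on $W$) are left at the level of ``should work'' without the decisive verification, and you yourself flag the missing quantitative cancellation as the main obstacle. In fact neither route is needed, because you are working too hard with fixed \emph{slices} when the object that actually blocks an unsatisfied cluster is a barrier of fixed \emph{cross-edges}.

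The paper's argument is this. In the cylinder $G=\BZ\times G_0$ there is positive probability (depending only on $|G_0|$) that for a given pair of consecutive levels $i,i+1$ the finitely many connecting edges $\{(i,v),(i+1,v)\}$, $v\in G_0$, are all \emph{fixed} in the sense of the Definition in Section~\ref{sec:Prelim-spinglass}: each such $|J|$ dominates the sum of the other $|J|$'s at one endpoint. This event is measurable with respect to the couplings in a bounded neighbourhood, so by independence it occurs at infinitely many levels in each direction almost surely. A fixed edge is satisfied in every GSP, hence at such a barrier \emph{every} cross-edge between levels $i$ and $i+1$ is satisfied. But a connected cluster of unsatisfied edges that reaches both level $\le i$ and level $\ge i+1$ must contain at least one cross-edge between $i$ and $i+1$; that edge would then be unsatisfied, a contradiction. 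Thus any unsatisfied cluster is trapped between two consecutive barriers and is finite.

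Compared with your plan, the gain is that the barrier directly forbids the cluster from crossing, so there is no need to balance satisfied against unsatisfied boundary mass, no need to invoke uniqueness of the GSP, and no energy inequality to verify. Your use of a single fixed slice $S_b$ only pins down spins \emph{within} the slice; it says nothing about the status of the bridges $b\leftrightarrow b+1$, which is why you were forced into the delicate accounting. Strengthening the event from ``one slice is fixed'' to ``all bridges between two consecutive slices are fixed edges'' (still a finite, positive-probability condition on finitely many couplings) removes the difficulty entirely.
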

\begin{proof}
$\mathbf{Sketch}$. There is a positive probability that a pair of consecutive slices is fixed, and moreover, all the edges connecting these slices are fixed to be positive. Thus, any vertex belongs to a finite subgraph which is bounded by two such pairs of slices. But an infinite connected cluster which contains this vertex must contain at least on edge from connecting the two slices in one of these pairs. But this edge cannot be negative.
\end{proof}

The next theorem use a similar property to prove uniqueness of GSPs for many planar graphs. This result may be extended to much wider families of graphs.
\begin{defn}
Let $N$ be some fixed integer and let $v$ be a vertex. 

A cycle (a simple closed path) $\mathcal{C}$ is said to $\emph{N-choke}$ (or simply $\emph{choke}$) $v$ if $\mathcal{C}$ surrounds $v$ and the number of edges which lay on it or touch a vertex in the it is smaller than $N$.

A locally finite planar graph $G$ is said to be $\emph{choked}$ if for every vertex $v$ in the graph, there is a positive integer $N=N(v)$, and there is a sequence of cycles in the graph, $\{\mathcal{C}_n\}_{n\in\mathbb{N}}$, where each $\mathcal{C}_{n}$ ($N-$)chokes $v$ and such that for all $n\in\mathbb{N}$,
$\mathcal{C}_{n+1}$ surrounds $\mathcal{C}_{n}$.
%
\end{defn}
\begin{rmk}
In the above definition, we used the term 'locally finite graph' for describing a planar graph such that in any compact subset of the plain there are at most finitely many of its vertices. By 'surrounded' we meant that the surrounded object falls into the finite connected portion of the plane after the deletion of the surrounding object.
\end{rmk}

\begin{claim}\label{clm:choked planar have fixed cycles}
In any choked graph $G$, with probability $1$, any vertex is surrounded by a sequence of infinitely many fixed cycles (by that we mean that the edges of the cycle are fixed) such that each cycle surrounds the cycles which come before in the sequence.
\end{claim}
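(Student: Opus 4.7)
The plan is to reduce the claim to the second Borel--Cantelli lemma, applied to a sparsified subsequence of the given choking cycles. Fix a vertex $v\in V(G)$, set $N=N(v)$, and let $\{\mathcal{C}_n\}_{n\in\mathbb{N}}$ be the choking sequence guaranteed by the hypothesis. For each $n$ let $F_n$ denote the set of edges of $G$ that either lie on $\mathcal{C}_n$ or are incident to a vertex of $\mathcal{C}_n$; by the definition of choking, $|F_n|\leq N$ for every $n$.

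First I would extract a subsequence $\{\mathcal{C}_{n_k}\}_{k\in\mathbb{N}}$ with pairwise disjoint edge-neighborhoods $F_{n_k}$. Since the $\mathcal{C}_n$ are nested and distinct, and each $F_n$ contains at most $N$ edges and hence at most $N$ incident vertices, local finiteness of $G$ forces the sequence to escape every bounded region of the plane. Consequently I can inductively choose $n_{k+1}$ large enough that $\mathcal{C}_{n_{k+1}}$ lies entirely outside the bounded region enclosed by $\bigcup_{j\leq k}F_{n_j}$ in the planar embedding, ensuring $F_{n_{k+1}}\cap F_{n_j}=\emptyset$ for all $j\leq k$.

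The second step is to quantify the probability that a given $\mathcal{C}_{n_k}$ is a fixed cycle, in the sense of the claim. Whether this holds depends only on the couplings of edges in $F_{n_k}$, so by the disjointness of the $F_{n_k}$ and the product structure of the disorder measure $\nu$, the events $A_k:=\{\mathcal{C}_{n_k}\text{ is a fixed cycle}\}$ are mutually independent. To bound $\mathbb{P}(A_k)$ from below by a constant $\varepsilon(N)>0$ independent of $k$, I would argue exactly as in the proof of Theorem~\ref{thm:one ground state in cylinders}: since $|F_{n_k}|\leq N$, the combinatorial type of the pair $(\mathcal{C}_{n_k},F_{n_k})$ belongs to one of finitely many isomorphism classes. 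For each class, take as spanning subtree of $\mathcal{C}_{n_k}$ the path obtained by deleting a single cycle edge, order its vertices consecutively along the path, and demand that each successive cycle-edge coupling dominate in absolute value the sum of all other couplings in $F_{n_k}$ touching its tail vertex. By continuity and unbounded support of the coupling distribution, this requirement is satisfied with positive probability; minimizing over the finitely many classes yields the desired uniform $\varepsilon(N)$.

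With independence and $\mathbb{P}(A_k)\geq\varepsilon(N)$, so that $\sum_k\mathbb{P}(A_k)=\infty$, the second Borel--Cantelli lemma gives that almost surely infinitely many $A_k$ occur; the corresponding $\mathcal{C}_{n_k}$ form an infinite nested sequence of fixed cycles surrounding $v$. A countable intersection over $v\in V(G)$ then yields the full statement. I expect the main difficulty to lie in the subsequence extraction of the first step: one must invoke local finiteness in precisely the right way to conclude that $N$-choking cycles nested around $v$ cannot accumulate in any bounded region of the planar embedding, so that one can always move far enough out to find a cycle disjoint from the previously selected edge-neighborhoods.
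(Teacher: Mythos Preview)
Your proposal is correct and follows essentially the same approach as the paper's proof: a uniform lower bound $\varepsilon(N)>0$ on the probability that an $N$-choking cycle is fixed (borrowed from Theorem~\ref{thm:one ground state in cylinders}), combined with Borel--Cantelli using independence of far-apart cycles. The paper simply asserts that ``the event that $\mathcal{C}_n$ is fixed is independent of the event that $\mathcal{C}_m$ is fixed for far enough $n,m$'', whereas you carry out the subsequence extraction explicitly; your added care here is warranted, and the argument goes through (one small slip: $N$ edges may touch up to $2N$ vertices, not $N$, but this is harmless since the pairwise disjointness of the nested $V(\mathcal{C}_m)$ together with the finiteness of $W=\bigcup_{j\le k}V(F_{n_j})$ already forces $V(\mathcal{C}_m)\cap W=\emptyset$ for all but finitely many $m$).
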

\begin{proof}
Let $v$ be a vertex, $N=N(v)$ be as in the above definition, and $\mathcal{C}$ any cycle $N$-choking $v$.
There is a positive probability, which is bounded from $0$ by a positive function of $N$ that $\mathcal{C}$ if fixed, this can be seen similarly to Theorem \ref{thm:one ground state in cylinders}.
Now standard considerations like Borel-Cantelly finish the argument as the event that $\mathcal{C}_n$ is fixed is independent of the event that $\mathcal{C}_m$ is fixed for far enough $n,m$.
\end{proof}

\begin{theorem}\label{thm:one ground state in choked graphs}
Let $G$ be a choked graph then with probability $1$ it has only one GSP.
\end{theorem}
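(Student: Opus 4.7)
The plan is to mirror the strategy of Theorem \ref{thm:one ground state in cylinders}, with fixed cycles (obtained from Claim \ref{clm:choked planar have fixed cycles}) playing the role that fixed slices played there. I will assume for contradiction that there are two distinct GSPs, with representatives $\sigma^1,\sigma^2$, and split the vertex set into the agreement set $A=\{x:\sigma^1(x)=\sigma^2(x)\}$ and the disagreement set $D=V\setminus A$. Decomposing each of these into its $G$-connected clusters, distinctness of the GSPs (i.e. $\sigma^2\not\equiv\pm\sigma^1$) forces both $A$ and $D$ to be non-empty. The same finite-flip argument used in Theorem \ref{thm:one ground state in cylinders}, combined with the continuity of the coupling distribution (which rules out the exceptional case of a zero boundary sum), shows that every cluster is infinite.

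Next I will pick some $v\in D$ and let $K_D$ denote its (infinite) cluster. By Claim \ref{clm:choked planar have fixed cycles} there is, almost surely, an infinite nested sequence $\{\mathcal{C}_n\}$ of fixed cycles around $v$, with $\mathcal{C}_{n+1}$ surrounding $\mathcal{C}_n$. Because every edge of $\mathcal{C}_n$ is fixed, the products of endpoint spins along $\mathcal{C}_n$ are determined in any GSP, so $\sigma^1$ and $\sigma^2$ either agree on all of $V(\mathcal{C}_n)$ or disagree on all of it; in particular $V(\mathcal{C}_n)$ lies entirely in a single cluster. Planarity of $G$ now enters: any edge joining the interior of $\mathcal{C}_n$ to its exterior must share a vertex with $\mathcal{C}_n$, so if $\mathcal{C}_n$ were to lie in a cluster different from $K_D$, the connected set $K_D$ (which contains $v$ inside $\mathcal{C}_n$) would be trapped in that finite interior, contradicting $|K_D|=\infty$. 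Therefore $V(\mathcal{C}_n)\subseteq K_D$ for every $n$.

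The final step, which I expect to be the main obstacle, is to show that every vertex $u\in V$ actually lies in the interior of some $\mathcal{C}_n$. Once this is done, the cluster of $u$ either meets $V(\mathcal{C}_n)\subseteq K_D$ (and so equals $K_D$) or is trapped in the finite interior of $\mathcal{C}_n$, the latter contradicting infinity; either way $u\in K_D$, so $A=\emptyset$, the desired contradiction. To carry out this step I will pick a finite graph path $v=v_0,v_1,\dots,v_k=u$ in the connected graph $G$ and argue by contradiction: if $u$ were outside every $\mathcal{C}_n$, then for each $n$ the path would travel from the interior of $\mathcal{C}_n$ (containing $v_0=v$) to its exterior, so planarity forces some vertex $v_{j(n)}$ of the path to sit on $V(\mathcal{C}_n)$. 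However, strict nesting gives $\mathcal{C}_n\subseteq\mathrm{int}(\mathcal{C}_m)$ for $n<m$ and hence $\mathcal{C}_n\cap\mathcal{C}_m=\emptyset$, so each vertex of the path can lie on at most one cycle, in contradiction with the existence of infinitely many $n$. This closes the argument and yields uniqueness of the GSP.
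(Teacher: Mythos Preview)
Your argument is correct and complete. It differs from the paper's route in an interesting way. Both proofs begin identically---two GSP representatives, the agreement/disagreement partition, every cluster infinite, fixed cycles sitting in a single cluster---but diverge at the contradiction step. The paper passes to the dual graph: the boundary between clusters consists of dual edges which (since no closed dual cycle can occur) must organize into bi-infinite \emph{domain walls}; any such wall is then trapped inside a fixed cycle surrounding one of its edges, which is impossible since the wall is infinite and cannot cross the fixed cycle. You stay in the primal graph throughout: you show that the nested fixed cycles around $v$ all lie in $K_D$, then prove the pleasant topological fact that these cycles eventually engulf every vertex of $G$ (via the finite-path/disjoint-cycles pigeonhole), and conclude that every infinite cluster must hit $V(\mathcal{C}_n)\subseteq K_D$ and hence equal $K_D$. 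Your approach is more elementary---no dual graph, no domain walls---and makes explicit the exhaustion property of the nested cycles, which the paper leaves implicit. The paper's approach, on the other hand, introduces the domain-wall picture that recurs later (e.g.\ in Proposition~\ref{prop:walls between weak limits in dynamics}).

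One small remark: your proof uses a path in $G$ from $v$ to $u$, so you are tacitly assuming $G$ is connected. This is harmless (the GSP question decomposes over components, and without connectedness the theorem as stated can fail anyway), but worth noting.
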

\begin{proof}
Again, there is at least one, from compactness. Assuming there were at least two, consider as in Theorem \ref{thm:one ground state in cylinders} one representative for each  GSP and the division of the vertices to those which have the same spin under the two representatives and those which have opposite spins. Again we divide into connected clusters, and again each connected cluster must be infinite. Again no fixed edge can connect two different connected clusters. Due to planarity we may consider the dual edges to those which connect different clusters. These must divide the plane into several connected regions, each of which should contain an infinite number of dual faces - since it must contain an infinite number of vertices, and hence no subset of dual edges can form contain a closed cycle. Hence, any dual edge between two clusters lays on a bi-infinite simple path made of dual edges that are between two different clusters. We call such a path in the dual graph a $\emph{domain-wall}$. Note that two domain walls may intersect (the domain walls are not necessarily disjoint paths). Let $v$ be a vertex on an edge whose dual edge, $e^{*}$ is in a domain wall (i.e. or the two representatives agree on $v$'s spin but disagree on the spin of one of its neighbors, or the opposite).
This dual edge, $e^{*}$, is surrounded by some fixed cycle, with probability $1$, by the previous lemma. Thus, any domain wall which contains this dual edge must either intersect the choked cycle or stay only in the finite part surrounded by it. It cannot intersect, since the intersection point would lay on a fixed edge, whose dual is in a domain wall, but this is impossible. The other option is impossible again, as this domain wall must be an infinite simple path, but there are only finitely many points in the region that is bounded by the fixed cycle.
\end{proof}

\subsection{Regular Trees Have infinitely Many GSPs}\label{sec:inf GSPs in trees}
$~$\\
This section explores the nature of spin-glasses over trees. For convenience, we prove our results only for regular trees, although they can be trivially extended to much wider families of trees.
We show that, under our usual model, regular trees of degree at least $3$ have uncountably many GSPs. This is surprising as it is 'clear' there should be only one GSP, the one where every edge has a positive value (in trees, since there are no loops, this can always be done). The only obstruction is that GSP need not to be a global minimizer of energy, but only a local minimizer - it should be 'better' then all the configurations which differ from it by finitely many spin choices. Our main technique is percolation theory for trees.

\begin{theorem}\label{thm:trees have infinitely many GSPs}
Under the model described in Subsection \ref{sec:Prelim-spinglass}, in the d-regular tree, for $d\geq 3$, for almost every coupling there are infinitely many GSPs.
\end{theorem}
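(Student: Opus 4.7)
The plan is to construct infinitely many GSPs as single-edge modifications of the ``frustration-free'' GSP. Because the tree is acyclic one can always build a GSP $\sigma^{0}$ in which every edge is satisfied (root the tree anywhere, pick a root spin, and propagate spins so that $J_{xy}\sigma^{0}_{x}\sigma^{0}_{y}>0$ along each parent-child edge). For any edge $e=(u,v)$, removing $e$ splits the tree into two infinite components $A_{e}\ni u$ and $B_{e}\ni v$; let $\sigma^{e}$ denote the configuration obtained from $\sigma^{0}$ by flipping every spin in $B_{e}$, so that every edge is satisfied in $\sigma^{e}$ except $e$ itself. Call $e$ \emph{flippable} if $\sigma^{e}$ is a GSP. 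Distinct flippable edges yield distinct GSPs, since the partition $\{A_{e},B_{e}\}$ of the vertex set uniquely determines $e$; so it suffices to show that almost every coupling admits infinitely many flippable edges.

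Writing out the GSP condition for $\sigma^{e}$ and using that every edge other than $e$ has the same (positive) value in $\sigma^{0}$ and in $\sigma^{e}$, one sees after decomposing a finite flip into its connected components that $e$ is flippable if and only if $|J_{e}|\leq X_{u}^{A}\wedge X_{v}^{B}$, where
\[
X_{u}^{A}:=\inf_{F}\sum_{e'\in\partial F\cap A_{e}}|J_{e'}|,
\]
the infimum ranging over finite connected subtrees $F\subseteq A_{e}$ containing $u$. The main claim, and the only place where the hypothesis $d\geq 3$ really bites, is that $X_{u}^{A}>0$ almost surely. In $A_{e}$ every vertex has $d-1\geq 2$ children, so a connected subtree of size $n$ rooted at $u$ has $|\partial F\cap A_{e}|\geq (d-2)n+1$ boundary edges, while the number of such subtrees is bounded by $C^{n}$ for some $C=C(d)$. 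Since $\sum_{e'\in\partial F}|J_{e'}|<\varepsilon$ forces $|J_{e'}|<\varepsilon$ on every boundary coupling, a union bound yields
\[
P(X_{u}^{A}<\varepsilon)\;\leq\;\sum_{n\geq 1}C^{n}\,P(|J|<\varepsilon)^{(d-2)n+1},
\]
which tends to $0$ as $\varepsilon\downarrow 0$ because $P(|J|<\varepsilon)\to 0$ and $d-2\geq 1$. Since $J_{e}$, the couplings in $A_{e}$, and those in $B_{e}$ are mutually independent, it follows that $P(e\text{ is flippable})\geq P(|J_{e}|<\varepsilon)\,P(X_{u}^{A}\geq\varepsilon)^{2}>0$ for all small enough $\varepsilon$.

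To upgrade this positive-probability event to almost-sure infinitude, I would localize. Let $X_{u}^{A,R}$ be the same infimum taken only over subtrees of radius $\leq R$ around $u$, and call $e$ \emph{$R$-locally flippable} when $|J_{e}|<\varepsilon\leq X_{u}^{A,R}\wedge X_{v}^{B,R}$. The tail bound above is monotone in $R$, so for $R$ large enough this event still has positive probability; moreover it depends only on couplings within graph-distance $R+1$ of $e$. Choosing an infinite sequence of edges pairwise at graph-distance $>2R+2$ (trivial for $d\geq 3$) yields independent Bernoulli trials with a common positive success probability, and the second Borel--Cantelli lemma forces infinitely many successes; since $R$-local flippability implies flippability, the theorem follows. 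The main technical obstacle is the tail estimate $P(X_{u}^{A}<\varepsilon)\to 0$; note that this step collapses at $d=2$, where every subtree of $A_{e}$ has boundary of size $1$ and hence $X_{u}^{A}=\inf_{e'}|J_{e'}|=0$ almost surely, in agreement with the uniqueness statement of Theorem \ref{thm:one ground state in cylinders}.
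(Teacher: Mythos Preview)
Your overall strategy coincides with the paper's: perturb the all-satisfied configuration $\sigma^0$ by flipping one side of a single edge $e$, and show that the result is a ground state for a positive-probability set of edges. Your exact characterization $|J_e|\le X_u^A\wedge X_v^B$ is correct, and your union bound over rooted subtrees is a valid (and somewhat sharper) alternative to the paper's percolation criterion, which only records the sufficient condition that both endpoints of $e$ lie in infinite clusters of the $\{|J|\ge |J_e|\}$ bond percolation on $T\setminus\{e\}$.

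The gap is in the Borel--Cantelli step. Your $X_u^{A,R}$ is an infimum over \emph{fewer} subtrees than $X_u^A$, so $X_u^{A,R}\ge X_u^A$, and the implication ``$R$-local flippability $\Rightarrow$ flippability'' goes the wrong way: knowing that every small subtree has large boundary sum says nothing about a large subtree whose boundary lies far from $e$ and happens to consist of tiny couplings. Worse, no event depending on only finitely many couplings can imply flippability, since $\{X_u^A\ge\varepsilon\}=\bigcap_{F}\{\sum_{\partial F}|J|\ge\varepsilon\}$ is an intersection over infinitely many $F$ and no finite subfamily controls the rest. So Borel--Cantelli on disjoint balls cannot close the argument as written.

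The fix, which the paper also leaves implicit when it asserts ``there are infinitely many edges where the described event happens'', is ergodicity: the i.i.d.\ coupling law is ergodic under the automorphism group of the $d$-regular tree, and the number of flippable edges is an automorphism-invariant extended-integer-valued random variable with infinite expectation, hence a.s.\ equal to $\infty$. If you prefer to stay closer to percolation, fix $h$ with $P(|J|\ge h)>1/(d-1)$; the event ``$|J_e|<h$ and both endpoints of $e$ lie in infinite $\{|J|\ge h\}$-clusters on their respective sides'' implies flippability, has positive probability, and the same ergodicity argument (or a backbone-plus-independent-subtrees Borel--Cantelli along an infinite open ray) yields infinitely many such edges.
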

\begin{proof}
We begin with an ad-hoc definition. We say that an edge $f$ in an infinite graph is a $\emph{bridge}$, if after deleting it from the graph, but leaving its vertices in the graph, the graph is divided into two connected infinite subgraphs.

Let $T$ be the regular tree, and let $\mathcal{C}$ be one of the two configurations which create the trivial GSP, the one where all the edges have positive value. We do the following:

Take some light edge $e$ (the coupling has very small absolute value) with value $h$. Flip all the vertices in one connected cluster of $T\setminus \{e\}$. The only edge which changed its value was $e$. Because all the other edges have positive value, the new configuration will not be a ground state iff there are some edges $\{e_1,...,e_n\}$ whose sum of values is smaller than $h$, and such that if we delete only them from the tree, we have no finite connected cluster, but if we remove both them and $e$ then there is (exactly one) finite connected cluster. Flipping the finite cluster they bound, reduces the energy.

Because all the other edges have positive values if the sum of the values of $e_i$ is smaller than h, each of them should be smaller than $h$ as well. Now, for $h$ small enough, percolation reasons show that there is a positive probability that the cluster which contains $e$ and only the edges of value at least $h$ is infinite and $e$ is a bridge in it. Indeed, if the degree of vertices in the tree is $d$, it is a common knowledge (e.g. \cite{grimmett}) that a percolation on the $d$-regular tree, with $p>1/(d-1)$, where $p$ is the probability we do not erase an edge from the graph, has infinitely many infinite connected clusters, and in particular, any vertex or edge has positive probability to be contained in one of them. Moreover, for any edge there is a positive probability to be a bridge in the cluster. For $h$ small enough, the probability an edge coupling has absolute value higher than $h$ is as high as we wish, and from the above it is clear there is a positive probability that in the graph obtained from deleting from the tree all the edges of interaction smaller than $h$ (except, maybe, the edge $e$), $e$ is a bridge in an infinite connected cluster.

In this case, sets like $\{e_1,...,e_n\}$ with the property described above do not exist. Hence, if the described cluster happens to be infinite, $e$ happens to be a bridge in it, and we flip all the spins in one side of the edge $e$, we get a new ground state.

There are infinitely many edges where the described event happens, and hence there are infinitely many GSPs (which are clearly different).
\end{proof}

With similar techniques one can even show the claim below.
\begin{theorem}\label{thm:trees have uncountably many different GSPs}
Under the model described in Subsection \ref{sec:Prelim-spinglass}, in the d-regular tree, for $d\geq 3$, for almost every coupling there are uncountably many GSPs.
\end{theorem}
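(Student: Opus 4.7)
The plan is to strengthen the construction of Theorem~\ref{thm:trees have infinitely many GSPs} to produce a countably infinite family of ``mutually independent'' flippable edges $\{e_i\}_{i\in\BN}$ in pairwise disjoint regions of the tree, so that \emph{every} subset $S\subseteq\BN$ gives a valid GSP by flipping one side of $e_i$ for each $i\in S$, starting from the trivial all-positive-value GSP. Since different subsets yield different configurations that are not related by a global flip, this will produce $2^{\aleph_0}$ distinct GSPs.

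First I would fix a root $o$ and a ray $\rho=(v_0,v_1,\ldots)$ with $v_0=o$; at each $v_n$ the $d$-regular tree has $d-1\geq 2$ side branches other than those along $\rho$. Choose one such branch $B_n$, attached to $v_n$ via an edge $f_n=\{v_n,u_n\}$, so that the branches $\{B_n\}$ are pairwise disjoint infinite subtrees. I then claim: for a fixed small $h>0$ with $\Pr(|J|\geq h)>1/(d-1)$ (so the heavy subgraph is supercritically percolating on a $(d-1)$-ary tree), an argument parallel to the proof of the preceding theorem yields, with probability at least some $c=c(h)>0$ independent of $n$, an edge $e_n=\{x_n,y_n\}\in B_n$ (with $y_n$ a descendant of $x_n$) such that $|J_{e_n}|<2^{-n-1}$ and both infinite components of $B_n\setminus\{e_n\}$ (the subtree $T_{e_n}^-$ rooted at $y_n$, and its complement $B_n\setminus T_{e_n}^-$ rooted at $u_n$) have min-weight-cut from the root to infinity at least $1$. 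Because the $B_n$ are disjoint and couplings are i.i.d., these events are mutually independent across $n$, so Borel--Cantelli yields infinitely many good $n$; after relabeling, the sequence $\{e_i\}_{i\in\BN}$ satisfies $\sum_i|J_{e_i}|<1$.

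For $S\subseteq\BN$ define $\sigma_S$ by flipping the spins of $T_{e_i}^-$ for each $i\in S$ starting from the trivial GSP. Since $T_{e_i}^-\subseteq B_i$ and the branches are disjoint, flipping $T_{e_i}^-$ changes only the value of $e_i$, so in $\sigma_S$ the edges of negative value are exactly $\{e_i:i\in S\}$. To show $\sigma_S$ is a GSP I would verify the $S$-independent inequality $\sum_{e\in\partial F\setminus\{e_j\}_j}|J_e|\geq\sum_{e_j\in\partial F}|J_{e_j}|$ for every finite connected subtree $F$, which is equivalent to ground-state stability under all finite flips. Let $\{e_{j_1},\ldots,e_{j_k}\}=\partial F\cap\{e_j\}_j$. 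If $F\subseteq T_{e_{j_1}}^-$ for some $j_1$, then $F$ lies in the single branch $B_{j_1}$ and this forces $k=1$, so single-flip safety (guaranteed by the min-cut condition inside $T_{e_{j_1}}^-$) suffices. Otherwise $F$ lies on the outer side of each $e_{j_\ell}$ (contains $x_{j_\ell}$ but not $y_{j_\ell}$); connectivity forces $v_{j_\ell}\in F$, and $F\cap B_{j_\ell}$ is a finite subtree containing $u_{j_\ell}$ and $x_{j_\ell}$ but not $y_{j_\ell}$, whose boundary inside $B_{j_\ell}$ (minus $e_{j_\ell}$) is a cut separating $u_{j_\ell}$ from infinity in $B_{j_\ell}\setminus T_{e_{j_\ell}}^-$ and hence has weight $\geq 1$. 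Disjointness of the branches makes these contributions to $\partial F\setminus\{e_j\}_j$ disjoint across $\ell$, summing to $\geq k$, while the right-hand side is bounded by $\sum_j|J_{e_j}|<1\leq k$. Distinct subsets give distinct GSPs: $\sigma_{S_1}\neq\sigma_{S_2}$ for $S_1\neq S_2$ because the sign of each $e_i$-value records whether $i\in S$; and $\sigma_{S_1}\neq-\sigma_{S_2}$ because the flipped vertex sets are contained in the small branches $B_i$ and never include $o$, so the spin at $o$ is identical in all $\sigma_S$.

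The main technical obstacle is the uniform lower bound $c>0$ in the existence-of-good-edge claim: one must show, via a percolation/branching-process argument inside each $(d-1)$-ary branch, that a light edge $e_n$ satisfying both side-min-cut conditions exists with probability bounded away from zero, independent of $n$. This follows from $d\geq 3$ making the heavy subgraph of $B_n$ supercritical for small $h$, together with the positive probability of any fixed min-cut lower bound in a supercritical branching tree --- exactly the input already used in the preceding theorem.
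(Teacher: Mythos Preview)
The paper does not actually prove this theorem---it only states that ``similar techniques'' to Theorem~\ref{thm:trees have infinitely many GSPs} suffice---so your proposal is precisely the kind of argument being gestured at, and the overall architecture (disjoint branches $B_n$ off a ray, one flippable light edge per branch, $2^{\aleph_0}$ subsets yielding distinct GSPs) is correct and natural.

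There is, however, a real gap in the GSP verification. Your min-cut condition on the \emph{outer} component $B_n\setminus T_{e_n}^-$ is stated from the branch root $u_n$, but this does not cover the case $k=1$ with $x_{j_1}\in F$, $y_{j_1}\notin F$, $u_{j_1}\notin F$ (for instance $F=\{x_{j_1}\}$). In that situation $\partial F\setminus\{e_{j_1}\}$ separates $x_{j_1}$ from its complement but is \emph{not} a cut separating $u_{j_1}$ from infinity (since $u_{j_1}$ still reaches infinity through its other subtrees), so your hypothesis gives no lower bound. The fix is to state the outer min-cut from $x_n$ rather than $u_n$; since $x_{j_\ell}\in F$ in all of Case~2, this single change handles $u_{j_\ell}\in F$ and $u_{j_\ell}\notin F$ simultaneously. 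Equivalently, and closer to the previous theorem, simply require that $e_n$ be a \emph{bridge} in the subgraph of $B_n$ formed by edges with $|J|\ge h$: then each $x_{j_\ell}\in F$ lies on an infinite heavy path inside $B_{j_\ell}\setminus\{e_{j_\ell}\}$, which must exit $F$ through a heavy edge of $B_{j_\ell}$, and the $k$ exit edges are disjoint across branches.

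A smaller point: your uniform lower bound $c>0$ on the probability of finding $e_n$ with $|J_{e_n}|<2^{-n-1}$ is asserted but not justified, and is delicate because the lightness threshold shrinks with $n$. With the bridge formulation above you can take a \emph{fixed} threshold $\epsilon<h$ instead: the GSP inequality then reads ``$k$ heavy exit edges of total weight $\ge kh > k\epsilon \ge$ sum of the $k$ light $e_{j_\ell}$'s on $\partial F$'', so no summability of $\sum_j|J_{e_j}|$ is needed, and the per-branch existence probability is manifestly a positive constant independent of $n$.
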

This is not surprising, as there is not much difference between Ising spin-glass on trees and ferromagnetic Ising on trees. And even in $\mathbb{Z}^{d}$-lattices,
in the Ising ferromagnetic model, there are infinitely many GSPs. On the other hand, we show below an important difference between the ferromagnetic
Ising on trees and on lattices: There is a translation-invariant measure, supported on uncountably infinitely many GSPs, for trees, while for lattices (and more generally for Cayley graphs of amenable groups)
in any translation invariant measure in supported on one single GSP. The latter result will appear again in Section \ref{sec:dynamics-loops} when we discuss the Loop dynamics.

We shall use the following lemmas a couple of times in this work:
\begin{lemma}\label{lem:very biased perc cofinite}
Let $T$ be a $d$-regular tree, for some $d\geq 3$. Then there is some $1>p>0$ s.t. in the Bernoulli($p$) vertex (edge) percolation the complement in $T$ of the union of infinite clusters is made of finite clusters only.
\end{lemma}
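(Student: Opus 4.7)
My plan, given $d \ge 3$, is to take $p$ close enough to $1$ and argue that the complement $A := T \setminus \bigcup\{\text{infinite open clusters}\}$ almost surely has only finite connected components. Since $T$ is locally finite, by K\"onig's lemma it suffices to rule out an infinite self-avoiding ray from a fixed root $v$ lying in $A$. I therefore intend to bound the expected number $E[X_n]$ of length-$n$ self-avoiding paths from $v$ that lie entirely in $A$, to show that $E[X_n]$ decays geometrically, and to invoke the first Borel--Cantelli lemma.

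\textbf{The core estimate.} For a fixed self-avoiding path $\pi = (v_0, \ldots, v_n)$, I would condition on the open/closed labels along $\pi$. A closed vertex is automatically in $A$; an open $v_i$ lies in $A$ iff its open cluster in $T$ is finite. The open vertices along $\pi$ decompose into maximal runs, and by maximality the vertices immediately bordering each run (when they exist) are closed. Consequently, the open cluster of a run of length $k$ extends only into the (at least) $k(d-2)$ pairwise disjoint subtrees rooted at the off-path neighbors of the run. Each such subtree is a rooted $(d-1)$-ary tree; by independence across these subtrees, the cluster of the run is finite with conditional probability at most $\beta^{k(d-2)}$, where $\beta = \beta(p) \in [0,1]$ is the ``no infinite open path from root'' probability in the $(d-1)$-ary tree, i.e.\ the smallest nonnegative solution of
\[
\beta = (1-p) + p\,\beta^{d-1}.
\]
Multiplying over the runs and averaging over the labels of $\pi$ gives
\[
P(\pi \subseteq A) \le \bigl((1-p) + p\,\beta^{d-2}\bigr)^{n+1} =: \lambda(p)^{n+1}.
\]

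\textbf{Finishing and main obstacle.} At $p = 1$ the fixed-point equation reduces to $\beta = \beta^{d-1}$, whose smallest nonnegative root is $\beta = 0$ since $d \ge 3$; by monotonicity and continuity of the selected fixed point, $\beta(p) \to 0$ and hence $\lambda(p) \to 0$ as $p \to 1$. I would then choose $p < 1$ so that $(d-1)\lambda(p) < 1$. Since the number of length-$n$ self-avoiding paths from $v$ is at most $d(d-1)^{n-1}$, this gives $E[X_n] \le d(d-1)^{n-1}\lambda(p)^{n+1}$, which is summable in $n$; by first Borel--Cantelli, $X_n = 0$ for all large $n$ almost surely, so no infinite ray in $A$ passes through $v$, and a countable union bound closes the vertex case. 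The edge-percolation case is parallel, replacing $\beta$ by the solution $\eta$ of $\eta = ((1-p) + p\,\eta)^{d-1}$. The delicate step is the $\beta^{k(d-2)}$ conditional estimate: it relies on the fact that the off-path subtrees hanging off a single run of $\pi$ are pairwise disjoint, which is where the tree structure of $T$ is genuinely used; any careless grouping would spoil the independence and hence the geometric decay.
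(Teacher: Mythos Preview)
Your argument is correct and takes a genuinely different route from the paper's. The paper proceeds by a cluster-level exploration: starting from a vertex in a finite open (``black'') cluster, it bounds the expected size of that cluster, then the expected total size of the adjacent closed (``white'') clusters, then the expected number of further \emph{finite} black clusters bordering those; for $p$ close to $1$ this last quantity drops below $1$, so the complement of the union of infinite clusters is stochastically dominated by a subcritical Galton--Watson tree whose nodes are clusters rather than vertices. You instead fix a self-avoiding path, bound directly---via the extinction probability $\beta(p)$ of the off-path $(d-1)$-ary branching---the probability that every vertex on the path misses all infinite clusters, and then run a first-moment count over paths.

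Your method is more elementary and gives an explicit quantitative threshold through $\lambda(p)=(1-p)+p\,\beta(p)^{d-2}$; the paper's cluster-domination picture is more structural and adapts more readily to the ``majority recoloring'' variant in the subsequent lemma, where no clean per-vertex factorization like your $Y_i$'s is available. One small cosmetic point: the independence you need is that the off-path subtrees hanging from \emph{all} vertices of $\pi$---not just those within a single run---are pairwise disjoint; this is what makes the product over runs legitimate and then collapses the average over labels into $\lambda(p)^{n+1}$. You clearly have this in mind, but the sentence flagging it mentions only ``a single run''.
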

This is of course much stronger than saying that there are infinite clusters in the percolation.
\begin{proof}
$\mathbf{Sketch.}$ We prove for site percolation, minor changes apply for bond percolation as well. We consider the percolation as coloring. A vertex is colored black with probability $p$ and otherwise white. The proof is based on the following simple facts that we do not prove:
\\ I. Given that a black vertex is in a finite cluster or that it is in a finite cluster and has (at least) one white neighbor, the expected size of this cluster is bounded by $1+\alpha (p)$, where $\alpha(p)\rightarrow 0$ as $p\rightarrow 1$ is a nonnegative, continuous function of $p$.
\\ II. Given that a vertex is white, the expected size of the white cluster containing it is (for $p$ close enough to $1$) $1+\beta (p)$, where again $\beta \rightarrow 0$ as $p \rightarrow 1$.
\\ III. Given that a black vertex $v$ has one white neighbor, the probability that the black component which contains it is finite is $\gamma (p)$, where $\gamma$ is a nonnegative function of $p$ which tends to $0$ as $p$ tends to $1$.

Assume for example that a given vertex $v$ is black and lays in a black finite component. Then the size of this component is expected to be at most $1+\alpha$ (we do not write $p$ as it is fixed). All the neighboring clusters are white. There are at most (and actually, less than) $d ( 1+\alpha (p) )$ such clusters. The expected total sum of their sizes is no more than $d ( 1+\alpha (p) )(1 + \beta )$. And the last number is also a bound for the number of neighbors all these clusters have (together). All these neighbors are black, and hence, the expected number of finite clusters among these is no more than $d ( 1+\alpha (p) )(1 + \beta ) \gamma$. This number tends to $0$ as $p \rightarrow 1$, and hence, starting at some point, it is less then $1$. If we now continue the same analysis with any of the black neighbors which are parts of finite black components, we get a similar bound for the number of finite black components which are neighbors of the white components which are neighbors of this new black vertex (and we do not count clusters we have already reached), and we can continue like this ad infinitum. But what we have shown is that the complement of the infinite clusters is stochastically dominated by a sub-critical Galton-Watson process, and hence finite a.s. Only slight changes are needed if we start with a white vertex instead of black.
\end{proof}
\begin{lemma}\label{lem: very biased - even majority cofinite}
Consider a Bernoulli $p$ site percolation on a $d$-regular tree $T$ as a coloring s.t. with probability $p$ a vertex is colored black and otherwise white (independently of the others). Color by red the vertices that the majority of their neighbors and them are colored black. Then if $p$ is close enough to $1$, the graph which is the result of removing the red vertices from $T$ (and edges which touch them) has only connected clusters of finite size.
\end{lemma}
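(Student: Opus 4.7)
The approach closely mirrors Lemma~\ref{lem:very biased perc cofinite}: I will bound the tail of the non-red cluster containing a fixed vertex and conclude by a countable union. Write $N[v] := \{v\} \cup N(v)$ for the closed neighborhood of $v$ in $T$, so $|N[v]| = d+1$. A vertex $v$ is non-red precisely when at least $\lceil (d+1)/2 \rceil$ of the members of $N[v]$ are white; since colors are i.i.d.\ and a vertex is white with probability $1-p$, this is a binomial tail, and the marginal
\begin{equation*}
q(p) := \mathbb{P}(v \text{ is non-red})
\end{equation*}
tends to $0$ as $p \to 1$.

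The key structural observation is that if $u, v \in T$ satisfy $\delta_T(u,v) \geq 3$, then $N[u] \cap N[v] = \emptyset$: any common element $w$ would give $\delta_T(u,v) \leq \delta_T(u,w) + \delta_T(w,v) \leq 2$. Consequently the events $\{u \text{ non-red}\}$ and $\{v \text{ non-red}\}$ depend on disjoint sets of coin flips and are independent. Now fix $v_0 \in T$. Any simple path $v_0, v_1, \ldots, v_n$ in the tree is a geodesic, so $\delta_T(v_i, v_j) = |i-j|$, and the sub-collection $\{v_0, v_3, v_6, \ldots, v_{3\lfloor n/3 \rfloor}\}$ consists of $\lfloor n/3 \rfloor + 1$ vertices at pairwise tree-distance a positive multiple of $3$; the non-red events along this sub-collection are jointly independent, giving
\begin{equation*}
\mathbb{P}(\text{every } v_i \text{ is non-red}) \leq q(p)^{\lfloor n/3 \rfloor + 1}.
\end{equation*}

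Since the number of simple paths of length $n$ starting at $v_0$ in the $d$-regular tree equals $d(d-1)^{n-1}$, the expected number $N_n$ of such paths all of whose vertices are non-red satisfies
\begin{equation*}
\mathbb{E}[N_n] \leq d(d-1)^{n-1} q(p)^{\lfloor n/3 \rfloor + 1}.
\end{equation*}
For $p$ close enough to $1$ that $(d-1) q(p)^{1/3} < 1$, this bound is summable in $n$; by Markov's inequality and the Borel--Cantelli lemma, almost surely $N_n = 0$ for all large $n$. If the non-red cluster of $v_0$ were infinite, then, being a locally finite subtree of $T$, it would contain an infinite ray emanating from $v_0$ (K\"onig's lemma), forcing $N_n \geq 1$ for every $n$, a contradiction. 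Hence the non-red cluster of $v_0$ is almost surely finite, and a countable union over $v_0 \in T$ yields the lemma.

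The main thing to get right is pinning down the correct independence scale (tree-distance $\geq 3$, coming from the radius-$1$ dependence of the non-red event) and ruling out hidden correlations; once that is done, the argument reduces to the standard first-moment path-counting bound for percolation on trees, and the quantitative input $q(p) \to 0$ as $p \to 1$ does the rest.
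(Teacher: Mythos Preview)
Your argument is correct. The marginal bound $q(p)\to 0$, the disjointness of closed neighborhoods at tree-distance $\geq 3$ (which indeed gives \emph{joint} independence along the thinned path, since the relevant coin sets are pairwise disjoint), the first-moment path count, and the K\"onig conclusion all go through as written.

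The paper takes a different, less hands-on route: it invokes the Liggett--Schonmann--Stacey domination theorem to assert that the red field, being a $1$-dependent recoloring of i.i.d.\ Bernoulli$(p)$ variables, stochastically dominates a product Bernoulli$(q)$ field with $q\to 1$ as $p\to 1$; the conclusion then follows by coupling and appealing to the preceding lemma (or simply to subcriticality of Bernoulli$(1-q)$ on the tree). Your approach is more elementary and fully self-contained --- it avoids the black-box domination result by exploiting the tree geometry directly through path counting --- at the cost of a slightly longer computation. The paper's approach, in contrast, is a two-line reduction once one is willing to quote Liggett--Schonmann--Stacey, and it would transfer verbatim to other finite-range ``recoloring'' rules without redoing the estimate.
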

\begin{proof}
$\mathbf{Sketch.}$ In \cite{LiggettStaceySchonmann} it is shown that if we color vertices in two colors according to the product measure of Bernoulli(p) distributions and then recolor according to vertices which the majority of their neighbors and them is in some given color, then for any $1>q>0$, there is a $1>p>0$ s.t. the resulting recoloring, considered as site percolation, stochastically dominates the Bernoulli $q$ site percolation (Actually they show there something much more general then this majority process).
Combining this with the above lemma finishes the argument.
\end{proof}

\begin{theorem}\label{thm:translation invariant GSPs in trees}
Let $T$ be a $d$-regular tree for $d\geq 3$, and consider the model of Subsection \ref{sec:Prelim-spinglass}, where the distribution of the interactions is an i.i.d. continuous  probability measure
whose support is $\mathbb{R}_+$. Then there exist a translation-invariant measure $\mu$ supported on uncountably many GSPs.
\end{theorem}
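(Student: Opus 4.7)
The plan is to build $\mu$ by assigning uniform $\pm 1$ signs independently to the infinite components of a suitable ``heavy'' substructure of $T$ and filling in the finite complement by local energy minimization. Fix a small $\epsilon > 0$ and call an edge $e$ \emph{heavy} if $J_e > \epsilon$, otherwise \emph{light}. With $p = \mathbb{P}(J_e > \epsilon)$ close to $1$, Lemma~\ref{lem:very biased perc cofinite} gives that the vertices failing to lie in an infinite heavy cluster form only finite clusters in $T$. I would further refine this by Lemma~\ref{lem: very biased - even majority cofinite}: restricting to \emph{stable} vertices (those with a strict majority of heavy incident edges) still leaves a cofinite set whose complement has only finite connected components.

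Let $H_1, H_2, \ldots$ be the infinite connected components of the subgraph spanned by the stable vertices and the heavy edges between them, and let $F_1, F_2, \ldots$ be the finite pieces of the complement of $\bigcup_i V(H_i)$ in $T$. Define a random spin configuration $\sigma$ in two stages: first draw i.i.d.\ fair signs $s_i \in \{\pm 1\}$ and set $\sigma_v = s_i$ for every $v \in H_i$; then, for each $F_j$, set $\sigma$ on $F_j$ to be the (a.s.\ unique) minimizer of the restricted Hamiltonian on $F_j$, given the boundary spins already fixed on the adjacent $H_i$'s. The joint law of $\mathcal{J}$ and $\sigma$ is invariant under $\mathrm{Aut}(T)$ by construction, and it is supported on uncountably many spin configurations, since a.s.\ there are infinitely many clusters $H_i$ and the signs $s_i$ are mutually independent.

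The main task is then to verify that $\sigma$ is almost surely a GSP, i.e.\ that
\[
\tfrac{1}{2}\Delta H(A) \;=\; \sum_{e = \{x,y\} \in \partial A} J_e\, \sigma_x \sigma_y \;\ge\; 0
\]
for every finite vertex set $A$. I would split $A = (A \cap H) \sqcup (A \cap F)$ with $H = \bigcup_i V(H_i)$ and $F = \bigcup_j V(F_j)$. Flips contained entirely in a single $F_j$ are ruled out by the local minimization in the construction. Otherwise $A$ meets $H$, and within each $H_i$ the boundary of $A \cap V(H_i)$ consists of heavy edges that are currently satisfied (weight $> \epsilon$), whereas every light edge on $\partial A$ has weight $< \epsilon$. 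The strict-majority property of stable vertices, combined with the subtree boundary inequality $|\partial A| \ge (d-2)|A| + 2$ applied inside the heavy subgraph, will show that the heavy satisfied portion of $\partial A$ dominates in total weight, giving $\Delta H(A) \ge 0$.

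The hard part will be this last accounting step for flips $A$ that simultaneously straddle several $H_i$ and cut through several residual pieces $F_j$; the worry is that many light boundary edges accumulate into a negative contribution that a small number of heavy edges cannot compensate. Here both Lemmas~\ref{lem:very biased perc cofinite} and~\ref{lem: very biased - even majority cofinite} are used together --- the second guaranteeing that every vertex of $H$ has a strict heavy majority locally, so heavy edges dominate light ones edge-by-edge around each stable vertex, and the first ensuring that any intervening finite piece $F_j$ has bounded diameter and therefore bounded total light weight, so a long flip cannot exploit an unbounded light corridor.
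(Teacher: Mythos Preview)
Your construction has a genuine gap in the ground-state verification, and the issue is exactly the ``hard part'' you flag at the end. The strict-majority condition at each stable vertex is a \emph{local} statement about the edges incident to that vertex, but the ground-state inequality concerns the boundary $\partial A$, and interior vertices of $A$ contribute only their \emph{outgoing} edges to $\partial A$. Concretely: take $d=3$ and suppose (an event of positive probability for any $\epsilon$) that some $H_i$ contains a heavy path $v_0,v_1,\dots,v_n$ where each interior $v_k$ has its third edge light. Let $A=\{v_1,\dots,v_{n-1}\}$. Then $\partial A$ consists of exactly two heavy edges (the path ends) and $n-1$ light edges, one per interior vertex. Each light edge connects to some other $H_j$ or to a finite piece whose spin is governed by other $H_j$'s; since the signs $s_j$ are independent of $s_i$, with positive probability a large fraction of these light edges are unsatisfied. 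Their combined negative contribution (up to $(n-1)\epsilon$) can then exceed the two heavy satisfied contributions (which, with positive probability, are only slightly above $\epsilon$), so $\sigma$ is not a ground state. Your isoperimetric inequality $|\partial A|\ge(d-2)|A|+2$ holds in $T$, but the heavy subgraph is not $d$-regular, so it does not tell you that heavy boundary edges outnumber light ones.

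The paper avoids this by randomizing differently. Rather than assigning independent signs to whole heavy components (which forces \emph{every} light edge between distinct components to be unsatisfied with probability $1/2$, regardless of how dangerous that edge is), the paper singles out those light edges $e$ with the property that \emph{no} finite connected $A$ with $e\in\partial A$ has half or more of $\partial A$ light. A first-moment/Borel--Cantelli count shows that for $\epsilon$ small enough such edges exist (indeed have positive density), and the construction flips an independent coin only for the satisfaction status of \emph{those} edges. This guarantees by design that any finite $A$ whose boundary contains an unsatisfied edge has a strict majority of heavy (hence satisfied) boundary edges, which then dominate in total weight. In short, the paper controls \emph{which} light edges are allowed to be unsatisfied; your scheme does not.
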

\begin{rmk}
Of course, the positivity of the interactions is not important.
\end{rmk}
\begin{proof}
$\mathbf{Sketch}$.
First note that knowing which edge is satisfied is equivalent to knowing the GSP.

There exists some small $\varepsilon > 0$ with the following property: Given an edge $e$ with interaction smaller than $\varepsilon$ in the tree, there is only a finite number of finite connected subgraph of the tree s.t. $e$ is in their boundary and less than half of the boundary edges have interaction smaller than $\varepsilon$. Indeed, there is an exponential number of connected subgraphs of the tree containing $e$ as a boundary edge having some given side. For small enough $\varepsilon$ the probability that one of them has more than half of its boundary edges having interaction smaller than $\varepsilon$, is exponentially small, with exponent base as small as we wish. In particular, all these probabilities are summable for small enough $\varepsilon$. A Borel-Cantelly argument shows that there are indeed only a finite number of finite connected subgraphs of the above type.

Moreover, if $\varepsilon$ is so small such that the above sum (which is actually the expected number of "bad" finite connected subgraphs) is smaller than $1$, then with probability $1$ there are edges $e$ with no finite connected subgraphs of the above type.
For any such edge flip a coin whether this edge is made satisfied or not.

It is easily seen that we have created a translation invariant measure of GSPs. It is indeed a GSP as in every connected subgraph whose boundary contains an unsatisfied edge, most boundary edges have interaction at least $\varepsilon$. The remaining edges, even if all of them are flipped are dominated both in their number and in their sum (in absolute value).
\end{proof}

\begin{theorem}\label{thm:single GSP in Z^d if J=1}
Let $\mu$ be a translation invariant measure of ground states of the EA Ising spin glass system over $\mathbb{Z}^{d}$, with all the interactions being positive, then
$\mu$ is supported on a single GSP, the one with all the spins the same.
\end{theorem}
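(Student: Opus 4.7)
The plan is to show that $\mu$-almost surely the spin configuration $\sigma$ is constant on $\mathbb{Z}^{d}$, which forces $\mu$ to be supported on the single GSP represented by the two all-constant configurations. Note first that the all-$(+1)$ configuration is itself a ground state: for any finite $S \subset \mathbb{Z}^{d}$, flipping $S$ changes the energy by $2\sum_{e \in \partial S} J_{e} > 0$ since every $J_{e} > 0$. Hence this GSP is always realised; the task is to rule every other configuration out of the support of $\mu$.

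A preparatory observation is that in any ground state $\sigma$, every connected component of $V_{\pm} := \{x : \sigma_{x} = \pm 1\}$ is infinite. Indeed, a finite connected component $C \subseteq V_{-}$ would have all of its boundary edges joining $V_{-}$ to $V_{+}$ and therefore unsatisfied, so flipping $C$ would strictly lower the energy by $2\sum_{e \in \partial C} J_{e} > 0$, violating the ground-state condition.

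The main step is a flip estimate on boxes. Fix a box $B_{n} \subset \mathbb{Z}^{d}$, let $S_{n} := V_{-} \cap B_{n}$ (finite), and apply the ground-state inequality $\sum_{e \in \partial S_{n}} J_{e} \sigma_{x} \sigma_{y} \geq 0$. The edges in $\partial S_{n}$ split into (a) edges internal to $B_{n}$ between $V_{-}$ and $V_{+}$, which are precisely the unsatisfied edges of $E(B_{n})$ and contribute $-J_{e}$ each, and (b) edges of $\partial B_{n}$ whose inside endpoint lies in $V_{-}$, each contributing $\pm J_{e}$ according to the outside spin. Rearranging yields the deterministic bound
\[
\sum_{e \in E(B_{n}):\, e\ \text{unsat}} J_{e} \;\leq\; \sum_{e \in \partial B_{n}} J_{e}.
\]
Taking $\mathbb{E}_{\mu}$ and using translation invariance, the left-hand side equals $|E(B_{n})| \cdot a$ where $a := \mathbb{E}_{\mu}[J_{e_{0}} \mathbf{1}\{e_{0}\ \text{unsat}\}]$ for a fixed edge $e_{0}$, while the right-hand side is at most $|\partial B_{n}| \cdot \mathbb{E}[J_{e}]$. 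Since $|E(B_{n})|/|\partial B_{n}| = \Theta(n)$, sending $n \to \infty$ forces $a = 0$, whence $\mu(e_{0}\ \text{unsat}) = 0$.

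Thus $\mu$-a.s.\ every edge of $\mathbb{Z}^{d}$ is satisfied, so $\sigma_{x} \sigma_{y} = +1$ for every edge, which on the connected graph $\mathbb{Z}^{d}$ forces $\sigma \equiv +1$ or $\sigma \equiv -1$. Both configurations represent the same GSP, completing the argument. The only real point of care is the flip-bookkeeping across $\partial B_{n}$ together with the implicit assumption that the couplings have finite first moment (standard in the setup, and accessible by truncation should one wish to relax it).
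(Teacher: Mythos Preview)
Your proof is correct and rests on the same volume-versus-boundary idea as the paper's sketch: translation invariance forces the expected weight of unsatisfied edges in a box to scale like $N^{d}$, while the boundary contribution is only $O(N^{d-1})$, contradicting the ground-state inequality unless the density of unsatisfied edges vanishes.

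The execution differs slightly. The paper partitions the box $B_N$ into monochromatic \emph{domains} separated by domain walls, observes that each interior unsatisfied edge is counted twice across the collection of domain boundaries while box-boundary edges are counted once, concludes the total boundary sum is negative, and then uses pigeonhole to locate a single domain whose flip lowers the energy. You bypass the decomposition entirely by flipping the single set $S_n = V_{-}\cap B_n$; its internal boundary is exactly the set of unsatisfied edges in $E(B_n)$, so the ground-state inequality directly yields $\sum_{e\in E(B_n),\,\text{unsat}} J_e \le \sum_{e\in\partial B_n} J_e$ without any pigeonhole step. This is a genuine simplification. Two minor remarks: your ``preparatory observation'' that monochromatic components are infinite is correct but never used in the main argument; and the identity $\mathbb{E}_\mu[\text{LHS}] = |E(B_n)|\cdot a$ tacitly assumes $a$ is the same for every edge orientation, whereas translation invariance alone only gives one constant per axis direction---this is harmless, since the argument forces each of the $d$ constants to zero.
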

\begin{proof}
$\mathbf{Sketch}$. Assume there is another different GSP in the support of $\mu$, let $\sigma$ be one of the two representatives for it.
In the dual complex $\mathbb{Z}^{d}*$, construct the complex made of pieces of codimension $1$, which distinguish between areas of positive spins to those of negative spins in $\sigma$.
By a dual piece of codimension $1$ we mean the hypercube of codimension $1$ and edges of length $1$, perpendicular to some edge of the graph $\mathbb{Z}^{d}$ and passes through the middle of that edge.
We call those pieces which distinguish between vertices of different spin $\emph{domain walls}$.
Consider a large box of side $N$ around the origin. Translation invariance implies that inside that box there are $\Theta(N^{d})$ pieces of the domain walls.
Note that any such piece corresponds to an edge whose weight is negative (its two spins are opposite). The sum of all of them is expected to be $\Theta(N^{d})$ as well.
The sum of all the pieces of the boundary of the box (not only those that may happen to be in domain walls)
is $\Theta(N^{d-1})$. The domain walls divide the box into domains of same sign, such that any two neighboring (by a piece of the wall) domains have opposite spins.
Any piece of the domain wall which is inside the box is hence counted exactly twice as boundary of a domain.

But it follows that the sum of the weights of the boundaries of the domains (including the boundaries of the box, which are counted once) is negative. Thus, The Hamiltonian
of one of these domains must be positive. Flipping all the spins in that domain reduces the Hamiltonian, in contradiction to $\sigma$ being a ground state.
\end{proof}
\begin{rmk}
Actually one can achieve the above result for any amenable Cayley graph.
\end{rmk}
\newpage
\section{Zero Temperature Glauber Dynamics in Statistical Mechanics Problems}
$~$\\
\subsection{Background}\label{sec:dynamics-background}
$~$\\
Dynamic evolution in statistical mechanics has become a subject of interest in the last decades. A description of the system at equilibrium and estimates on the time until such equilibrium is reached are of paramount importance here as it appears in many other scientific areas (e.g. biology, computer science, economics) where dynamical models are relevant as well. Common among such models in statistical mechanics, is the so-called $\emph{Glauber Dynamics}$, which can be used to describe the evolution in many spin-systems with a Boltzmann-Gibbs law at stationarity (such as Ising, Potts, etc.). This part of the work is devoted to obtaining new results about Glauber dynamics for the zero temperature Ising model on several infinite graphs. We introduce results and techniques which extend earlier works of other authors, e.g. \cite{NandaNewmanStein},\cite{GandolfiNewmanStein} \cite{CamiaSantisNewman}.


\subsection{Preliminaries}\label{sec:dynamics-prelim}
$~$\\
Let $G$ be a graph. $\mathcal{X}$ = $\{-1,1 \}^{G}$ will denote the
space of all configurations (spin choices) over $G$.
The Ising model on $G$ corresponds to a Gibbs distribution on configurations
with Hamiltonian
\begin{equation}\label{eq:Hamiltonian}
\mathcal{H}(\sigma) = -\sum_{x\thicksim y} \sigma_x\sigma_y
\end{equation}
where summation is over all pairs of adjacent vertices. The $\emph{Glauber Dynamics}$,
for the Ising model on $G$ in zero temperature is a Markov processes $\sigma_{t}$,
on $\mathcal{X}$. Its evolution is governed by independent Poisson clocks with
rate $1$ which are assigned to every vertex $x \in G$. When such a clock rings a spin flip is $\emph{considered}$.
If the change in the energy
\begin{equation}
\Delta\mathcal{H}_x(\sigma) = 2\sum_{y\thicksim x} \sigma_x\sigma_y
\end{equation}
is negative (positive) a spin change in $x$ is done with probability 1 (0). In the case that the energy change is zero, a coin is tossed in order to decide whether to perform a flip or not. All coin toss outcomes are independent and independent of the initial configuration and the ring times. A flip which changes the energy is called an $\emph{energy-reducing flip}$.

In other words, the configuration process $\{\sigma_t(v) ;\; t \geq 0, v \in V\}$ is almost surely completely determined, given the following:
\begin{itemize}
\item $\sigma_0(v) ;\; v \in V$ - initial configuration
\item $\tau_k(v) ;\; k \geq 1, v \in V$ - rings times.
\item $\beta_k(v) ;\; k \geq 1, v \in V$ - coin toss outcomes.
\end{itemize}
For convenience we will assume that our sample space $\Omega_V$ is the set of all possible assignments to these values. For instance, one can identify $\Omega_V$ with the space
\[
\big( \{-1,+1\} \times \BR_+^\BN \times \{-1,+1\}^\BN \big)^V
\]
and $\varpi = (\sigma_0(v), \tau_k(v), \beta_k(v) ; \; k \geq 1, v \in V) \in \Omega_V$. We denote by $\omega = (\tau_k(v), \beta_k(v) ; \; k \geq 1, v \in V)$. The standard Borel sigma algebra on $\Omega_V$ will be denoted $\calF_V$. If $U \subset V$ we shall treat $\calF_U$ as a subset of $\calF_V$.

A collection $\{\alpha(v) ; \; v \in V\}$ will often by viewed as a function $\alpha$ on $V$. If $U \subset V$, we shall write $\alpha|_U$ for its restriction to $U$. If $\alpha_1, \alpha_2$ are functions on $U$ and $V \setminus U$ resp., we shall write $\alpha_1 \times \alpha_2$ for the function on $V$ which is given by their superposition. We shall employ this notation ``point-wise'' for time-evolving functions on $V$ (e.g. $\sigma_t$) as well.

In the presence of boundary conditions on $U \subset V$, sites $v \in U$, have their values set to a prescribed value
\[
\sigma_t(v) \equiv \pm 1 \; \forall t \geq 0
\]
and a prescription of these will be given as a configuration $\xi \in \{-1, +1\}^U$.

Given the graph $G=(V,E)$ and possibly boundary conditions $\xi$ on $U \subset V$, the function which maps an $\varpi \in \Omega_V$ to the evolution it generates $(\sigma_t)_{t \geq 0}$ will be denoted by $\Sigma_G^\xi$.
\[
    (\sigma_t)_{t \geq 0} = \Sigma_G^\xi(\varpi)
\]
It will be denoted $\Sigma_G$ in the absence of boundary conditions.
\begin{rem}
$\Sigma_G$ is in fact well defined only on a subset of $\Omega_V$, one which has measure one. This can be done using percolation techniques, and may be found in classical textbooks. We prove a much more general argument in Section \ref{sec:dynamics-loops}.

is standard (percolation arguments).
\end{rem}
\begin{rem}
It is more convenient to have the $k$-th coin toss outcome at vertex $v$, i.e. $\beta_k(v)$ be associated with the $k$-th ring at that vertex. Thus at the $k$-th ring, if a there's a tie $\beta_k(v)$ is used and otherwise it is discarded.
\end{rem}

$\mathbf{P}_\omega$ denote the probability distribution on the realizations $\omega$ of the ring times and the coin tosses, and by $\mathbf{P}_{\sigma_{0}}$ the distribution of the initial configuration. Unless stated differently we assume that $\mathbf{P}_{\sigma_{0}}$ is just the product measure of symmetric Bernoully variables. The joint distribution is $\mathbf{P}_\varpi = \mathbf{P}_{\sigma_{0},\omega}=\mathbf{P}_{\sigma_{0}}\times\mathbf{P}_\omega$.

We say that the dynamics over $G$ $\emph{freezes}$ if
the limit $\sigma_{\infty}(\sigma_{0},\omega)$ = $\lim_{t\rightarrow\infty}\sigma_{t}(\sigma_{0},\omega)$ exists pointwise. In other words, each vertex flips finitely many times.
If the above limit does not exist, but
some vertices do freeze (flip at most finitely many times) and those which do not freeze do not form infinite connected clusters, we say that the dynamics $\emph{almost freezes}$.  When some vertices freeze but those which do not freeze form infinite clusters we say that the dynamics $\emph{partially freezes}$.
We say that a subset of vertices is $\emph{possibly self freezing (PSF)}$ if with probability $1$ no flip occurs in any of its vertices when they are initially all set to the same value (either all $+$ or all $-$).
A cylinder is said to $\emph{freeze}$ $\emph{in}$ $\emph{slices}$ if it freezes
and in the limit $\sigma_{\infty}$ all vertices in each slice have the same spin value.
A vertex which stops flipping after some finite time is said to be, after that time $\emph{frozen}$.
\subsection{Main Results}\label{sec:dynamics-main res}
$~$\\
In \cite{NandaNewmanStein} it is shown that on many families of graphs the dynamics freezes a.s., when the initial configuration is i.i.d. symmetric $\pm 1$. In particular this includes all odd-degree regular trees. Our first result shows that this is not true for regular trees with even degree, i.e. with positive probability a site will flip forever. We also discuss the case of a biased i.i.d. initial configuration.

Next we examine other graphs and show that under mild assumptions, for any bounded degree graph $G$ there exist some $N \in \mathbb{N}$
such that for any $n>N$, $G\times K_n$ freezes almost surely.

We end this section with results and constructions which are related to cylinders. We show that any cylinder (with finite connected slices) almost freezes a.s. We also show that  the $C_n$-cylinder and the $K_n$-cylinder freeze a.s. and moreover - freeze in slices a.s. We show, on the other hand that there are cylinders which do not freeze and some which freeze but not in slices (even when the slice is a transitive graph).

\subsection{Glauber Dynamics on Regular Trees with Even Degrees}\label{sec:dynamics-reg trees}
$~$\\
Here we investigate the zero-temperature Glauber dynamics on trees. It follows from \cite{NandaNewmanStein} that this dynamical process on a regular tree of odd degree freezes a.s.
We show the complementary result that a.s. the dynamics does not freeze when the tree is regular of even degree (by that we mean that at least some vertex flips indefinitely),
as long as our initial spin distribution is the product measure of symmetric $\pm 1$ distributions. We show that if the initial distribution is i.i.d. but non-symmetric, with probability $1$ - either the dynamics does not freeze, or it freezes to
a configuration where all the spins are the same.
In addition, we show that in the above scenario, if the initial distribution is biased enough, a.s. the dynamics freezes to the same-spin configuration.
It is worth mentioning that a recent work of \cite{CAVITY} shows that if instead of Poisson clocks, we work in the simultaneous scenario (where all the clocks ring together at integer multiples of time units),
then even if the bias in the initial spin distribution is small, the dynamics freezes to the same-spin configuration. This leads to the next conjecture (supported by simulations in \cite{CAVITY}).
\begin{conjecture}
If the distribution of the initial spin configuration is the product distribution of biased Bernoulli variables then the zero-temperature Glauber dynamics on a regular tree of even degree freezes a.s. to a same spin configuration.
\end{conjecture}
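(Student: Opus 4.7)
The plan is to combine the biased percolation machinery developed earlier in the paper (Lemmas~\ref{lem:very biased perc cofinite} and \ref{lem: very biased - even majority cofinite}) with a recursive analysis of persistently stable vertices on the tree. Assume without loss of generality $p>1/2$ and write $d=2k$ for the tree degree; the goal is to show the dynamics freezes to the all-$+1$ configuration. Since the zero-temperature Glauber dynamics admits a standard graphical coupling (sharing Poisson ring times and tie-breaking coins) which is monotone in the initial configuration, it is natural first to try to establish the statement for $p$ sufficiently close to $1$ using percolation techniques, and only then attempt to push down to arbitrary $p>1/2$ by a separate argument; this latter extension is where the real subtlety of the conjecture likely hides.

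For $p$ close to $1$, the strategy is to construct a set $C\subset T$ which is (i) initially everywhere $+1$, (ii) dynamically invariant in the sense that every $v\in C$ stays $+1$ forever, and (iii) has complement a.s.\ consisting of finite connected components. A natural candidate is the $(k+1)$-core of the initially-$+1$ subgraph: the maximal subset $C$ such that every $v\in C$ satisfies $\sigma_0(v)=+1$ and has at least $k+1$ of its $2k$ tree-neighbors in $C$. Dynamic invariance is then immediate by induction on the sequence of clock rings: whenever a ring occurs at $v\in C$, the vertex already has $\geq k+1$ neighbors in $C$, all remaining $+1$ by the induction hypothesis, so $\Delta\mathcal{H}>0$ and no flip takes place. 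Cofiniteness of $C$ for $p$ close to $1$ should follow from a stochastic-domination argument in the spirit of \cite{LiggettStaceySchonmann}, exactly as used in the proof of Lemma~\ref{lem: very biased - even majority cofinite}, combined with a standard tree recursion for the $(k+1)$-core (bootstrap percolation) on the $2k$-regular tree.

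With $C$ in hand, each finite component $F$ of $T\setminus C$ runs a finite continuous-time Markov chain whose boundary spins on $\partial F\subset C$ are frozen at $+1$ for all time, and I would argue that this chain a.s.\ absorbs at the all-$+1$ configuration on $F$ by peeling off layers: a leaf of $F$ has at least $2k-1\geq k+1$ neighbors on $\partial F$ (hence fixed at $+1$), and therefore experiences a strict local bias toward $+1$ and freezes quickly, after which the same strict bias propagates inward to its parent, and so on. The principal technical obstacle I anticipate is handling coin-toss ties at vertices whose neighbors happen to be split exactly $k/k$ at some random moment, since these permit equal-energy wandering that could in principle persist; ruling this out will require either a Foster--Lyapunov function on the finite state space of $F$ or a direct Borel--Cantelli argument showing that every equal-energy move has positive probability of being corrected at a subsequent ring. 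Extending the conclusion from ``$p$ close to $1$'' to arbitrary $p>1/2$ appears genuinely harder: monotonicity in $p$ only yields $\limsup\sigma_t\leq +1$, not $\sigma_t\to +1$, and pushing all the way down to $p=1/2^+$ will likely require an additional ergodic ingredient (for instance an $\mathrm{Aut}(T)$-invariant classification of stationary measures with positive magnetization) or a transfer of the synchronous analysis of \cite{CAVITY} to the Poisson setting.
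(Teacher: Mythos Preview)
The statement you are attempting to prove is a \emph{Conjecture} in the paper, not a theorem: the paper explicitly leaves it open and offers no proof. What the paper \emph{does} prove is the strongly-biased special case, Theorem~\ref{thm:strongly biased freezes}, which asserts exactly that for $p$ sufficiently close to $1$ the dynamics freezes to the all-$+1$ configuration. Your proposal, in its first half, is essentially a re-derivation of that theorem: your $(k+1)$-core set $C$ plays the same role as the paper's set of initially-$+1$ vertices with a majority of $+1$ neighbors (Lemma~\ref{lem: very biased - even majority cofinite}), and your leaf-peeling argument on the finite components of $T\setminus C$ matches the paper's inductive step. If anything, your formulation via the $(k+1)$-core is slightly cleaner, since dynamic invariance of $C$ is immediate by construction, whereas the paper's ``majority'' set requires a moment's thought to see why it is stable.

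The genuine gap is the second half: you do not prove the conjecture for arbitrary $p>1/2$, and you say so yourself. Monotonicity in $p$ does not bootstrap the strongly-biased result down to $p=1/2^+$, and the vague appeals to an ergodic classification or to transferring the synchronous analysis of \cite{CAVITY} are not arguments but wish-lists. Since the paper itself regards this extension as open (and presents the $p$-close-to-$1$ case separately precisely because the general case is unresolved), your proposal has not closed the gap; it has located it in the same place the authors did.
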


The most inovative result of this subsection is the following.
\begin{theorem}\label{thm:even-regular trees do not freeze}
Consider the zero-temperature Glauber dynamics on a $d$-regular tree where $d$ is even.
Assume that $\mathbf{P}_{\sigma_{0}}$ is the product measure of symmetric $\pm 1$ random variables.
Then with $\mathbf{P}_{\sigma_{0},\omega}$-probability $1$ the dynamics does not freeze, i.e. a.s there exist a vertex which flips its spin indefinitely.
\end{theorem}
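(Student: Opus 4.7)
My plan is to reduce the claim, via a zero--one law, to exhibiting a positive-probability local event on which some vertex flips infinitely often; this event is constructed by exploiting a structural feature special to even $d$.

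First, I would note that the event $A$ that some vertex flips infinitely often is invariant under the tree's automorphism group and measurable in the tail of the product space of i.i.d.\ initial spins, independent Poisson clocks, and independent coin tosses indexed by the vertices. A standard Kolmogorov-style zero--one argument gives $\mathbf{P}_{\sigma_0, \omega}(A) \in \{0,1\}$, so it suffices to construct one positive-probability realization of $A$.

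Second, the key structural fact for even $d = 2k$. Write $X_v(t) := \#\{u \sim v : \sigma_t(u) \neq \sigma_t(v)\}$. When $X_v = k$, the vertex $v$ is \emph{tied}: at a clock ring, $v$ flips with probability $1/2$ via the coin toss. Crucially, if $v$ flips from the tied state, the agreeing and disagreeing neighbor counts simply swap and $v$ is still tied. Thus tied-ness is absorbed under $v$'s own flips and can only be broken by a neighbor flipping. In particular, if $v$ is tied and its neighbors remain frozen in their spins indefinitely, then $v$ flips on every clock ring with independent probability $1/2$, yielding infinitely many flips. This mechanism is unavailable for odd $d$, where the tied state $X_v = d/2$ is unattainable, which is what reconciles the theorem with the freezing result of \cite{NandaNewmanStein}.

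Third, the constructive step. I would select a vertex $v$ and prescribe, on a ball $B(v, R)$, an initial pattern in which $v$'s $2k$ neighbors $u_1, \dots, u_{2k}$ split evenly in spin -- $k$ are $+1$ and $k$ are $-1$, placing $v$ in a tied state -- and the $2k - 1$ children of each $u_i$ in its pendant subtree are configured so that each $u_i$ sees strict-majority agreement regardless of $v$'s spin, thereby freezing $u_i$. The probability of this frozen-boundary condition is estimated recursively along the rooted $(2k-1)$-ary pendant subtrees via a branching-process percolation argument; combined with the ``tied stays tied'' observation and the independent coin tosses at the Poisson rings of $v$, this yields the positive-probability event witnessing $A$, which the zero--one law upgrades to probability one.

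The main technical obstacle is that the naive recursive fixed-point equation $p = \tfrac{1}{2}\,\mathbf{P}(\mathrm{Bin}(2k-1, p) \geq k+1)$ for eternal $+$-stability of a pendant subtree has no positive solution in $(0, 1)$ for small $k$ (in particular for $d = 4$). The remedy is a multi-scale refinement: work with time-bounded stability, ``$+$-stable on $[0, T]$'', and exploit the finite-speed propagation of disturbances along the tree's Poisson clocks to insulate $v$ from the boundary of a ball of radius comparable to $T$. A second-moment / Paley--Zygmund bound on the number of tied Poisson rings of $v$ in $[0, T]$, using the independence of the coin tosses from the state evolution, then converts the resulting lower bound on the expected number of flips into a positive probability of infinitely many flips. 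The degenerate case $d = 2$ (where the tree reduces to $\mathbb{Z}$ and there are no pendant subtrees to stabilize) is handled separately by the classical voter-model analysis on $\mathbb{Z}$, in which the disagreement walls perform coalescing random walks that almost surely never fully disappear.
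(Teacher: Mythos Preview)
Your core mechanism---force a vertex $z$ into a permanent tie by freezing half its neighbors to $+1$ and half to $-1$---is exactly the one the paper uses. But your route to producing those frozen neighbors is harder than necessary, and the ``multi-scale refinement'' you sketch to patch the fixed-point obstacle is vague: finite-speed propagation on a regular tree is delicate because the ball of radius $R$ has exponentially many vertices, and it is not clear how a Paley--Zygmund bound on tied rings in $[0,T]$ upgrades to infinitely many flips.

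The paper sidesteps the obstacle entirely by arguing \emph{by contradiction}. Assume the dynamics freezes a.s. Then for some deterministic $t_0$ and some vertex $x$, the event $\{\sigma_t(x)=+1 \text{ for all } t\ge t_0\}$ has positive probability. By monotonicity (Proposition~\ref{prop:Monotonicity}), the same event holds with at least this probability for the dynamics with boundary condition $+1$ imposed at a fixed neighbor $y$ of $x$; and under that boundary condition the event is measurable with respect to the subtree $G_y(x)$ alone (Proposition~\ref{prop:Independence}). Now take a vertex $y$, let $x_1,\dots,x_{d-1},z$ be its neighbors, and intersect the independent events ``$x_i$ is $+1$ for all $t\ge t_0$ under boundary $+1$ at $y$'' with $\{\sigma_0(y)=+1,\ \tau_1(y)>t_0\}$: on this positive-probability event, $y$ itself is $+1$ for all $t\ge 0$, \emph{regardless of what happens at $z$}. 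Symmetrically one builds the $-1$ version. Since these events live in the disjoint subtrees $G_z(y_1),\dots,G_z(y_d)$, one can demand $d/2$ of each, producing a permanent tie at $z$. The contradiction assumption is what supplies the positive probability of eternal freezing in a rooted subtree; you never need to solve the branching recursion $p=\tfrac12\,\mathbf{P}(\mathrm{Bin}(2k-1,p)\ge k+1)$ at all.

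In short: the missing idea is to let the (assumed) freezing hypothesis manufacture the frozen subtrees for you, via monotonicity plus the tree's independence structure, rather than building them from scratch.
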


Before proving this theorem we shall need several straight-forward propositions. We use the standard order for vectors and functions.
\begin{prop}[Monotonicity]
\label{prop:Monotonicity}
For any $G=(V,E)$, $U \subset V$ and $\xi \in \{-1, +1\}^U$.
\begin{itemize}
\item
    $\Sigma_G^\xi$ is monotone increasing in $\sigma_0$, $(\beta_k)_{k \geq 1}$.
\item
    If $\xi_0 \geq \xi$ then $\Sigma_G^{\xi_0} \geq \Sigma_G^\xi$.
    and if $\xi \equiv 1$ on $U$ then $\Sigma_G^\xi\geq \Sigma_G$.
\end{itemize}
\end{prop}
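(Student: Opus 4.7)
The plan is to realize $\Sigma_G^\xi(\varpi)$ as an ordered composition of local updates that share the ring times across the two coupled runs, and to check that each local update is itself a coordinate-wise non-decreasing function of the current spin neighborhood and of the tie-breaking coin toss.

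First I would make the monotonicity of the local rule transparent by rewriting it in ``new value'' form rather than ``flip/no flip'' form: when the clock at $x$ rings at time $t$, the new spin $\sigma_{t}(x)$ equals $+1$ if $\sum_{y\sim x}\sigma_{t^-}(y)>0$, equals $-1$ if the sum is negative, and equals $\beta_k(x)$ if the sum vanishes (under the natural convention that $\beta_k(x)\in\{-1,+1\}$ encodes the post-tie value rather than the flip decision). Written this way the update is manifestly non-decreasing in each neighboring spin, because raising any $\sigma_{t^-}(y)$ can only move the sign of the neighborhood sum from $-$ towards $+$, and it is non-decreasing in $\beta_k(x)$ because in the tie case the new value is exactly $\beta_k(x)$.

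Second, I would couple two runs by using the same ring times $\tau_k(v)$ and compare the resulting evolutions by induction on the ordered sequence of ring events that lie in the finite ``light-cone'' determining $\sigma_t(v)$ for a given $(v,t)$. That the state at $(v,t)$ depends only on finitely many rings is the standard well-definedness fact for Glauber dynamics which the remark before the proposition defers to Section \ref{sec:dynamics-loops}; once granted, the induction is immediate. The base case $\sigma_0\le\sigma_0'$ is the hypothesis, and at each subsequent ring the inductive hypothesis together with $\beta_k(v)\le\beta'_k(v)$ passes through the local update by the previous paragraph, so the ordering $\sigma_s\le\sigma_s'$ is preserved for every $s\le t$.

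Third, the boundary-condition statements follow with almost no extra work. I would treat the prescribed values on $U$ as constant (never-updated) coordinates: if $\xi\le\xi_0$ then $\xi\times\sigma_0\le\xi_0\times\sigma_0$ at time $0$, the sites in $U$ never update, and the same induction on rings outside $U$ gives $\Sigma_G^\xi(\varpi)\le\Sigma_G^{\xi_0}(\varpi)$. The comparison $\Sigma_G^\xi\ge\Sigma_G$ for $\xi\equiv+1$ is the special case in which the larger run pins every site of $U$ to $+1$ forever while the smaller (free) run allows those sites to evolve; the monotone local rule applied to the $U$-sites on the smaller side, whose updates depend only on their $U^c$-neighbors, again preserves the ordering.

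The one genuine obstacle is the aforementioned well-definedness issue: on an infinite graph the global ``ordered sequence of rings'' has no meaning because rings accumulate, so one must first justify that $\sigma_t(v)$ is a function of only finitely many $(\tau_k,\beta_k)$ on a random but a.s.\ finite subset of sites. This is exactly what is resolved by the subcritical-percolation graphical construction promised in Section \ref{sec:dynamics-loops}; once it is in hand, everything above is a short induction using the pointwise monotonicity of the local majority-with-tie-breaker rule.
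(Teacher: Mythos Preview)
The paper does not actually prove this proposition: it is listed among ``several straight-forward propositions'' preceding the proof of Theorem~\ref{thm:even-regular trees do not freeze} and is stated without argument. Your proof is the standard one and is correct; the key point you identify---that the local update must be written in ``new value'' form $\sigma_t(x)=\mathrm{sgn}\big(\sum_{y\sim x}\sigma_{t^-}(y)\big)$ with $\beta_k(x)$ supplying the value in case of a tie---is exactly what makes the coordinate-wise monotonicity work, and you are right to flag that the proposition as stated forces this interpretation of $\beta_k$ (the flip/no-flip reading would break monotonicity in $\beta_k$). The finite-light-cone reduction and the induction over rings are routine once that is in place.
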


If $G=(V,E)$ and $U \subset V$, then $G \setminus U$ is the graph obtained from $G$ by removing all vertices in $U$ and any edges the connect such vertices. For $v \in V$, we denote by $G(v)$ the connected component in $G$ which contains $v$. We shall write $G_U(v)$ as a short for $(G \setminus U)(v)$. If $G_0 = (V_0, E_0)$ is a sub-graph of $G$ then its closure $\overline{G_0}$ (with respect to $G$) is obtained from $G_0$ by adding all neighbors to vertices of $G_0$ which are not in $V_0$ and the edges that connect them. The following is immediate.
\begin{prop}[Independence]
\label{prop:Independence}
Let  $G=(V,E)$, $U \subset V$ and $\xi \in \{-1, +1\}^U$.
Suppose that $G \setminus U$ breaks into $k$ connected components $G_1 = (V_1, E_1), \dots G_k = (V_k, E_k)$. Then
\[
    \Sigma_G^\xi (\varpi) =
        \Sigma_{\overline{G_1}}^\xi (\varpi|_{V_1}) \times \dots \times
        \Sigma_{\overline{G_k}}^\xi (\varpi|_{V_k})
\]
where '$\times$' has the obvious meaning.
\end{prop}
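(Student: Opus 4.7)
The statement is really a pathwise (deterministic) assertion: once the random ingredients $\varpi$ are fixed, the two evolutions agree on each $V_i$. So my plan is to give a pathwise induction on ring times, using one key structural observation and then the well-definedness of $\Sigma_G^\xi$ mentioned in the remark.

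The structural observation is that for every $v\in V_i$, the set of $G$-neighbors of $v$ is contained in $V_i\cup U$. Indeed, a $G$-neighbor of $v$ that is not in $U$ lies in the same connected component of $G\setminus U$ as $v$, which is $V_i$ by assumption. Hence the $G$-neighbors of $v$ coincide with the $\overline{G_i}$-neighbors of $v$. Consequently the local transition rule at $v$, which reads only $\sigma_t(v)$ and $\sigma_t$ at the neighbors of $v$ in the ambient graph, gives the same prescription whether computed inside $G$ with boundary data $\xi$ on $U$, or inside $\overline{G_i}$ with boundary data $\xi|_{U\cap V(\overline{G_i})}$. Moreover the data needed to run the dynamics at vertices of $V_i$ (initial spins, ring times, coin tosses) is precisely $\varpi|_{V_i}$, so the right-hand side is well-defined as written.

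Now fix $v\in V_i$ and $t>0$. Following the remark, on a measure-one set the ``causal cone'' of $(v,t)$ under $\Sigma_G^\xi$, namely the finite collection of (vertex, ring-time) pairs with time $\le t$ that can influence $\sigma_t(v)$, is finite; by the first paragraph it is contained in $V_i\cup U$, and the rings in $U$ do not matter since those spins are frozen by $\xi$. List the ring times in this cone in increasing order $0<s_1<s_2<\cdots<s_m\le t$. I argue by induction on $j$ that
\[
\Sigma_G^\xi(\varpi)_{s_j}(u)=\Sigma_{\overline{G_i}}^\xi(\varpi|_{V_i})_{s_j}(u)\qquad\text{for every }u\in V_i\text{ in the cone.}
\]
At $j=0$ both sides equal $\sigma_0(u)$, which is the same coordinate of $\varpi$ and of $\varpi|_{V_i}$. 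For the inductive step, between $s_{j-1}$ and $s_j$ neither process changes any relevant spin; at time $s_j$ the ring sits at some $u\in V_i$ in the cone, and the update is determined by the spins of $u$ and of its neighbors immediately before $s_j$. By induction (and since $\xi$ agrees on $U$ in both formulas), these spins coincide in the two processes, so the same coin toss $\beta_k(u)$ and the same energy balance yield the same new value of $\sigma_{s_j}(u)$. Since the cone determines $\sigma_t(v)$, the two processes agree at $(v,t)$; as $v\in V_i$ and $t>0$ were arbitrary, the trajectories agree on $V_i$, which is exactly the $i$-th factor of the asserted product decomposition.

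The only real obstacle is the legitimacy of the causal-cone argument on an infinite graph: one has to know that $\Sigma_G^\xi$ and $\Sigma_{\overline{G_i}}^\xi$ are both well defined and constructible from finite histories. This is handled by the remark in the paper (and its generalization in Section \ref{sec:dynamics-loops}); once the almost-sure finiteness of the influencing cone is granted, the induction above is essentially routine, and monotonicity considerations (as in Proposition \ref{prop:Monotonicity}) are not even needed since the argument is a pathwise identity rather than a stochastic comparison.
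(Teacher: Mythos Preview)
Your proof is correct. The paper does not actually give a proof of this proposition; it simply says ``The following is immediate'' and states the result. Your argument spells out exactly what makes it immediate: the structural observation that the $G$-neighbors of any $v\in V_i$ lie in $V_i\cup U$, so the local update rule at $v$ sees the same data in $G$ with boundary $\xi$ as in $\overline{G_i}$ with boundary $\xi$. The causal-cone induction you run on top of this is the standard way to make the pathwise identity rigorous on an infinite graph, and your appeal to the well-definedness remark is appropriate. In short, you have supplied the details the paper omits; there is no alternative approach to compare against.
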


\begin{proof}[Proof of Theorem~ \ref{thm:even-regular trees do not freeze}]
Let $x \in V$. If everything freezes with probability $1$, then with positive probability $\sigma_t(x) = 1$ for all $t \geq t_0$, where $t_0 > 0$ is a deterministic number.
From Proposition~\ref{prop:Monotonicity}, for all $\varpi$ such that this happens, also $\sigma^{y+}_t(x) = 1$ for all $t \geq t_0$, where
\[
(\sigma^{y+}_t)_{t \geq 0} = \Sigma_G^{\xi^{y+}} (\varpi),
\]
$\xi^{y+}$ is the boundary condition $\xi^{y+}(y) = +1$ and $y$ is some neighbor of $x$.

Now enumerate all the neighbors of $y$ as $x_1, \dots, x_{d-1}$ and $z$. Since we have just shown that events $\{\sigma^{y+}_t(x_i) = 1
; \; \forall t \geq t_0\}$ happen with positive probability and since they are
$\calF_{G_y(x_i)}$-measurable by Proposition~\ref{prop:Independence}, and therefore independent, the following has a positive probability as well
\[
\CA^+_z(y) = \left \{
    \begin{array}{l}
        \sigma^{y+}_t(x_i) = 1; \; \forall t \geq t_0, i=1, \dots d-1   \\
        \tau^1(y) > t_0 ,\, \sigma_0(y) = +1
    \end{array}
\right\}
\]
But $\CA^+_z(y) \subseteq \{\sigma_t(y) = +1 ;\; \forall t\geq 0\}$.
Event $\CA^-_z(y) \subseteq \{\sigma_t(y) = -1 ;\; \forall t\geq 0\}$ is defined similarly.

Now suppose $z$ has neighbors $y_1, y_2, \dots y_d$. Since $\CA^\pm_z(y_i)$ are
$\calF_{G_z(y_i)}$-measurable, they are independent and therefore we can have  $\CA^+_z(y_i)$ for all $i \leq d/2$ and
$\CA^-_z(y_i)$ for all $i > d/2$ occur at the same time with positive probability. But then $z$ must flip infinitely many times, as its neighbors are constantly in a tie.
\end{proof}

\begin{theorem}\label{thm:tree-non symmetric case}
Consider the zero-temperature Glauber dynamics on a $d$-regular tree for even $d$. Assume that $\mathbf{P}_{\sigma_{0}}$
is the product measure of strictly-biased $\pm 1$ random variables. Then either the dynamics does not freeze with probability $1$ (a.s there exist a vertex which flips its spin indefinitely) or with probability $1$ it freezes to a same spin configuration.
\end{theorem}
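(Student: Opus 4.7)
My plan is to follow the scheme of Theorem~\ref{thm:even-regular trees do not freeze}, retaining the ``perpetual tie at $z$'' contradiction while accommodating the loss of symmetry. First I would carry out a zero--one reduction: the event $A=\{\text{some vertex flips infinitely often}\}$ is invariant under the full tree-automorphism group, and $\mathbf{P}_\varpi$ is an i.i.d.\ product measure and hence ergodic under this action, so $\mathbf{P}_\varpi(A)\in\{0,1\}$. If $\mathbf{P}_\varpi(A)=1$, the first alternative of the theorem holds; so from now on assume $\mathbf{P}_\varpi(A)=0$, i.e.\ $\sigma_\infty(x):=\lim_{t\to\infty}\sigma_t(x)$ is almost surely well defined for every $x$.

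Next, set $q_+:=\mathbf{P}_\varpi(\sigma_\infty(x)=+1)$ and $q_-:=1-q_+$; these do not depend on $x$ by the tree-automorphism invariance of $\mathbf{P}_\varpi$. If $q_+\in\{0,1\}$, then a countable intersection over the vertex set forces $\sigma_\infty\equiv+1$ or $\sigma_\infty\equiv-1$ almost surely, which is the ``same-spin'' alternative. The remaining task is to rule out $0<q_+<1$, and the bulk of the proof will be devoted to deriving a contradiction from this assumption.

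Assuming $0<q_+<1$, freezing gives $\mathbf{P}_\varpi\bigl(\sigma_t(x)=\pm 1\;\forall t\geq t_0\bigr)\uparrow q_\pm$ as $t_0\to\infty$ by monotone convergence, so I may fix $t_0$ large enough that both probabilities are strictly positive simultaneously. Fix $z$ with neighbors $y_1,\dots,y_d$ and, for each $i$, enumerate the other neighbors of $y_i$ as $x_{i,1},\dots,x_{i,d-1}$. Proposition~\ref{prop:Monotonicity} gives $\sigma^{y_i+}_t\geq\sigma_t$, and Proposition~\ref{prop:Independence} applied to $G\setminus\{y_i\}$ makes the sub-tree evolutions rooted at the $x_{i,j}$'s independent under the $+1$ boundary at $y_i$, yielding
\[
\mathbf{P}_\varpi\bigl(\sigma^{y_i+}_t(x_{i,j})=+1\ \forall t\geq t_0,\ \forall j\bigr)\geq\prod_{j=1}^{d-1}\mathbf{P}_\varpi\bigl(\sigma_t(x_{i,j})=+1\ \forall t\geq t_0\bigr)>0.
\]
Intersecting with the independent positive-probability events $\{\tau^1(y_i)>t_0\}$ and $\{\sigma_0(y_i)=+1\}$ (the latter uses only that the bias $p$ lies in $(0,1)$, not that $p=\tfrac12$) gives $\mathbf{P}_\varpi(\CA^+_z(y_i))>0$, and the symmetric construction gives $\mathbf{P}_\varpi(\CA^-_z(y_i))>0$. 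Since $\CA^\pm_z(y_i)$ for distinct $i$ are $\calF_{G_z(y_i)}$-measurable, they are independent, so I may impose $\CA^+_z(y_i)$ for $i\leq d/2$ and $\CA^-_z(y_i)$ for $i>d/2$ simultaneously with positive probability; on that event $z$ always sees a $d/2$--$d/2$ tie past time $t_0$ and flips at every subsequent ring, contradicting the freezing assumption.

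The step I expect to be the main obstacle is the very first one: upgrading a ``positive probability'' conclusion to a ``probability one'' conclusion via ergodicity of $\mathbf{P}_\varpi$ under the tree-automorphism group. Once that zero--one step is firmly in place (which reduces to the standard fact that an i.i.d.\ product measure is ergodic under a transitive action on the index set), the remainder is a nearly verbatim adaptation of the proof of Theorem~\ref{thm:even-regular trees do not freeze}, whose only new input is the observation that the asymmetric hypothesis $0<q_+<1$ still produces \emph{both} a positive-probability ``freeze-to-$+$'' event and a positive-probability ``freeze-to-$-$'' event, which is exactly what makes the tie construction go through.
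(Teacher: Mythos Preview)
Your proposal is correct and follows essentially the same route as the paper's sketch: assume freezing, observe that if both limiting signs occur with positive probability then the tie construction from Theorem~\ref{thm:even-regular trees do not freeze} produces an infinitely flipping vertex, and conclude monochromatic freezing. Two minor differences: you make the zero--one reduction for the event ``some vertex flips forever'' explicit via ergodicity of the i.i.d.\ measure under the tree-automorphism group (the paper leaves this implicit in its sketch), while the paper adds one extra sentence---a monotonicity/coupling comparison---to pin down \emph{which} monochromatic configuration wins (the one matching the bias), which is slightly more than the theorem statement demands and which you do not include.
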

\begin{proof}
$\mathbf{Sketch.}$ The above proof shows, in fact, that if we assume a.s. freezing then either a.s. all the vertices freeze to $+1$ or a.s. all the vertices freeze to $-1$. Otherwise we could again find a vertex with half of its vertices frozen to $+1$ and the other half frozen to $-1$. Such a vertex would flip infinitely many times. If the bias is towards $+1$, then by
monotonicity and standard coupling arguments, it is easy to see that the probability that all the vertices freeze to $+1$ is at least the probability they freeze to $-1$. Hence, in this case, either the configuration does not freeze a.s., or a.s. it freezes to the uniform configuration where all the spins are $+1$.
\end{proof}

\begin{theorem}\label{thm:strongly biased freezes}
Consider the zero-temperature Glauber dynamics on a $d$-regular tree for even $d$.
Then there exist some positive $p=p(d), 0<p<1$ such that if $\mathbf{P}_{\sigma_{0}}$
is the product measure of i.i.d. variables with probability $p$ for $+1$, then with
$\mathbf{P}_{\sigma_{0},\omega}$-probability $1$ the dynamics freezes to
the all $+1$ configuration.
\end{theorem}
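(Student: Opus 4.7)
The plan is to combine Theorem~\ref{thm:tree-non symmetric case} with a percolation-based construction of a frozen $+1$ backbone. Theorem~\ref{thm:tree-non symmetric case} already shows that, for biased initial distributions, the dynamics either fails to freeze a.s.\ or else freezes a.s.\ to the all $+1$ configuration. Hence it suffices to exhibit some $p_0 < 1$ such that for $p \geq p_0$ the dynamics freezes with positive probability; combined with that dichotomy this gives a.s.\ freezing to the uniform $+1$ configuration. My construction produces, as a function of $\sigma_0$ alone, a random set $G \subseteq V$ with three key properties: (P1) every $v \in G$ satisfies $\sigma_t(v) = +1$ for all $t \geq 0$ a.s.; (P2) for $p$ close enough to $1$, every connected component of $V \setminus G$ is a.s.\ finite; (P3) every finite component of $V \setminus G$ freezes in finite time a.s. Together these give a.s.\ freezing.

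The backbone $G$ is taken to be the $(\tfrac{d}{2}+1)$-core of the initial $+1$-cluster: set $A_0 = \sigma_0^{-1}(+1)$, $A_{k+1} = \{v \in A_k : \#\{u \sim v : u \in A_k\} \geq \tfrac{d}{2}+1\}$, and $G = \bigcap_{k\geq 0} A_k$. Property (P1) is proved by a first-flip argument. Poisson sparsity gives, in any fixed finite ball, an a.s.\ well-defined first time $t_1$ at which some $G$-vertex flips. Just before $t_1$ every $G$-vertex is still at $+1$, so when the flipping vertex $w \in G$'s clock rings at $t_1$, its at least $\tfrac{d}{2}+1$ neighbors that lie in $G$ are still $+1$, leaving at most $\tfrac{d}{2}-1$ of $w$'s neighbors at $-1$. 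This is a strict minority, so the flip would strictly raise the energy and is impossible (and no tie occurs), contradicting the choice of $t_1$.

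Property (P2) is the main obstacle. It requires analyzing the $(\tfrac{d}{2}+1)$-core of a Bernoulli$(p)$ subset of the $d$-regular tree. The standard tree recursion for the probability $r = r(p)$ that the root of a semi-infinite $(d-1)$-ary subtree lies in its one-sided core reads
\begin{equation*}
    r \;=\; p \sum_{k = d/2}^{d-1} \binom{d-1}{k}\, r^{k}(1-r)^{d-1-k},
\end{equation*}
whose largest fixed point tends to $1$ as $p \to 1$; consequently $P(v \notin G) \to 0$. A stochastic-domination argument then couples $V \setminus G$ to a Bernoulli site percolation on the $d$-regular tree whose parameter $q(p)$ tends to $0$ as $p \to 1$: one bounds the cascade of core-exclusions as a suitable iterative thickening of the initial $-1$ vertices, so $V \setminus G$ lies in the subcritical regime of site percolation on $T_d$ (whose critical parameter is $1/(d-1)$) once $p$ is close enough to $1$, and hence its connected components are a.s.\ finite.

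For property (P3), assume $d \geq 4$ (the case $d = 2$ requires a separate argument, as the $(\tfrac{d}{2}+1)$-core degenerates). Each finite component $K$ of $V \setminus G$ is a finite subtree of $T_d$ whose boundary $\partial K \subseteq G$ is permanently frozen at $+1$ by (P1). Induct on $|K|$: pick a leaf $v$ of $K$, which has $d-1$ neighbors in $G$ and a single neighbor in $K$. Since $d-1 \geq \tfrac{d}{2}+1$ for even $d \geq 4$, $v$ has at all times a strict majority of $+1$-neighbors, so at its first clock ring $v$ attains $+1$ and never flips again; the induction hypothesis applied to $K \setminus \{v\}$ with enlarged boundary $\partial K \cup \{v\}$ closes the argument. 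Combining (P1)--(P3) yields a.s.\ freezing for $p$ close enough to $1$, and Theorem~\ref{thm:tree-non symmetric case} then upgrades this to a.s.\ freezing at the all $+1$ configuration.
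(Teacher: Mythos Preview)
Your approach mirrors the paper's sketch: build a $+1$ backbone that is frozen from time $0$, show its complement consists of finite components, and peel leaves. The paper takes as backbone the \emph{one-step} set $S=\{v:\sigma_0(v)=+1,\ |\{u\sim v:\sigma_0(u)=+1\}|>d/2\}$ and invokes Lemma~\ref{lem: very biased - even majority cofinite} (based on Liggett--Schonmann--Stacey domination) to obtain (P2) for $S$. You instead take the iterated $(\tfrac{d}{2}+1)$-core $G\subseteq S$, which makes (P1) genuinely easy: your first-flip argument is correct because every $v\in G$ has $\geq d/2+1$ neighbors \emph{in $G$}. By contrast the paper's ``clearly any vertex in $S$ stays $+1$'' is not actually clear, since a neighbor $u$ of $v\in S$ with $\sigma_0(u)=+1$ need not lie in $S$ and can flip. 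So your variation is an improvement on that step.

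The cost is that your (P2) is strictly harder: you need the complement of the \emph{core}, not merely of $S$, to have only finite components, and this is where your argument is thin. Showing $\mathbf{P}(v\notin G)\to 0$ via the fixed-point recursion is fine, but the sentence ``a stochastic-domination argument then couples $V\setminus G$ to a Bernoulli site percolation with parameter $q(p)\to 0$'' is asserted, not proved. The set $V\setminus G$ is the final occupied set of a threshold-$d/2$ bootstrap percolation and carries long-range dependencies, so domination by a product measure is not automatic. It \emph{can} be done on trees (the cascade of exclusions is dominated by a subcritical branching process when $1-p$ is small; alternatively cite known results on bootstrap percolation on regular trees), but you should spell this out rather than gesture at it---this is exactly the step the paper offloads to Lemma~\ref{lem: very biased - even majority cofinite}. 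Two minor points: your appeal to Theorem~\ref{thm:tree-non symmetric case} is redundant, since (P1)--(P3) already give a.s.\ freezing to the all-$+1$ configuration directly; and for $d=2$ the statement is in fact false (the dynamics on $\mathbb{Z}$ is the voter model and a.s.\ does not freeze for any $p\in(0,1)$), so ``requires a separate argument'' understates the situation.
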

\begin{proof}
$\mathbf{Sketch.}$
There is a positive $p_1<1$ such that if $p > p_1$, then with
$\mathbf{P}_{\sigma_{0}}$-probability $1$, any region of spins $-1$ is finite
at time $t=0$. Moreover, there exist an even larger $p_2$, such that if $p > p_2$
then the complement of the set of all vertices
whose spin is initially $+1$ and more than half of their neighbors are $+1$
has only finite connected components. This is due to Lemma \ref{lem: very biased - even majority cofinite}.
Clearly, any vertex in this $+1$ set will be $+1$ for all $t \geq 0$.
On the other hand, any finite cluster in the complement is surrounded by
by vertices whose spin is $+1$ for all $t \geq 0$ and therefore after finite time
will have all its vertices $+1$ forever. Indeed, all the leaves of such components must
turn permanently $+1$ after finite time and therefore, by induction, also the entire
component.
\end{proof}
\subsection{Glauber Dynamics on Product Graphs}\label{sec:dynamics-product}
$~$\\
We begin this section with some easy lemmas.
\begin{lemma}\label{lem:finite energy reducing}
Let $G$ be a finite graph, $\sigma_{0}$ any initial spins configuration,
then $\mathbf{P}_\omega$-a.s. there is a time $T<\infty$, after which
there are no more energy-reducing flips.
\end{lemma}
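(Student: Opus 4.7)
My plan is to prove the lemma by a simple energy descent argument, exploiting the fact that on a finite graph the Hamiltonian is bounded and takes values in a discrete set.

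First I would observe that for a finite graph $G=(V,E)$, the Hamiltonian $\mathcal{H}(\sigma) = -\sum_{x\sim y}\sigma_x\sigma_y$ is everywhere defined and satisfies $|\mathcal{H}(\sigma)| \leq |E|$. Moreover, since each summand $\sigma_x\sigma_y$ is $\pm 1$, the Hamiltonian takes values in $\{-|E|, -|E|+2, \dots, |E|\}$, so any two distinct values of $\mathcal{H}$ differ by at least $2$.

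Next I would use the definition of zero-temperature Glauber dynamics: when a clock at $x$ rings, a flip is performed only if $\Delta\mathcal{H}_x(\sigma) \leq 0$. Consequently $t \mapsto \mathcal{H}(\sigma_t)$ is non-increasing along every trajectory, and each energy-reducing flip strictly lowers it by at least $2$. Combining this with the lower bound $\mathcal{H} \geq -|E|$ gives the deterministic bound
\begin{equation*}
\#\{\text{energy-reducing flips up to time } t\} \;\leq\; \tfrac{1}{2}\bigl(\mathcal{H}(\sigma_0) + |E|\bigr) \;\leq\; |E|
\end{equation*}
for every $t \geq 0$, independent of $\omega$.

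Finally, since there are at most $N := |E|$ energy-reducing flips in total, and each occurs at a ring time of some vertex, the last such flip takes place at a time $T$ which is bounded above by the $N$-th ring time among all vertices in $V$. Because $V$ is finite and each vertex has a rate-$1$ Poisson clock whose ring times are $\mathbf{P}_\omega$-a.s.\ finite, this $N$-th ring time is a.s.\ finite. Hence $T < \infty$ almost surely, and by construction no energy-reducing flip can occur after $T$. There is no real obstacle here; the only thing to be careful about is distinguishing tie flips (which may continue forever) from energy-reducing flips, but the monotonicity of $\mathcal{H}$ together with its discreteness makes this distinction automatic.
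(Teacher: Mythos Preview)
Your argument is correct and follows essentially the same energy-descent idea as the paper: the Hamiltonian is bounded below, non-increasing along trajectories, and drops by at least $2$ at each energy-reducing flip, so there can be only finitely many such flips. One small slip: the last energy-reducing flip is \emph{not} bounded by the $N$-th ring time among all vertices, since many tie-flips or non-flips may occur between energy-reducing ones; the correct (and sufficient) observation is simply that the finitely many energy-reducing flips each occur at some ring time, hence at an a.s.\ finite time.
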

\begin{proof}
Any flip can either reduce the total energy of the system $\mathcal{H}(\sigma_t)$
or leave it fixed. As $G$ is finite the total energy of the system is bounded from
below and moreover any energy-reducing flip decreases the Hamiltonian by
at least $2$. Therefore a.s. $\lim_{t \to \infty} \mathcal{H}(\sigma_t)$
exists and moreover is achieved after some finite time $T < \infty$.
\end{proof}

\begin{conclusion}\label{con:almost freezing finite energy reductions}
Under a dynamics which is almost-freezing with probability 1,
any vertex has finitely many energy-reducing flips.
\end{conclusion}
\begin{proof}
Indeed, for any vertex $v$ which does not freeze there is a time $t=t_0$
such that the connected cluster of vertices which flip after this time
and contains $v$ is finite. Clearly until this time $v$ had only finitely many
flips. On the other hand, similar considerations to the ones which appeared in
Lemma \ref{lem:finite energy reducing} show that $v$ will have only finitely many
energy-reducing flips from now on. The result follows.
\end{proof}

We shall now quote a simple lemma whose proof can be found in \cite{GandolfiNewmanStein}.
Let $Z_t$ be a continuous-time time-homogeneous Markov process with state space $\mathcal{Z}$ and let ${Z_t}^{(\tau)}$
denote the time shifted process $Z_{t+\tau}$. If $A$ is a measurable subset of $\mathcal{Z}$, we say that $A$ $\emph{recurs}$ if
\begin{equation}
\label{def:recur1}
\{\tau > 0 | Z_\tau \in A\}    \qquad \text{is unbounded}.
\end{equation}
If $B$ is an event, measurable with respect to the $\sigma$-field generated by $\{Z_t | 0 \leq t \leq 1\}$, we say that $B$ \emph{recurs} if
\begin{equation}
\label{def:recur2}
\{\tau > 0 | Z^{(\tau)}\in B\} \qquad \text{is unbounded}.
\end{equation}
Then,
\begin{lemma}\label{lem:from newman}
If $\inf_{z \in A} \mathbf{P}(B|Z_0 = z) > 0$ then if $A$ recurs with positive probability, so does $B$.
\end{lemma}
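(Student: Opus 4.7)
The plan is to combine the strong Markov property with the conditional Borel--Cantelli lemma. Write $p := \inf_{z \in A} \mathbf{P}(B \mid Z_0 = z) > 0$, and let $\Omega_A$ denote the event that $A$ recurs, which by hypothesis has positive probability. The goal is to exhibit an almost-surely infinite sequence of separated times at which $B$ occurs for the shifted process, on the event $\Omega_A$.

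First I would extract from the recurrence of $A$ an infinite sequence of well-separated visits. Define recursively $\tau_0 := 0$ and $\tau_{n+1} := \inf\{t \geq \tau_n + 1 : Z_t \in A\}$. On $\Omega_A$ every $\tau_n$ is finite, and by construction $\tau_{n+1} - \tau_n > 1$, so the intervals $[\tau_n, \tau_n + 1]$ are pairwise disjoint. Next, set $E_n := \{Z^{(\tau_n)} \in B\}$. Since $B$ is measurable with respect to $\sigma(Z_t : 0 \leq t \leq 1)$, the event $E_n$ is determined by the restriction of the process to $[\tau_n, \tau_n + 1]$, hence lies in $\mathcal{G}_n := \mathcal{F}_{\tau_n + 1}$, while $\tau_n$ is $\mathcal{F}_{\tau_n} \subseteq \mathcal{G}_{n-1}$-measurable. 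The strong Markov property applied at $\tau_n$ then gives, on $\{\tau_n < \infty\}$,
$$\mathbf{P}(E_n \mid \mathcal{F}_{\tau_n}) = \mathbf{P}_{Z_{\tau_n}}(B) \geq p,$$
because $Z_{\tau_n} \in A$ and by the definition of $p$.

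Finally I would conclude via the conditional Borel--Cantelli lemma (Lévy's extension) applied to the sequence $(E_n)$ with respect to the filtration $(\mathcal{G}_n)$. On $\Omega_A$ we have $\sum_n \mathbf{P}(E_n \mid \mathcal{G}_{n-1}) \geq \sum_n p \cdot \mathbf{1}_{\tau_n < \infty} = \infty$, so $E_n$ occurs for infinitely many $n$ almost surely on $\Omega_A$. Each such occurrence produces a time $\tau_n \to \infty$ with $Z^{(\tau_n)} \in B$, which is exactly the statement that $B$ recurs; since $\mathbf{P}(\Omega_A) > 0$, this proves the lemma. The main technical nuisance will be the measurability setup around the hitting times $\tau_n$: one must check that they are genuine stopping times (so that strong Markov applies) and that on $\Omega_A$ they really are finite in the sense of the recurrence definition. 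In the intended application the state space is discrete and sample paths are càdlàg step functions, so these issues evaporate, but the cleanest proof would state these mild regularity assumptions explicitly.
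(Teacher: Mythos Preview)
The paper does not actually prove this lemma; it is merely quoted, with the proof deferred to \cite{GandolfiNewmanStein}. Your argument---strong Markov at an infinite sequence of well-separated return times to $A$, followed by L\'evy's conditional Borel--Cantelli---is correct and is the standard route (and essentially the one in the cited reference).

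One small slip worth cleaning up: you assert $\mathcal{F}_{\tau_n} \subseteq \mathcal{G}_{n-1}$, but since $\tau_n \geq \tau_{n-1}+1$ the inclusion goes the other way, and $\tau_n$ is not $\mathcal{G}_{n-1}$-measurable. The cleanest repair is to take instead $\mathcal{G}_n := \mathcal{F}_{\tau_{n+1}}$; then $E_n \in \mathcal{G}_n$ (because $\tau_{n+1} \geq \tau_n+1$) and $\mathbf{P}(E_n \mid \mathcal{G}_{n-1}) = \mathbf{P}(E_n \mid \mathcal{F}_{\tau_n}) \geq p \cdot \mathbf{1}_{\{\tau_n<\infty\}}$ directly from strong Markov, so the sum diverges on $\Omega_A$ without any further tower-property gymnastics. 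As you note yourself, in the intended discrete-state, c\`adl\`ag setting none of this causes real trouble.
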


Given $\sigma$, a configuration on $G=(U,E)$ and $V \subseteq U$, we denote by $V^+(\sigma)$ the set of vertices whose spin is $+1$ and $V^-(\sigma)$ the set of vertices whose spin is $-1$. In the next definition $d$ is any positive constant.

\begin{defn}\label{def:strongly freezing}
A family of finite graphs $\{G_n = (V_n, E_n)\}_n$ is
$d$-$\emph{strongly freezing}$ if for any $0<p<1$ there exists $n_0$ such that for all $n>n_0$, if $\sigma_n$ is a configuration on $V_n$ chosen according to
an i.i.d. symmetric $\pm 1$ product distribution, then with probability at least $p$ either all the vertices in $V_n$ have at least $d+1$ more neighbors in $V^+_n(\sigma_n)$ than in $V_n^-(\sigma_n)$ or all the vertices in $V_n$ have at least $d+1$ more neighbors in $V^-_n(\sigma_n)$ than in $V^+_n(\sigma_n)$.
\end{defn}

\begin{defn}\label{def:0-strongly freezing}
We say that a family of finite graphs $\{G_n\}_n$ is $\emph{strongly freezing}$ if it is $\emph{d-strongly freezing}$ for all $d\in\mathbb{N}$.
\end{defn}

\begin{exam}
Let $\{K_n\}$ be the sequence of complete graphs of $n$ vertices. $\{K_n\}$ is a d-strongly freezing family for any $d$.
\end{exam}

The following theorem shows that graph-products yield graphs which are more "regular" for the Glauber dynamics. For its proof, we use percolation techniques together with some of the above lemmas.
\begin{theorem}\label{thm:graph products freeze}
Let $\{G_n\}$ be a strongly freezing family of graphs. Then, for any $d\in\mathbb{N}$, there exist $N\in\mathbb{N}$ such that for any graph $G$ of bounded degree $d$ and countably many vertices, the zero-temperature Glauber dynamics over $G \times G_n$ almost freezes a.s. for all $n\geq N$.
Moreover, at any vertex there are at most finitely many energy-reducing flips.
\end{theorem}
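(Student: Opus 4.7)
The strategy is to exploit strong freezing so that for large $n$, most slices $\{v\} \times V(G_n)$ carry a uniform initial majority that overwhelms the bounded cross-slice interactions, and then to combine this with a subcritical site percolation argument on $G$ to ensure that the exceptional slices form only finite clusters in $G$.

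First fix $d$. Since $G$ has maximum degree $d$, ordinary Bernoulli site percolation on $G$ is subcritical when the density of open vertices lies below some threshold $p_0 < 1$ (for instance, $p_c \geq 1/(d-1)$ by comparison with a branching process). Using that $\{G_n\}$ is strongly freezing at level $d$, I would choose $N$ large enough that for every $n \geq N$ the probability that a symmetric $\pm 1$ i.i.d. assignment on $V(G_n)$ is \emph{good}, meaning that every vertex has at least $d+1$ more neighbors of one common sign, exceeds $p_0$; in particular this forces $\deg_{G_n}(u) \geq d+1$ for every $u$. Fix $n \geq N$ and call a slice $\{v\} \times V(G_n)$ good when the initial spins on it form a good assignment. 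Because $\sigma_0$ is a product measure, goodness is an i.i.d. Bernoulli field on $V(G)$ with density above $p_0$, so almost surely every connected component of bad slices in $G$ is finite.

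The core claim is that each good slice freezes after an almost surely finite time, robustly in the rest of the configuration. Suppose the slice has majority $+$. A vertex $(v,u)$ in the slice currently carrying spin $+$ has at least $d+1$ more $+$ than $-$ neighbors inside the slice and at most $d$ cross-slice neighbors, so its global neighbor excess of $+$ is at least $1$; the vertex strictly agrees with its neighborhood majority and never flips, regardless of what the cross-slice neighbors do. A slice vertex currently carrying spin $-$ sees the same strict $+$ majority of neighbors, so the first ring of its Poisson clock produces an energy-reducing flip to $+$. Moreover, flipping any $-$ vertex in the slice to $+$ only increases the in-slice $+$-excess at its slice neighbors, so the ``no $+$ ever flips'' property is preserved inductively. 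Since there are only finitely many $-$ vertices in the slice, after a.s. finite time they have all become $+$, and thereafter each $(v,u)$ in the slice sees $\deg_{G_n}(u) \geq d+1$ in-slice neighbors all $+$ against at most $d$ outside neighbors, and so the slice is locked to $+$ forever; the case of a $-$-majority good slice is symmetric.

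Combining, every vertex in a good slice freezes, and any potentially non-frozen vertex lies in some bad region $C \times V(G_n)$ where $C$ is a finite bad cluster of $G$. These bad regions are finite, and they are pairwise disconnected inside $G \times G_n$ because they are separated by frozen good slices, so the non-frozen vertices form only finite connected components; this is precisely the almost-freezing property. The ``moreover'' clause then follows from Conclusion \ref{con:almost freezing finite energy reductions}, or equivalently by observing that the finite boundary of each bad region a.s. eventually consists only of vertices frozen in good slices, which reduces the dynamics inside the region to the finite-graph-with-fixed-boundary setting controlled by Lemma \ref{lem:finite energy reducing}. The main technical obstacle is the robustness of the good-slice lock-in: one must confirm that no $+$ vertex of a good slice ever flips even while cross-slice neighbors keep flipping adversarially, and it is precisely the strict margin $(d+1)-d = 1$ together with the monotone nondecreasing in-slice $+$-majority under successive $-$-to-$+$ flips that secures this, drawing on a version of Proposition \ref{prop:Monotonicity}.
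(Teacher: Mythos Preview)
Your proposal is correct and follows essentially the same approach as the paper: use strong freezing to pick $N$ so that a slice is ``good'' with probability exceeding the site-percolation threshold for degree-$d$ graphs, show each good slice locks in robustly via the $(d+1)$-versus-$d$ margin, conclude that bad slices form only finite clusters in $G$, and invoke Conclusion~\ref{con:almost freezing finite energy reductions} for the energy-reducing-flips statement. Your handling of the good-slice lock-in is in fact more carefully spelled out than the paper's, which dispatches the same point with ``clearly'' and ``regardless of what are the spins of their other neighbors.''
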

\begin{rmk}
This result may be generalized to graphs with a small number of vertices of higher degrees.
\end{rmk}
\begin{proof}
It is well known (e.g. \cite{grimmett}) that for any graph $G$ of bounded degree $d$ and countably many vertices, the critical value for Bernoulli site percolation is at least $1 / (d-1)$. In particular, if the probability of ``opening'' a site is at most $1 / d$, then a.s. there are no infinite open cluster of sites. We pick $p=1 - 1 / d$, and find the $n_0$ as guaranteed from Definition~\ref{def:strongly freezing} for this $p$. Let $n>n_0$ be some integer.
Let $U$ be the vertices of some slice of the graph. By definition, with probability $p$,
either all the vertices in $U$ have at least $d+1$ more neighbors in $U^+(\sigma_{0})$ than in $U^-(\sigma_{0})$ or all the vertices in $U$ have at least $d+1$ more neighbors in $U^-(\sigma_{0})$ than in $U^+(\sigma_{0})$. Therefore these vertices will eventually freeze. Indeed, if, without loss of generality, any vertex in this slice has at least $d+1$ more neighbors in $U^+(\sigma_{0})$ than in $U^-(\sigma_{0})$, then clearly the vertices of $U^+(\sigma_{0})$ will never change their spin. Moreover, if we wait enough time, all the vertices of $U^-(\sigma_{0})$ will change their spin to $+1$, regardless of what are the spins of their other neighbors.
We call a slice with the above property $\emph{good}$ and otherwise $\emph{bad}$. Due to the choice of $p$ bad slices form only finite clusters. Moreover, all the edges from these clusters to the rest of the graph are to good slices. Thus, for any cluster of bad slices, there is some time $T$ after which any edge from this cluster to its complement is to a frozen node. This implies. almost freezing a.s.
The second part of the theorem follows from Conclusion \ref{con:almost freezing finite energy reductions}.
\end{proof}

\begin{conclusion}
For any $d\in\mathbb{N}$, there exist a $N\in\mathbb{N}$ such that for any graph $G$ of bounded degree $d$ and countably many vertices, the zero-temperature Glauber dynamics over $G\times K_n$ almost freezes a.s. for all $n\geq N$
Moreover, at any vertex there are at most finitely many energy-reducing flips.
\end{conclusion}
\subsection{Glauber Dynamics on Cylinders}\label{sec:dynamics-cylinder}
$~$\\
In this subsection we treat cylinder-type graphs.

\begin{theorem}\label{thm:cylinders almost freeze}
Let $G$ be a finite graph with no isolated vertices (and at least two vertices). If $\mathbf{P}_{\sigma_{0}}$ is a product measure of i.i.d $\pm 1$ variables, then with $\mathbf{P}_{\sigma_{0},\omega}$-probability $1$ the $G$-cylinder almost freezes. Moreover, any vertex has only finitely many energy-reducing flips.
\end{theorem}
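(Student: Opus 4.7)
The plan is to find, almost surely, infinitely many pairs of consecutive slices along each direction of $\BZ$ whose spins are frozen at their initial value from time $0$ onwards; such frozen pairs will then block any cluster of non-frozen vertices from being infinite, giving almost freezing. Write $p\in(0,1)$ for the $\mathbf{P}_{\sigma_0}$-probability that a single spin equals $+1$, and call a pair of consecutive slices $(i,i+1)$ \emph{good} if at time $0$ all $2|V(G)|$ of its spins coincide. Each pair is good with probability $q:=p^{2|V(G)|}+(1-p)^{2|V(G)|}>0$, and good pairs at disjoint index positions $(2k,2k+1)$, $k\in\BZ$, are mutually independent, so a.s.\ infinitely many of them occur in each direction by Borel--Cantelli.

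The heart of the argument is the claim that a good pair stays frozen, which I would establish by induction on the clock rings inside the pair: if $(i,i+1)$ is good with all spins initially $+1$, then those spins remain $+1$ for all $t\geq 0$, regardless of the spins outside. A vertex $(j,v)$ with $j\in\{i,i+1\}$ has its inside-pair neighbors consisting of $(j',v)$ (the $\BZ$-partner, with $\{j,j'\}=\{i,i+1\}$) together with $\deg_G(v)$ vertices $(j,u)$ for $u\sim_G v$; the hypothesis that $G$ has no isolated vertices gives $\deg_G(v)\geq 1$, so at least $1+\deg_G(v)\geq 2$ of the $2+\deg_G(v)$ neighbors of $(j,v)$ lie inside the pair. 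When a clock inside the pair rings, the inductive hypothesis says all pair spins are still $+1$, so $(j,v)$ sees at least $1+\deg_G(v)$ neighbors of its own sign against at most one of the opposite sign, i.e.\ $\Delta\mathcal H_{(j,v)}>0$; hence no flip, no tie, and the hypothesis is preserved through the next ring. The symmetric argument handles the all-$-1$ case.

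Once good pairs are frozen, the vertices of the cylinder split into the frozen vertices of good pairs and the vertices lying in a finite block of consecutive slices between two adjacent good pairs. Each such block is a finite portion of the cylinder, and any connected cluster of non-frozen vertices is contained in one such block --- it cannot cross a frozen pair --- so every such cluster is finite, which is almost freezing. The clause ``any vertex has only finitely many energy-reducing flips'' then follows directly from Conclusion~\ref{con:almost freezing finite energy reductions}. The main difficulty I anticipate is the frozen-pair claim itself: one must verify that the strict majority genuinely holds for \emph{every} vertex of the pair, which is exactly where taking two slices rather than one, together with the no-isolated-vertex hypothesis, become essential, and that the induction survives all future Poisson rings; strictness of the majority is what rules out ties and makes the coin tosses $\beta_k$ irrelevant to the argument.
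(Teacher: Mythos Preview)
Your proposal is correct and follows essentially the same route as the paper: identify pairs of consecutive monochromatic slices, observe that each vertex in such a pair has at least two neighbors inside the pair versus one outside (using the no-isolated-vertex hypothesis) so the pair is PSF, and use Borel--Cantelli to get infinitely many such frozen barriers in each direction, trapping every non-frozen cluster in a finite block. Your invocation of Conclusion~\ref{con:almost freezing finite energy reductions} for the second clause is if anything cleaner than the paper's citation of Lemma~\ref{lem:finite energy reducing}, and your explicit strict-majority count is a welcome elaboration of the terse ``at least $2$ neighbors in the pair and $1$ neighbor not in the pair'' in the paper.
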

\begin{proof}
Without loss of generality we may assume the graph is connected.
With $\mathbf{P}_{\sigma_{0},\omega}$-probability 1 for any integer $l$ there are two other integers $i,j$ such that $i+1<l<j$ with the property that each pair of slices, in levels $i,i+1$, and $j,j+1$, are given the same spin.
When $G$ is connected, any such pair of slices form a PSF. Indeed, each vertex in the pair has at least $2$ neighbors in the pair and $1$ neighbor not in the pair. Therefore, a.s. any vertex belongs to a connected subgraph whose boundary is frozen from time $t=0$ on. But this means that there are no infinite connected unfrozen subgraphs. The second part of the theorem follows from
Lemma \ref{lem:finite energy reducing}.
\end{proof}
\begin{rmk}
A similar claim holds for bounded-width deformed cylinder with some condition on the edges between two successive slices. 
\end{rmk}

Yet, not every cylinder freezes, as the example below shows.
\begin{exam}\label{ex:non freezing cylinder}
Let $G$ be the union of two copies of $K_n$ which share a single common vertex where $n\geq 4$. We now show that the $G$-cylinder does not freeze.
Indeed, suppose that at time $t=0$ the vertices of the five successive slices at levels $-2,-1,0,1,2$ are assigned the following spins:
\begin{itemize}
\item
    All the vertices of slices $-1,-2$ are assigned $-1$.
\item
    All the vertices of slices $+1,+2$ are assigned $-1$.
\item
    All vertices of one copy of $K_n$ in slice $0$ are given spin $-1$
    and the rest are set to $+1$.
\end{itemize}
This happens with positive probability. In this case all vertices except the vertex which is common to the two copies of $K_n$ in slice $0$, will not change their spin. This vertex, however, will change its spin infinitely many times, as it will always have an equal number of $+1$ and $-1$ neighbors.
\begin{figure}[h]
\begin{center}
\includegraphics[width=0.7\textwidth]{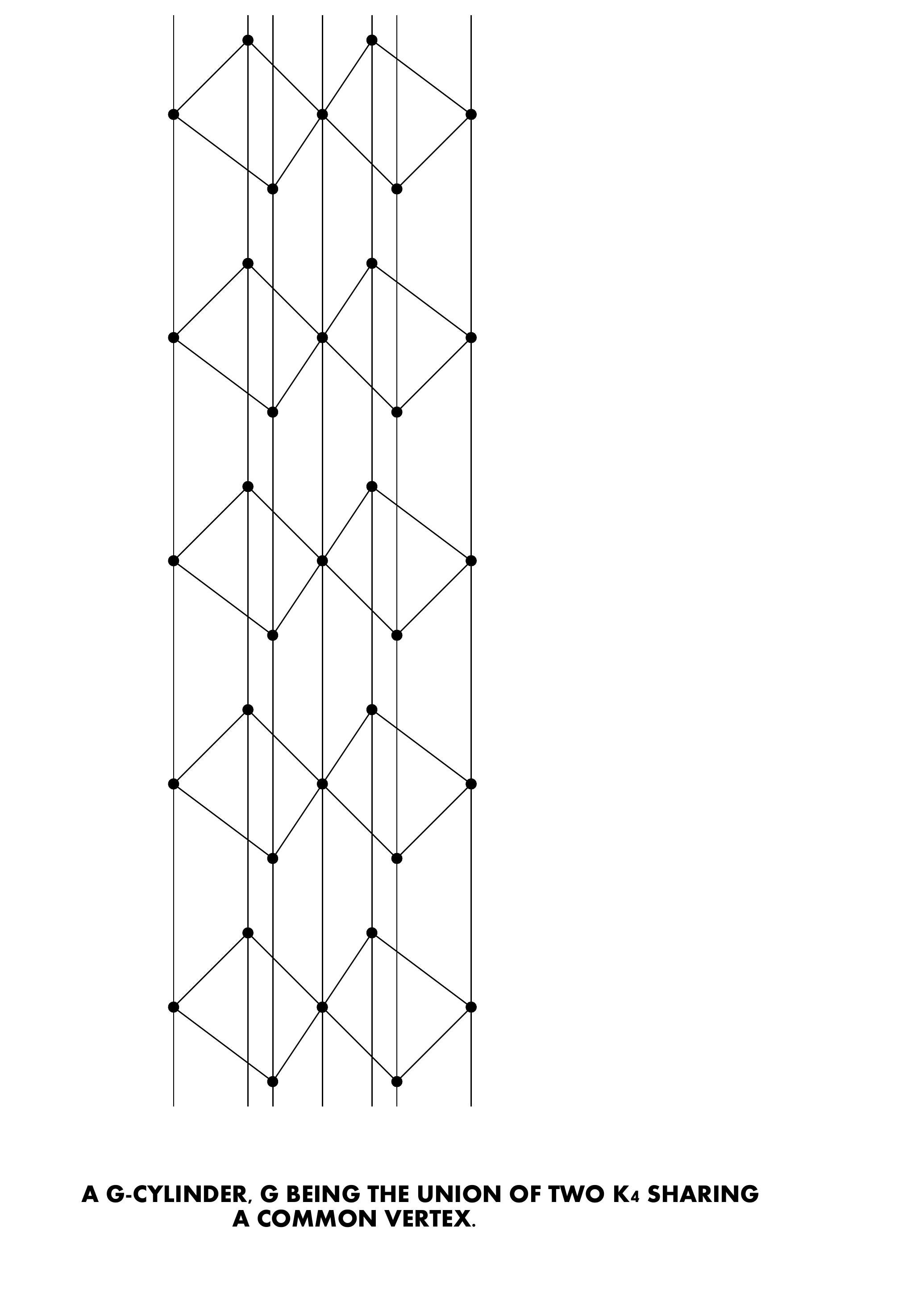}
\caption{G-cylinder for Example \ref{ex:non freezing cylinder}}
\end{center}
\end{figure}
\end{exam}

One the other hand, the dynamics on some cylinders do freeze.
\begin{theorem}
For any $n>1$ the dynamics over the $C_n$-cylinder and the $K_n$-cylinder freeze a.s. Moreover, it freezes in slices with probability $1$.
\end{theorem}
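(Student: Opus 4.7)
The plan is to combine Theorem~\ref{thm:cylinders almost freeze} with a finite Markov chain analysis inside the regions it isolates. Since $C_n$ and $K_n$ ($n>1$) are finite, connected graphs with no isolated vertices, Theorem~\ref{thm:cylinders almost freeze} applies to both cylinders, so with $\mathbf{P}_{\sigma_0,\omega}$-probability $1$ every vertex lies in a finite unfrozen cluster sandwiched between two PSF pairs of consecutive slices of uniform spin, and by Conclusion~\ref{con:almost freezing finite energy reductions} each vertex undergoes only finitely many energy-reducing flips. Inside each such finite region the evolution reduces to a continuous-time Markov chain on finitely many configurations, and after a random time $T$ only tie flips remain. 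Freezing in slices then follows from two claims: (i) every absorbing configuration (no flip possible, not even a tie) is uniform on each slice, and (ii) the chain reaches an absorbing configuration almost surely.

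For (i) I would run a column-propagation argument. Suppose an interior slice $S$ is mixed, and write $a_j, b_j$ for the spins of the column neighbours of the vertex at column $j$ in $S$ (the corresponding vertices in the slices directly above and below). In $K_n$, letting $P, M$ denote the plus/minus counts in $S$, the strict no-flip inequalities at a $+$ vertex (column $j$) and a $-$ vertex (column $j'$) read
\[
    P - M + a_j + b_j > 1, \qquad  P - M + a_{j'} + b_{j'} < -1.
\]
Subtracting and using $a_\star + b_\star \in \{-2,0,2\}$ forces $a_j = b_j = +$ and $a_{j'} = b_{j'} = -$, so the mixed column pattern of $S$ is reproduced verbatim on the two neighbouring slices, which are themselves mixed. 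For $C_n$ one picks adjacent columns of $S$ with opposite spins (guaranteed whenever the cyclic slice is mixed) and derives the same column-match conclusion from the analogous inequalities with the two cyclic slice-neighbours. Iterating up and down to the uniform PSF pair produces a contradiction, so every absorbing configuration is uniform on each slice.

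For (ii), since the chain is energy-monotone on a finite state space with integer-valued, bounded-below $\mathcal{H}$, it eventually settles into a same-energy tie class. If the class is not absorbing (i.e. has more than one state), I would exhibit within it a state admitting a deterministic, strict-energy-decreasing flip, giving an exit to a lower energy level. The idea is that the transitive symmetry of $K_n$ and $C_n$ inside a slice lets us tie-walk to a configuration with a ``strict minority'' vertex (a $+$ vertex sitting in an otherwise $-$-dominated neighbourhood, or vice versa), which then flips deterministically; the column-propagation obstruction from (i) forbids every non-uniform tie pattern from being a closed local minimum of $\mathcal{H}$. Iterating this strict descent must terminate at an absorbing state because $\mathcal{H}$ is bounded below, and standard theory for finite-state continuous-time Markov chains with positive Poisson jump rates then gives almost-sure hitting of the absorbing set in finite time. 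Combined with (i), the absorbing configuration hit is uniform on each slice, yielding both freezing and freezing in slices.

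The main obstacle is the tie-class exit in (ii): one must show that every non-absorbing same-energy class in the finite region contains at least one configuration admitting a deterministic energy-decreasing flip. This is a purely combinatorial statement about configurations at a fixed energy level; the $K_n$ version reduces to a per-slice computation using that every pair of slice-vertices is adjacent, while the $C_n$ version follows by tracking adjacent opposite-spin column pairs along the cycle and invoking the column-propagation machinery of (i). Granted the exit, Steps (i) and (ii) together give freezing a.s.\ and freezing in slices a.s.
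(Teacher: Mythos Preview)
Your step (i) is correct and the column-propagation argument is clean. The gap is step (ii), and you flag it yourself: the assertion that every non-singleton same-energy tie class contains a state admitting a strict energy-decreasing flip is stated but not proved. The justification you offer does not work as written. The propagation argument of (i) relies on \emph{strict} stability at every vertex of the mixed slice; once a vertex is merely tied you no longer force $a_j=b_j=+$, the pattern fails to propagate, and so (i) does not transfer to tie plateaus. Likewise, ``tie-walk to a strict-minority vertex'' is a hope, not an argument: you would need to show that from any tied configuration one can reach, via zero-energy flips alone, a configuration with a strictly unstable vertex, and nothing in your sketch establishes this for either $C_n$ or $K_n$. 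With (ii) left as a granted hypothesis, the proof is incomplete.

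The paper sidesteps this difficulty by a constructive route that replaces both your (i) and (ii) at once. From an \emph{arbitrary} configuration of the finite region between the two PSF boundary pairs, it writes down an explicit feasible sequence of legal (tie or energy-decreasing) flips, sweeping the interior slices one at a time from one boundary toward the other, that terminates in a frozen-in-slices configuration; the cases for $C_n$ and $K_n$ are handled separately and each is a short elementary check. Since the region is finite this gives a uniform positive lower bound on the probability of reaching a frozen-in-slices state in one time unit from any state, and Lemma~\ref{lem:from newman} (applied with $A$ a recurring configuration and $B$ the event of reaching such a state) then forces eventual absorption. No classification of absorbing states and no analysis of tie plateaus is needed.
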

\begin{proof}
As in Theorem \ref{thm:cylinders almost freeze} each vertex lays between two all $+$ or all $-$ pairs of slices which are frozen from time $t=0$.
Consider the subgraph between two such pairs. If it does not freeze in slices, either it does not freeze, or it freezes, but not in slices. In any case, as it is a finite graph, some spin configuration will recur in the sense of \eqref{def:recur1}. If we show that from every such configuration, there is a positive
probability to get to a configuration which is frozen in slices in one time unit, then by Lemma \ref{lem:from newman} we will arrive to a contradiction. Indeed, this would imply that with probability one we will eventually get to a configuration which is frozen in slices, which is absorbing, by definition.

We begin with the $C_n$ case. Without loss of generality we shall consider the sub-graph $C_n \times [-2,...,m+2]$ and assume that at some time $t$ the vertices in each of the two pairs of slices: $-1,-2$ and $m+1,m+2$ are either all $+$ or all $-$ (the spins can be different for each pair), while all other vertices have some prescribed spin. The ``procedure'' below produces a sequence of flips which, if occur in order and starting from the above configuration, will result in a configuration where all vertices in each slice have the same spin which is also the same as the spin of all vertices in the slice below or above it. Such configuration is frozen in slices, as desired. This sequence of flips will be feasible, i.e. it could be made to happen via a proper choice for clock-ring times and coin-tosses, which happens with positive probability.

Suppose that the spins of the vertices in slice $-1$ are all $+1$. Consider the slice in level $0$:
\begin{enumerate}
\item
    If all its vertices are $+1$, do nothing and move to consider the slice at level $1$.
\item
    If they are not all $-1$, there is some vertex with a spin of $+1$. Then its
    neighbors in that slice have at least two (out of four) neighbors of the same spin and therefore they can be made $+1$. Iterate to set all vertices in
this slice to $+1$. Move to consider the slice at level $1$.
\item
    If all the vertices have $-1$ spins, consider the slice at level $1$. If all its vertices are $-1$, do nothing and move to consider the slice at level $2$.
\item
Otherwise, there is some vertex with $+1$ spin in the slice of level $1$. Its neighbor at level $0$ has again two out of four neighbors with spin $+1$ and can be made $+1$. Iterate until all the vertices in slice $0$ are $+1$. Then move to consider slice $1$.
\item
    Repeat these steps for the new considered slice. Stop when you reach level $m+1$.
\end{enumerate}
This shows the theorem for $C_n$.


We now consider $K_n$. For $n=3$, $C_n = K_n$. Therefore we assume $n \geq 4$.
Note that for such $K_n$-cylinder, a slice with all of its vertices having the same spin is frozen, since the majority of the neighbors of each vertex are in the slice itself. Therefore we can now consider a finite subgraph of the form $K_n \times [-1,...,m+1]$ and assume that at some time $t$ the configuration is such that all vertices in slices $-1$, $m+1$ have the same spin (maybe different for each slice) with some prescribed spins for the rest of the vertices. The ``procedure'' for obtaining a sequence of flips that yields a frozen-in-slices configuration is now:
\begin{enumerate}
\item
    Consider slice $0$. If more than half of its vertices have the same spin $s$,
    then any vertex with the opposite spin has at least the same number of
    neighbors with spins $s$ as the number of neighbors with the spin $1-s$.
    Indeed, if $n=2k$, then at least $k+1$ have spins $s$. Any vertex in the
slice, with the opposite spin has at least $k+1$ neighbors with that spin and $2k-1+2 = 2k+1$ neighbors all-together. If $n=2k+1$, then there are at least $k+1$ vertices of spin $s$, and any vertex with the opposite spin has at least that numbers of neighbors with spin $s$ among $2k+1-1+2 = 2k+2$ neighbors all-together. Thus, for any vertex with a spin different from $s$, we flip to $s$. Then we move to consider the next slice.
\item
    If exactly half of the vertices in the slice have spin $s$, then in particular $n=2k$ for some positive integer $k$ and there are $k$ vertices with spin $+1$. Consider a vertex with spin $-1$, then it has at least $k+1$ neighbors with spin $+1$ (including slice $-1$), among $2k+1$ neighbors all-together. Thus, we can set all these vertices to $+1$ as well. We move to consider the next slice.
\item
    We stop when we reach slice $m+1$.
\end{enumerate}
This proves the theorem for $K_n$.
\end{proof}

\begin{rmk}
It follows from \cite{NandaNewmanStein} that on any transitive unimodular graph with $\emph{odd}$ degrees, the dynamics freezes.
\end{rmk}

A sufficient condition on the graph $G$ for the dynamics over $G$-cylinder not to freeze in slices is given in the proposition below.
\begin{prop}
Let $G=(V,E)$ be a graph and suppose that there exists a partition of $V$ into $V_1, V_2$, such that every vertex in $V_1$ has more neighbors in $V_1$ than in $V_2$ and every vertex in $V_2$ has more neighbors in $V_2$ than in $V_1$. Then $\mathbf{P}_{\sigma_{0},\omega}$-a.s. the dynamics on the $G$-cylinder will not freeze in slices.
\end{prop}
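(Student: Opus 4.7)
The plan is to construct, with uniformly positive probability, an initial configuration on a window of five consecutive slices that is a strictly stable absorbing state of the dynamics and whose central slice carries both signs, and then to conclude by Borel--Cantelli along the cylinder direction.

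First I would introduce the \emph{alternating} sign $\tau:V\to\{-1,+1\}$ with $\tau(v)=+1$ if $v\in V_1$ and $\tau(v)=-1$ if $v\in V_2$, and note that the configuration $\sigma^{\ast}(k,v)=\tau(v)$ on $\BZ\times G$ is a strict fixed point. Writing $d_i(v)$ for the number of neighbors of $v\in V$ that lie in $V_i$, a vertex $(k,v)$ with $v\in V_1$ sees $2+d_1(v)$ neighbors of spin $+1$ (the two vertical copies, both $V_1$-vertices, together with its in-slice $V_1$-neighbors) against $d_2(v)$ neighbors of spin $-1$, which is a strict $+1$ majority by the hypothesis $d_1(v)>d_2(v)$; the case $v\in V_2$ is symmetric.

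For each integer $a$ I would then consider the event $E_a$ that $\sigma_0(k,v)=+1$ for all $k\in\{a-2,a-1\}$ and $v\in V$, $\sigma_0(k,v)=-1$ for all $k\in\{a+1,a+2\}$ and $v\in V$, and $\sigma_0(a,v)=\tau(v)$ for all $v\in V$. Under the symmetric product initial distribution, $\mathbf{P}_{\sigma_0}(E_a)=2^{-5|V|}>0$. On $E_a$, the two flanking pairs of slices are PSFs by the same neighbor count as in the proof of Theorem~\ref{thm:cylinders almost freeze}, and hence remain fixed at their initial spins for all $t\geq 0$ irrespective of what happens elsewhere. Under these fixed boundaries, a vertex $(a,v)$ with $v\in V_1$ sees $1+d_1(v)$ neighbors of spin $+1$ (one from the flanking $+1$ slice at level $a-1$, together with its horizontal $V_1$-neighbors) against $1+d_2(v)$ of spin $-1$ (one from the flanking $-1$ slice at level $a+1$, together with its horizontal $V_2$-neighbors), a strict $+1$ majority by hypothesis; symmetrically for $v\in V_2$. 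Since every vertex in slice $a$ is then strictly stable and its only non-PSF neighbors lie in slice $a$ itself, an immediate induction shows that no vertex in slice $a$ ever flips on $E_a$. Thus slice $a$ contains both signs for all $t\geq 0$ on $E_a$, so freezing in slices fails on this event.

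Finally, taking $a_n=5n$ makes the events $E_{a_n}$ measurable with respect to disjoint sets of initial coordinates, so they are independent with common positive probability $2^{-5|V|}$. The second Borel--Cantelli lemma then gives that $\mathbf{P}_{\sigma_0,\omega}$-a.s.\ some (in fact infinitely many) $E_{a_n}$ occur, which forces the dynamics not to freeze in slices. The main obstacle, and the only place where the partition hypothesis is actually used, is the strict--majority verification on the central slice with its mixed $+1/-1$ flanking boundaries; everything else is a short PSF check followed by independent Borel--Cantelli.
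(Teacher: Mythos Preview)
Your argument is correct and follows essentially the same construction as the paper: five consecutive slices with two monochromatic $+1$ slices, two monochromatic $-1$ slices, and the central slice split according to $V_1,V_2$, yielding a strictly stable block that cannot become monochromatic. The only difference is that you explicitly upgrade the positive-probability conclusion to an almost-sure one via Borel--Cantelli along disjoint windows, whereas the paper stops after exhibiting the stable configuration (the passage to probability~$1$ being implicit from shift-ergodicity of the product measure along $\BZ$). Your opening paragraph about the globally constant-in-$k$ configuration $\sigma^\ast(k,v)=\tau(v)$ is a harmless detour, since the relevant stability check is the one you carry out afterwards with the mixed $\pm1$ vertical boundaries.
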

\begin{proof}
We follow the construction in Example \ref{ex:non freezing cylinder}. With positive probability slices $-2,-1$ can be initially given the spin $+1$ and the slices $1,2$ can be intially given spin $-1$. Then, still with positive probability, all vertices in slice $0$ with their second coordinate in $V_1$ are initialized with $+1$ and all vertices in slice $0$ with their second coordinate in $V_2$ are initialized with $-1$. This configuration is stable, i.e. there can be no flip which does not $\emph{increase}$ the energy.
\end{proof}
\begin{exam}
Let $G$ be $C_3 \times \mathbb{T}_n$, where $\mathbb{T}_n$ is the $n\times n$-torus or $\mathbb{Z}_n^{2}$, then the $G$-cylinder does not freeze in slices. Note that this is a (unimodular) transitive Cayley graph. Indeed, as in Example \ref{ex:non freezing cylinder} with positive probability the vertices of slices $-2,-1$ can be initially given the spin $+1$ and the vertices of slices $1,2$ can be initially given the spin $-1$. Now the $0$-slice is composed of three tori. With positive probability two of them are set to $+1$ and one is set to $-1$. It is easy to verify that this is a stable state, i.e. no vertex will flip its spin. Thus, with probability $1$ even if the cylinder freezes, it will not freeze in slices.
\end{exam}
\newpage
\section{The Loop dynamics and its Applications}\label{sec:dynamics-loops}
$~$\\
\subsection{Background and Preliminaries}\label{sec:loops-background}
$~$\\
In this section we define and explore a different evolution model, which is related to the GSPs of planar spin system. Much like the Glauber dynamics, transitions are local, i.e. allowed only between two configurations which differ at a finite number of vertices. While the Glauber dynamics for Ising yields the Ising Gibbs distribution in equilibrium, under the process which we define here, the weak limits of the distribution of the spins at time $t$ as $t \to \infty$, are all supported on the set of ground states configurations. All the results here are stated for planar lattices, but can be readily extended to all amenable Cayley graphs.

Let $G$ be a planar graph embedded in the plane is such a way that any compact subset of the plane contains only finitely many vertices. Let $\Gamma$ be the set of all simple loops in the dual graph $G^{\ast}$. A set of positive numbers indexed by $\gamma \in \Gamma$, $\{f_\gamma\}_{\gamma\in\Gamma}$, will be called a set of $\emph{loop frequencies}$. An assignment $J_{xy}$ to each edge $x \thicksim y$ will be called a $\emph{coupling}$ or $\emph{interaction}$ as before. As before, $\mathcal{X}$ = $\{-1,1 \}^{G}$ will denote the space of all configurations over $G$, the Hamiltonian of the system is as before
\begin{equation}
\mathcal{H}(\sigma) = -\sum_{x\thicksim y} J_{xy} \sigma_x\sigma_y.
\end{equation}

The $\emph{loop-dynamics}$ with frequencies $\mathbf{f}=\{f_\gamma\}_{\gamma\in\Gamma}$ is a Markov process over the $\mathcal{X}$. With each loop $\gamma$, we associate an independent Poisson clock with rate $f_\gamma$. Given all  couplings $J_{xy}$ and an initial configuration, at every clock ring, say of $\gamma$, the next configuration depends on the total energy contribution from edges whose dual is in $\gamma$, i.e. on
\begin{equation}\label{eq:Hamiltonian-loop}
\mathcal{H}^\gamma_\mathcal{J}(\sigma) = -\sum_{<x,y>\in\gamma^{*}}J_{xy}\sigma_x\sigma_y
\end{equation}
If it is positive, the spins of all vertices within (the finite subset of the plane bounded by) $\gamma$ are flipped. If it is zero, the same spins are flipped with probability $1/2$ independent of everything else. If it is negative, we do nothing. Clearly, in the first case, the total energy of the system is lowered by $2\mathcal{H}^\gamma_\mathcal{J}(\sigma)$ and we call such a step an $\emph{energy-reducing step}$. Note that it is not clear, yet, if this dynamical process is well defined. This will be the first main result of this chapter.

As usual, we denote by $\mathbf{P}_\omega$ the underlying probability measure for all Poisson clocks and coins variables, $\mathbf{P}_{\sigma_{0}}$ for the initial configuration $\sigma_{0}$ and $\mathbf{P}_{\mathcal{J}}$ for
the coupling variables $J_{xy}$. We assume as usual that $\mathbf{P}_{\sigma_{0}}$ is the product measure of some distribution on $\{-1, +1\}$ and in particular, unless stated differently, that this is the symmetric $\pm 1$ distribution. Similarly under $\mathbf{P}_{\mathcal{J}}$ the couplings $J_{xy}$ are i.i.d. and we shall further suppose that their distribution has a first moment.
The joint distribution is $\mathbf{P}_{\mathcal{J},\sigma_{0},\omega}=\mathbf{P}_\mathcal{J}\times\mathbf{P}_{\sigma_{0}}\times\mathbf{P}_\omega$.

For what follows we set $G = \mathbb{Z}^{2}$. We shall say that $v$ is inside $\gamma$ or that $\gamma$ contains $v$ if $v$ is inside the finite subset of the plane which is bounded by $\gamma$. We denote by $V_\gamma$ the set of all vertices which are either inside $\gamma$ or on an edge whose dual is in $\gamma$. Let $l_\gamma$ be the number of dual edges on $\gamma$ and set $S_\gamma = |V_\gamma|$. Two loops $\gamma_1$ and $\gamma_2$ are congruent $\gamma_1 \sim \gamma_2$ if they can be obtained from one another by a composition of translations, rotation by 90 degrees and reflections. The $\emph{type}$ of a loop $[\gamma]$ is the equivalence class to which it belongs, under the above congruency relation. $[\Gamma]$  will stand for the collection of all loop types. 

A set of loop frequencies $f$ is called $\emph{proper}$ if it constant on each $[\gamma]$. In this case, we shall also treat $f$ as a function of $[\Gamma]$. In the same vein, we define $[\Gamma]$-versions of $l$ and $S$ as
$l_{[\gamma]} = l_\gamma$ and $S_{[\gamma]} = S_\gamma$. Finally, $n_{[\gamma]}$ is the number of loops of type $[\gamma]$ which contain the origin. 

\subsection{Main Results}\label{sec:loops-main res}
The first thing we show is that the loop-dynamics is well defined, is the rates of the clocks satisfy some inequality, this is Theorem \ref{thm:loop dynamics well defined}. We then prove, in Theorem \ref{thm:weak limits in loop dynamics} that any subsequential limit of the dynamical process is supported only on GSPs. In case of Ferromagnetic system we show that the dynamics has a weak limit, which is supported on the GSP with all the spins being the same. 

\subsection{Properties of the Loop Dynamics}\label{sec:loops-proofs}

The following theorem gives conditions on $f$ such that the loop-dynamics process is well defined

\begin{theorem}\label{thm:loop dynamics well defined}
If
\begin{equation}
\label{eqn:WellDefinedness}
\sum_{[\gamma]\in \Gamma}n_{[\gamma]} f_{[\gamma]} S_{[\gamma]} < \infty
\end{equation}
then the loop dynamics is $\mathbf{P}_{\mathcal{J},\sigma_{0},\omega}$-a.s well defined.
\end{theorem}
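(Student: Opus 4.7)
The approach is to show that, for each fixed vertex $v \in \mathbb{Z}^2$ and each $t > 0$, the value $\sigma_t(v)$ is determined by only finitely many clock rings and initial spins with $\mathbf{P}_{\mathcal{J},\sigma_0,\omega}$-probability one; once this is established, the dynamics can be constructed on the full-measure event by processing these rings in time order. Define the \emph{causal past} $\mathcal{P}(v,t)$ to consist of all clock rings $(\gamma, s)$ such that $s \leq t$ and there exists a decreasing chain of rings $(\gamma_1, s_1), (\gamma_2, s_2), \ldots, (\gamma_k, s_k) = (\gamma, s)$ with $t \geq s_1 > s_2 > \cdots > s_k$, together with ``contact'' vertices $v = u_0, u_1, \ldots, u_{k-1}$ satisfying $u_{i-1}$ inside $\gamma_i$ and $u_i \in V_{\gamma_i}$ for every $i$. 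Any ring whose outcome could affect $\sigma_t(v)$ lies in $\mathcal{P}(v,t)$, so it suffices to prove $|\mathcal{P}(v,t)| < \infty$ almost surely.

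For the expectation bound, set $A := \sum_{[\gamma]} n_{[\gamma]} f_{[\gamma]}$ and $B := \sum_{[\gamma]} n_{[\gamma]} f_{[\gamma]} S_{[\gamma]}$; the hypothesis is $B < \infty$, and since $S_{[\gamma]} \geq 1$ we also have $A \leq B$. By translation invariance of $\mathbb{Z}^2$ and properness of $f$, for every vertex $w$ the total rate of rings of loops containing $w$ equals $A$, and its $S$-weighted version equals $B$. Applying Campbell's formula to the independent Poisson processes along the chain, the time integration over the simplex $\{t \geq s_1 > \cdots > s_k \geq 0\}$ contributes $t^k/k!$, while the spatial sum collapses inductively: the innermost $\sum_{\gamma_k \ni u_{k-1}} f_{\gamma_k}$ gives $A$, each intermediate $\sum_{u_i \in V_{\gamma_i}} (\cdot)$ pulls out $S_{\gamma_i}$, and each following $\sum_{\gamma_i \ni u_{i-1}} f_{\gamma_i} S_{\gamma_i}$ yields a factor $B$. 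This produces the bound
\[
\mathbb{E}|\mathcal{P}(v,t)| \;\leq\; \sum_{k \geq 1} \frac{t^k}{k!}\, A\, B^{k-1} \;=\; \frac{A}{B}\bigl(e^{Bt} - 1\bigr) \;<\; \infty.
\]

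Almost-sure finiteness of $\mathcal{P}(v,t)$ follows, and a countable union over $v \in \mathbb{Z}^2$ and over a dense set of times gives joint finiteness on a single full-measure event. On this event, $\sigma_t(v)$ is obtained by listing the rings of $\mathcal{P}(v,t)$ in increasing time order and applying, at each ring $(\gamma_i, s_i)$, the flip rule based on $\mathcal{H}^{\gamma_i}_\mathcal{J}(\sigma_{s_i^-})$ (with a coin toss in the tie case); the spins needed at each step are already determined by earlier entries of $\mathcal{P}$ or by $\sigma_0$, so the computation is finite and unambiguous. The main obstacle I anticipate is the chain estimate itself: one must show that the chain count genuinely dominates $|\mathcal{P}(v,t)|$ rather than merely an auxiliary multigraph of walks, which is immediate since each element of $\mathcal{P}(v,t)$ is the endpoint of at least one admissible chain, and one must verify that the telescoping reduction used in the Campbell expansion is legitimate despite possible reuse of loops at different chain positions, which it is because the Poisson processes on distinct loops are independent and each loop contributes only its own rate at each level of the expansion.
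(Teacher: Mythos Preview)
Your proof is correct and follows the same core idea as the paper: show that the set of clock rings (or, in the paper's version, the set of vertices) whose outcome can influence $\sigma_t(v)$ is almost surely finite, by comparison with a branching structure whose mean offspring is controlled by $\sum_{[\gamma]} n_{[\gamma]} f_{[\gamma]} S_{[\gamma]}$.

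The only substantive difference is in the packaging. The paper chooses a small time horizon $\tau$ so that the expected number of offspring $\tau \sum_{[\gamma]} n_{[\gamma]} f_{[\gamma]} S_{[\gamma]}$ is less than $1$, dominates the vertex-dependency set $D$ by a \emph{subcritical} Galton--Watson tree (finite a.s.), and then iterates in time steps of length $\tau$. You instead keep $t$ arbitrary and exploit the time ordering $s_1 > \cdots > s_k$ in the chain to extract the simplex factor $t^k/k!$, which makes the expected chain count $\sum_k t^k A B^{k-1}/k! = (A/B)(e^{Bt}-1)$ finite for every $t$ in one shot. Your route is slightly slicker (no restart argument) at the cost of invoking the Campbell/Mecke formula for ordered $k$-tuples of Poisson points; the paper's route is more elementary but requires the extra time-iteration step. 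Either way, the essential estimate and the role of the hypothesis $B<\infty$ are identical.
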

\begin{proof}
To show that the process is well defined it suffices to find $\tau > 0$ such that the spin configuration can be computed up to time $\tau$ given the initial configuration and the ring-times of all clocks and coins. This is clearly the case for a finite graph, but for infinite graph, it is a priori possible to have an infinite sequence of vertices, with the spin of each vertex (up to time $\tau$) depending (due to a possible spin-flip) on the spin of the next vertex in the sequence. The condition in the theorem will ensure that with probability $1$, this dependency sequence is finite for all vertices. We may then compute the configuration at each time $t$, by progressing in units of $\tau$ time.

Let $D$ be the smallest set of vertices, that contains the origin and satisfies the following property:

If $v\in D$ and $\gamma$ is a loop for which $v\in V_\gamma$ and whose clock rang no later than time $\tau$, then $V_\gamma$ is also contained in $D$.
Since equation \eqref{eqn:WellDefinedness} guarantees that for any vertex $v$ and time $t$, the total number of rings up to time $t$ at the clocks of all loops containing $v$ is finite, it is enough to show that $D$ above is almost surely finite.

Indeed, it is not difficult to see that the number of vertices in $D$ is stochastically dominated by the total number of nodes in a Galton-Watson tree, where the distribution of the number of decedents is Poisson with rate
$\tau \sum_{[\gamma]\in \Gamma}n_{[\gamma]} f_{[\gamma]} S_{[\gamma]}$. By choosing $\tau$ small enough, we can make the latter smaller than $1$, in which case the size of the tree is finite almost surely as desired.
\end{proof}
\begin{rmk}
The loop dynamics, as defined, may be considered as an extension of the zero-temperature Glauber dynamics. In the same manner, one may consider the analogous extension of the regular Glauber dynamics (with positive temperature). In that case when the clock of some loop rings, the probability of a flip depends on the relevant Boltzmann-Gibbs law for that temperature. The exact same analysis can be applied to show that this dynamical process is also well-defined.
\end{rmk}
Next, we investigate the possible weak limits of $\mathbf{P}_{\mathcal{J},\sigma_{0},\omega} (\sigma_t \in \cdot)$  as $t \to \infty$. We shall show that any subsequential limit must be supported on ground state configurations.
However, first we need the following.
\begin{lemma}\label{lem:finite loop flips}
Under the above assumptions, if in addition the rates satisfy the inequality (for every edge $e$ and $l\in\mathbb{N}$)
\begin{equation}
\sum_{[\gamma]\in \Gamma_{e}(l)} f_{[\gamma]} < e^{-10 l}
\end{equation}
where $\Gamma_{e}(l)$ is the collection of loops of length at least $l$ which $e$ lays on,
then for any loop $\gamma$ there are only finitely many energy reducing steps. In particular, if the couplings have an absolutely-continuous distribution (w.r.t. Lebesgue measure), then every loop flips only finitely many times.
\end{lemma}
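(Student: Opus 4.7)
The plan has two parts, matching the structure of the lemma.

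First, I would dispatch the ``in particular'' clause. Under an absolutely continuous coupling law, the countable collection of sums $\mathcal{H}^{\gamma'}(\sigma) = -\sum_{\{x,y\}\in\gamma'^*} J_{xy}\sigma_x\sigma_y$, taken over all loops $\gamma'$ and all spin assignments on the finite sets $V_{\gamma'}$, is almost surely nowhere zero. Hence a tie-breaking coin toss never fires, so every spin flip produced by the dynamics is an energy-reducing step, and the ``only finitely many flips'' conclusion reduces to the main statement.

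For the main statement, fix a loop $\gamma$ and a radius $R$, let $B$ be the set of vertices at graph distance at most $R$ from $V_\gamma$, and consider the finite-volume Hamiltonian $H_B(t)=-\sum_{\{x,y\}\in E_B} J_{xy}\sigma_x(t)\sigma_y(t)$. Since $J$ has a finite first moment, $M_B:=\sum_{e\in E_B}|J_e|$ is almost surely finite, and $|H_B(t)|\le M_B$. Splitting flips into those whose loop is fully inside $B$ and those that cross $\partial B$, I obtain
\begin{equation*}
\sum_{\substack{\text{energy-reducing flips in }[0,T]\\V_{\gamma'}\subseteq B}} 2\mathcal{H}^{\gamma'}(\sigma_{t-}) \;=\; H_B(0)-H_B(T)\;+\;\sum_{\substack{\text{flips in }[0,T]\\V_{\gamma'}\not\subseteq B}} \Delta_{\gamma'}H_B,
\end{equation*}
each boundary term satisfying $|\Delta_{\gamma'}H_B|\le 2\sum_{e\in E_B\cap\gamma'^*}|J_e|$. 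Since $\gamma$ is a fully-inside loop (for $R\ge 1$) and each of its energy-reducing flips contributes at least $2\epsilon_\gamma$, with $\epsilon_\gamma:=\min\{\mathcal{H}^\gamma(\sigma)\colon \sigma\in\{\pm1\}^{V_\gamma},\mathcal{H}^\gamma(\sigma)>0\}>0$ (a finite positive minimum whenever the set is nonempty, else $N_\gamma=0$ trivially), I get
\begin{equation*}
2\epsilon_\gamma\, N_\gamma(T)\;\le\;2M_B\;+\;\Big|\sum_{\text{crossing flips in }[0,T]}\Delta_{\gamma'}H_B\Big|.
\end{equation*}

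The rest of the argument must show that the boundary term on the right is almost surely bounded in $T$ for suitably chosen $R$. This is where the strong decay hypothesis enters. Any crossing loop $\gamma'$ has $l_{\gamma'}\ge 2(R-\operatorname{diam}\gamma)$, so by $\sum_{[\gamma']\in\Gamma_e(l)}f_{[\gamma']}<e^{-10l}$ and the standard connective-constant bound on the number of loop types of given length through a fixed edge, the total rate of crossing-loop rings whose dual intersects $E_B$ is at most $|E_B|\cdot \sum_{l\ge 2R}\mu^l e^{-10l}$, which decays faster than $e^{-cR}$ for any $c<20-\log\mu$. The expected boundary contribution on $[0,T]$ is then $O(Te^{-cR})$, which is only finite for finite $T$; the main obstacle of the proof is upgrading this to an almost-sure bound in $T$. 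I would handle this via a Borel--Cantelli argument along dyadic time windows $T_k=2^k$ paired with boxes of slowly growing radius $R_k$, chosen so that $\sum_k\mathbb{P}\bigl(\sup_{t\in[T_{k-1},T_k]}|\text{boundary contribution}|\ge 1\bigr)<\infty$. The exponential factor $e^{-cR_k}$ leaves plenty of room compared to the polynomial (in $T_k$) first- and second-moment estimates for the boundary martingale, so this scheme should close; the technical heart is writing the boundary contribution as a compensated sum, bounding its predictable quadratic variation by the same rate estimate, and applying an $L^2$-maximal inequality before summing over $k$. Once the boundary contribution is a.s.\ bounded uniformly in $T$, the displayed inequality immediately yields $N_\gamma(\infty)<\infty$.
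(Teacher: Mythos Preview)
Your reduction of the ``in particular'' clause is fine, and your energy bookkeeping identity for a fixed box $B$ is set up correctly. The gap is in the last step, and it is not a technicality.

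Your scheme pairs dyadic time windows $T_k=2^k$ with boxes $B_{R_k}$ of growing radius, and uses Borel--Cantelli to control the boundary contribution on $[T_{k-1},T_k]$ for the box $B_{R_k}$. But the displayed inequality you wish to apply at time $T_k$ involves a \emph{single} box: the term $2M_B$ and the boundary term must refer to the same $B$. If you take $B=B_{R_k}$, then $M_{B_{R_k}}$ grows (polynomially in $R_k$, hence unboundedly in $k$), so the bound on $N_\gamma(T_k)$ diverges. If instead you fix $B$, the boundary contribution on $[0,T]$ has expectation growing linearly in $T$ (crossing loops ring at a fixed positive rate), and there is no evident cancellation mechanism to make it almost surely bounded. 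Either way the argument does not close; your Borel--Cantelli estimate controls the wrong quantity (the increment on $[T_{k-1},T_k]$ for a $k$-dependent box, not the cumulative boundary term for a fixed box).

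The paper avoids this difficulty by working in expectation and exploiting translation invariance, which your approach does not use at all. It sets $E_\gamma(t)$ to be the \emph{expected} accumulated energy change due to rings of $\gamma$, notes this is the same for every translate of $\gamma$, and then looks at an $N\times N$ box containing $\sim cN^2$ translates of $\gamma$. The expected total energy change in the box is at least $-MN^2$ (first-moment bound) and decomposes into the interior contribution $\sim cN^2 E_\gamma(t)$ plus a boundary term which, by the rate hypothesis, is at most $tLN\log N$ in expectation. Dividing by $N^2$ and sending $N\to\infty$ (with $t$ \emph{fixed}) kills the boundary term and yields $E_\gamma(t)\ge -M/c$ uniformly in $t$; monotonicity in $t$ then gives an almost-sure finite total energy drop, hence finitely many energy-reducing steps via your $\epsilon_\gamma$ argument. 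The key trick you are missing is this order of limits: let the box grow first so that the boundary becomes negligible relative to the volume, and only then send $t\to\infty$.
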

\begin{proof}
Denote by $E_\gamma (t)$ the expectation w.r.t. $\mathbf{P}_{\mathcal{J},\sigma_{0},\omega}$
of the accumulated energy change in $\mathcal{H}^\gamma_\mathcal{J}(\sigma)$ up to time $t$. Note that this
is always non-positive, almost surely finite (due to \eqref{eqn:WellDefinedness} and the first moment assumption) and the same for all loops of the same type.

Fix $N > 0$. For any type $[\gamma]$ there are $cN^2+O(N)$ loops of that type, which are contained in an $N \times N$ square around the origin. Therefore the expected accumulated energy change (in the $N \times N$ square) due to such loops is bounded above by
$(1-\varepsilon)cN^2E_\gamma (t)$ where $\varepsilon = \varepsilon(\gamma, N) > 0$, is a constant that depends on $N$ and on the structure of the loop $\gamma$, and which tends to $0$ as $N\rightarrow \infty$.

For such a square, the total energy change up to time $t$ (by which we mean the difference in the Hamiltonians restricted only to the square and its boundary between time $0$ and time $t$) is due to two sources: A negative contribution from spin-flips inside loops which are strictly contained in the square and (negative or positive) contribution from spin-flips inside loops which intersect the boundary of the square - both up to time $t$.
We claim that there is a constant $L$ such the latter is bounded from above, in expectation, by $tLN\log (N)$, for large enough $N$.
Indeed, it is enough to bound the expected total number triplets $(e,\gamma, t_i)$ such that $\gamma$ is a loop which intersects the boundary of the square, $e$ is an edge in the square which lays on $\gamma$ and the clock of $\gamma$ rang at time $t_i \leq t$.
This number is clearly bounded from above by a constant multiple of $t \sum_{0\leq l \leq \frac{N}{2}}(N-l)\sum_{[\gamma]\in \Gamma_{e}(l)} f_{[\gamma]}$, since if a loop contains an edge of distance $k$ to the boundary, and also touches the boundary - its length must be at least $k$ (actually at least $2 k$). 
The last sum is bounded from above by $t (N\log (N) + O(N^{-9}))$, as can be seen if we divide the sum into summation over $l < \log (N)$, and summation over $\log (N) \leq l$. Hence the total contribution of the boundary is no more than $tLN\log (N)$.

On the other hand, the expected total energy change in the box is bounded by the difference between the minimal and maximal total energy inside the square, which is at least $-MN^2$ where $M$ is some positive constant.
Putting the above observations together gives
\begin{equation}
(1-\varepsilon)cN^2 E_\gamma (t) + tN\log (N)L \geq -MN^2
\end{equation}
Contributions from other internal loops are ignored as they can only sharpen the inequality (they are a non-positive amount which is added to the left hand side).
Dividing by $N^2$ and letting $N\rightarrow\infty$, and then $\varepsilon\rightarrow 0$, keeping $t$ fixed, we obtain $E_\gamma (t) \geq -M$. Then, letting $t\rightarrow\infty$ and using the monotonicity of $E_\gamma (t)$, we get that the total energy
change which is made by any loop $\gamma$ is finite almost surely.

Finally, given a specific loop $\gamma$, if we assume that the couplings have an absolutely-continuous distribution w.r.t. Lebesgue measure, then there is no spin choice $\sigma$ for which $\mathcal{H}^\gamma_\mathcal{J}(\sigma) = 0$ (no zero-energy-changes can occur).
Hence, each energy reducing spin-flip decreases the energy by at least some positive constant. Thus we obtain that the number of spin-flips of $\gamma$ (spin flips inside $\gamma$ which are caused by clock rings of $\gamma$) is finite almost surely.

\end{proof}
\begin{rmk}
Actually the additional inequality for the rates was added only for simplicity, and is unnecessary, as the contribution of the boundary is also expected to be nonpositive. 
\end{rmk}
\begin{lemma}\label{lem:no loop is negative afterwhile}
Under the above assumptions, if in addition the rate of every loop is positive, then for any loop there is some time $t$ such that after that time the value of the Hamiltonian over the edges dual to that loop's edges is non-positive.
\end{lemma}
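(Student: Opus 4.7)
The plan is to reduce the claim to the finiteness of energy-reducing steps supplied by Lemma~\ref{lem:finite loop flips}, via a conditional Borel--Cantelli argument that exploits the hypothesis $f_\gamma>0$. First I would fix $\gamma$ and invoke Lemma~\ref{lem:finite loop flips} to obtain a random time $T_0<\infty$ a.s.\ after which no energy-reducing step of $\gamma$ occurs. Recall that a ring of $\gamma$'s Poisson clock at a time $s$ with $\mathcal{H}^\gamma_{\mathcal{J}}(\sigma_s)>0$ deterministically forces a flip and is by definition an energy-reducing step, so no such ring may occur past $T_0$. I would then decompose the trajectory on $[T_0,\infty)$ into positive excursions by setting
\[
\tau_1=\inf\{t\ge T_0:\mathcal{H}^\gamma_{\mathcal{J}}(\sigma_t)>0\},\qquad u_k=\inf\{t>\tau_k:\mathcal{H}^\gamma_{\mathcal{J}}(\sigma_t)\le 0\},\qquad \tau_{k+1}=\inf\{t>u_k:\mathcal{H}^\gamma_{\mathcal{J}}(\sigma_t)>0\},
\]
so that the lemma is equivalent to the assertion that only finitely many $\tau_k$ are finite. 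The case $u_k=\infty$ is handled at once: the independent Poisson clock of $\gamma$ almost surely rings in the infinite interval $[\tau_k,\infty)$, producing a forbidden energy-reducing step after $T_0$.

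For finite excursions I would compare, conditionally on $\mathcal{F}_{\tau_k}$, the waiting time $X\sim\mathrm{Exp}(f_\gamma)$ for the next ring of $\gamma$ with the waiting time $Y$ for the next ring of any loop $\gamma'\ne\gamma$ whose interior contains at least one endpoint of an edge in $\gamma^{*}$ -- these are exactly the loops whose rings could change the value of $\mathcal{H}^\gamma$. The set of such endpoints is finite, and condition \eqref{eqn:WellDefinedness} together with translation-invariance of the proper loop frequencies bounds $\sum_{\gamma'\text{ with interior }\ni v}f_{\gamma'}$ uniformly in $v$, giving a finite total competing rate $\Lambda$ for $Y$. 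By independent exponential competition,
\[
P(X<Y\mid\mathcal{F}_{\tau_k})\;\ge\;\frac{f_\gamma}{f_\gamma+\Lambda}\;=:\;p>0.
\]
On $\{X<Y\}$ the clock of $\gamma$ rings while $\mathcal{H}^\gamma$ is still positive, producing an energy-reducing step inside the excursion. If $\tau_k<\infty$ for infinitely many $k$, the conditional Borel--Cantelli lemma then forces infinitely many such post-$T_0$ energy-reducing steps for $\gamma$, contradicting Lemma~\ref{lem:finite loop flips}. Hence only finitely many excursions occur, and after the last $u_k$ one has $\mathcal{H}^\gamma_{\mathcal{J}}(\sigma_t)\le 0$ for all larger $t$.

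The step deserving the most care is the finiteness of $\Lambda$: one must extract from \eqref{eqn:WellDefinedness}, using $S_{[\gamma']}\ge 1$ and the translation-invariance of the frequencies, that $\sum_{\gamma' :\,v\in\mathrm{int}(\gamma')}f_{\gamma'}$ is finite and equals the same constant at every vertex $v$, then multiply by the finite number of endpoints of edges dual to $\gamma$. Beyond this, one should verify that the $\tau_k$ are stopping times for a filtration with respect to which $\gamma$'s clock and the aggregate of the other relevant clocks are independent Poisson processes with the desired rates -- this is standard from the Poisson construction of the dynamics and the strong Markov property, and the rest of the argument (the exponential competition and the conditional Borel--Cantelli step) is then routine.
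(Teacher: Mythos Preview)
Your approach is sound and in fact cleaner than the paper's, but there is one technical slip you should repair. The time $T_0$ you introduce---the last energy-reducing step of $\gamma$---is \emph{not} a stopping time for the natural filtration: knowing that no further energy-reducing step occurs requires looking into the entire future. Consequently the $\tau_k$ built on top of $T_0$ are not stopping times either, and you cannot invoke the strong Markov property at $\tau_k$ to get the exponential competition with the stated rates; the sentence asserting this verification is ``standard'' is precisely where the argument breaks. The fix is immediate: drop $T_0$ altogether, define the excursions $\tau_k,u_k$ starting from time $0$ (these \emph{are} stopping times, since $\mathcal H^\gamma_{\mathcal J}(\sigma_t)$ is adapted and changes only at ring times), and run the exponential competition and conditional Borel--Cantelli exactly as written. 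The conclusion is then that infinitely many excursions force infinitely many energy-reducing steps of $\gamma$, which contradicts Lemma~\ref{lem:finite loop flips} directly---no need to first pass to a ``post-$T_0$'' regime.

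As for the comparison with the paper: the paper's sketch does not decompose into excursions. It first takes a large box around $\gamma$ and waits until no loop contained in the box performs any further energy-reducing step; thereafter the only loops that can perturb $\mathcal H^\gamma$ are long ones straddling the box boundary, whose aggregate rate is small by the decay hypothesis of Lemma~\ref{lem:finite loop flips}. It then invokes the recurrence Lemma~\ref{lem:from newman} with $A=\{\mathcal H^\gamma>0\}$ and $B=\{\gamma\text{ flips within one time unit}\}$. Your route bypasses the box localization: the well-definedness condition~\eqref{eqn:WellDefinedness} (via $S_{[\gamma']}\ge 1$ and translation invariance) already gives a finite total competing rate $\Lambda$ at every vertex, so the exponential race can be set up globally. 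Both arguments rest on the same underlying recurrence idea---if $\{\mathcal H^\gamma>0\}$ is visited infinitely often, $\gamma$ itself must flip infinitely often---but yours packages it more economically and uses only~\eqref{eqn:WellDefinedness} for the rate bound rather than the sharper decay assumption.
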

\begin{proof}
$\mathbf{Sketch.}$ 
Consider a fixed loop $\gamma$, and take a large enough box around it. After some time no loop which is contained in that box makes an energy-reducing step. But it still may happen that that spin-flips inside larger loops which are not contained in the box will change the Hamiltonian of $\gamma$.
We would like to show that after some time those changes will not make that Hamiltonian positive.
The idea is that the total possible flip rate of long enough loops
(such as loops that intersect the large box, yet not contained in it), must be smaller than that of $\gamma$ as large loops have extremely slow rates,
and large enough loops with a common edge have small total flip rate, due to convergence. We would like to show that if there were infinitely many such
changes, $\gamma$, whose clock has much more often rings, must make an energy-reducing step again. This follows immediately from Lemma \ref{lem:from newman}.
Applying the lemma for $A$ as being all the configurations that when reducing them to the large box around $\gamma$ leaves the Hamiltonian of $\gamma$ positive,
and $B$ be the event that $\gamma$ flips (given any initial setting and in at most one time unit), gives us the result.
\end{proof}

We can now state:
\begin{theorem}\label{thm:weak limits in loop dynamics}
Any subsequential weak limit of $\sigma_{t}$ (for a sequence of times which tend to infinity) under $\mathbf{P}_{\mathcal{J},\sigma_{0},\omega}$ must be a
translation invariant measure, supported only on ground state configurations.
\end{theorem}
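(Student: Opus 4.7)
My plan is to handle the two assertions — translation invariance and support on ground states — separately. For translation invariance, note that $\mathbf{P}_\mathcal{J}$ and $\mathbf{P}_{\sigma_0}$ are product measures, the Poisson clocks are i.i.d.\ across loops, and since the frequencies $\mathbf{f}$ are proper (constant on congruence classes) the law of the input data is invariant under $\mathbb{Z}^2$-translations. The update rule of the loop dynamics is local and translation-equivariant, so the joint law of $(\mathcal{J},\sigma_t)$ is $\mathbb{Z}^2$-invariant for each $t\geq 0$, and this property is preserved in any weak subsequential limit.

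For the support property, the key reduction is that $\sigma$ is a ground state for $\mathcal{J}$ if and only if $\mathcal{H}^\gamma_\mathcal{J}(\sigma) \leq 0$ for every simple dual loop $\gamma$. Indeed, in a planar graph the boundary of any finite vertex set decomposes into simple dual loops, so non-positivity of $\mathcal{H}^\gamma$ on each such loop rules out every energy-reducing finite flip. The set $\Gamma$ of simple dual loops in $(\mathbb{Z}^{2})^{*}$ is countable, so it suffices to verify, for each fixed $\gamma$, that every weak subsequential limit $\mu$ satisfies $\mu(\mathcal{H}^\gamma_\mathcal{J}(\sigma)\leq 0) = 1$, and then take the countable intersection.

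Fix such a $\gamma$. By Lemma \ref{lem:no loop is negative afterwhile}, $\mathbf{P}_{\mathcal{J},\sigma_0,\omega}$-a.s.\ there is a random time beyond which $\mathcal{H}^\gamma_\mathcal{J}(\sigma_t) \leq 0$, and therefore
\begin{equation*}
\mathbf{P}_{\mathcal{J},\sigma_0,\omega}\bigl(\mathcal{H}^\gamma_\mathcal{J}(\sigma_t)\leq 0\bigr)\longrightarrow 1 \quad \text{as } t\to\infty.
\end{equation*}
The functional $(\mathcal{J},\sigma)\mapsto \mathcal{H}^\gamma_\mathcal{J}(\sigma)$ is continuous and depends only on finitely many spins and couplings, so $\{\mathcal{H}^\gamma_\mathcal{J}(\sigma)\leq 0\}$ is a closed cylinder set. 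The portmanteau theorem then yields
\begin{equation*}
\mu\bigl(\mathcal{H}^\gamma_\mathcal{J}(\sigma)\leq 0\bigr) \;\geq\; \limsup_k \mathbf{P}_{\mathcal{J},\sigma_0,\omega}\bigl(\mathcal{H}^\gamma_\mathcal{J}(\sigma_{t_k})\leq 0\bigr) \;=\; 1,
\end{equation*}
as required.

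The main obstacle has essentially already been absorbed into Lemma \ref{lem:no loop is negative afterwhile}, whose proof balances the flip rate of $\gamma$ itself against the much smaller total flip rate of the long loops that intersect a large box enclosing $\gamma$, via the recurrence principle of Lemma \ref{lem:from newman}. Once that input is in hand, the passage from \emph{``eventually non-positive on each loop, almost surely''} to \emph{``the weak limit is supported on ground states''} is routine portmanteau bookkeeping, and countability of $\Gamma$ makes the final intersection costless.
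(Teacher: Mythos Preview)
Your proof is correct and follows essentially the same route as the paper's: invoke Lemma~\ref{lem:no loop is negative afterwhile} to get $\mathbf{P}_{\mathcal{J},\sigma_0,\omega}(\mathcal{H}^\gamma_\mathcal{J}(\sigma_t)>0)\to 0$ for each loop $\gamma$, then pass to the weak limit loop by loop. The paper's version is extremely terse (``Translation invariance is clear'' and a one-line appeal to the lemma), whereas you have spelled out the portmanteau step, the closedness of $\{\mathcal{H}^\gamma_\mathcal{J}(\sigma)\leq 0\}$, the countability of $\Gamma$, and the planar decomposition that identifies ground states with configurations having $\mathcal{H}^\gamma_\mathcal{J}\leq 0$ on every simple dual loop --- all details the paper leaves implicit.
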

\begin{proof}
Translation invariance is clear. By the above lemma, for any loop $\gamma$ we have
$\mathbf{P}_{\mathcal{J},\sigma_{0},\omega} (\mathcal{H}^\gamma_\mathcal{J}(\sigma_t) > 0) \to 0$ as $t \to \infty$. Therefore in any subsequential limit $\sigma_{\infty}$ we have $\mathcal{H}^\gamma_\mathcal{J}(\sigma_{\infty}) \leq 0$ for all $\gamma$ a.s.. Therefore $\sigma_{\infty}$ is a ground state configuration.
\end{proof}

Next we show that if $J_{xy} = 1$ for all $x \thicksim y$ with probability $1$, then under the loop dynamics there is a weak limit to $\sigma_t$ as $t\rightarrow\infty$ and it is supported on two configurations, those with all spins $+1$ or all spins $-1$. 
Yet, vertices flip their spin infinitely many times, i.e. there is no strong pointwise limit for $\sigma_t$. The reason is that having a strong limit is a translation-invariant event, which may be approximated by knowing the initial spins, in some neighborhood of the origin and by knowing the rings times and coins for loops in that neighborhood, up to some time. Thus this is a $0-1$ event. In the same manner, freezing to a configuration where all the spins are $+1$, is also a $0-1$ event. But now symmetry (with the other spin) shows it must be a $0$ event.
It should be noted that such a result is still open for zero-temperature Glauber dynamics in the square lattice and an understanding of the support of the corresponding subsequential weak limit is lacking. See (\cite{CamiaSantisNewman}).

\begin{claim}\label{clm:weak limit in loops +1}
If $\mathcal{J} \equiv 1$, then $\sigma_t \Longrightarrow \frac{1}{2}\delta_+ + \frac{1}{2}\delta_-$ where $\delta_+$ ($\delta_-$) is the Dirac mass on the all $+$ ($-$) configuration. Hence, for almost every $\sigma_0$, modulo the global flip, we have convergence in probability to the dirac mass on the GSP of a single spin (all $+$ or all $-$).
%
\end{claim}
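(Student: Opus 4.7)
The plan is to chain two earlier results with a symmetry argument. First, Theorem~\ref{thm:weak limits in loop dynamics} guarantees that every subsequential weak limit $\mu_\infty$ of the law of $\sigma_t$ is translation-invariant and supported on ground state configurations. Since all couplings equal $1$, Theorem~\ref{thm:single GSP in Z^d if J=1} implies that any translation-invariant measure on ground states must be supported on the two uniform configurations $\pm\mathbf{1}$. Hence $\mu_\infty = p\,\delta_+ + (1-p)\,\delta_-$ for some $p \in [0,1]$, and all that remains is to identify $p$.

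Next I would exhibit the $\mathbb{Z}_2$-symmetry. Under the global flip $\Phi:\sigma\mapsto -\sigma$, the loop energy $\mathcal{H}^\gamma_{\mathcal{J}}(\sigma)$ is invariant, and the flipping rule depends on $\sigma$ only through this quantity. Consequently, using the same realization of clock rings and tie-breaking coins, the dynamics started from $-\sigma_0$ produces precisely $-\sigma_t$ at time $t$. Since $\mathbf{P}_{\sigma_0}$ is the symmetric product measure, $\sigma_0 \stackrel{d}{=} -\sigma_0$, and therefore $\sigma_t \stackrel{d}{=} -\sigma_t$ for every $t \geq 0$. Passing to the limit along any weakly convergent subsequence yields $\mu_\infty \circ \Phi^{-1} = \mu_\infty$, i.e.\ $p = 1-p$, so $p = 1/2$. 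Because $\{-1,+1\}^{\mathbb{Z}^2}$ is compact and every subsequential limit equals $\tfrac12\delta_++\tfrac12\delta_-$, the full weak convergence $\sigma_t \Longrightarrow \tfrac12\delta_+ + \tfrac12\delta_-$ follows by the standard subsequence principle.

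For the second assertion, I would push forward under the quotient map $\{-1,+1\}^{\mathbb{Z}^2}\to\{-1,+1\}^{\mathbb{Z}^2}/\Phi$; the two atoms collapse to the single equivalence class of the uniform GSP, so the image measure is a Dirac mass. Since weak convergence to a Dirac mass is equivalent to convergence in probability, $[\sigma_t]$ converges in probability to the unique GSP class, as claimed.

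The main obstacle is essentially just the care needed in the symmetry coupling: one must verify that the same Poisson rings and the same tie-breaking coin-tosses may be used to drive both $\sigma_t$ and its flipped counterpart, which works precisely because the loop dynamics is a deterministic function of $\mathcal{H}^\gamma_{\mathcal{J}}(\sigma)$ and the coin, and $\mathcal{H}^\gamma_{\mathcal{J}}$ is $\Phi$-invariant. Everything else is a direct application of the previously established Theorems~\ref{thm:weak limits in loop dynamics} and \ref{thm:single GSP in Z^d if J=1}.
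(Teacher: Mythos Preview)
Your proof is correct and follows essentially the same route as the paper: invoke Theorem~\ref{thm:weak limits in loop dynamics} to get translation-invariant ground-state-supported subsequential limits, apply Theorem~\ref{thm:single GSP in Z^d if J=1} to reduce to $p\delta_+ + (1-p)\delta_-$, and use the global spin-flip symmetry of the law of $\sigma_t$ to force $p=1/2$. Your write-up is more explicit than the paper's (you spell out the coupling behind the symmetry, the compactness/subsequence argument for full convergence, and the quotient-map justification of the ``Hence'' clause), but the underlying argument is identical.
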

\begin{proof}
It follows from Theorem \ref{thm:weak limits in loop dynamics} that subsequential weak limits for $\sigma_{t}$ are
translation invariant measures supported on ground states. Theorem \ref{thm:single GSP in Z^d if J=1} shows that  the support of these measures contains only monochromatic configurations, i.e. any subsequential weak limit must be of form $p\delta_+ + (1-p)\delta_-$ for some $p \in [0,1]$ which depends on the given sequence. Invariance under $\mathbf{P}_{\mathcal{J},\sigma_{0},\omega}$ of the distribution of $\sigma_t$ w.r.t. to global spin-flip implies that $p$ must be $1/2$. 
\end{proof}

\begin{rmk}
$ $ \\
\begin{enumerate}
\item
    The above analysis carries through for any product measure for the couplings, as long as their distribution is strictly positive. These models are called $\emph{Ferromagnetic}$.
\item
    The loop dynamics can be generalized to lattices of higher dimensions, where instead of loops we consider closed surfaces of co-dimension $1$. Similar bounds for the rates are needed to make the process well defined and all statements carry through with minor modifications (in particular the case $J=+1$). In fact, everything can be generalized to amenable Cayley graphs.
\end{enumerate}
\end{rmk}

We end this section with a result concerning the ``stability'' of a sequential limit under small perturbations of the couplings. Roughly speaking we show that there exist couplings, such that changing their value a small enough amount, without changing the order of clocks rings does not change evolution of spin configurations. We assume henceforth that the couplings are chosen from a product measure with marginals which are supported on the entire real line and with finite first moment.
\begin{proposition}\label{prop:walls between weak limits in dynamics}
Under the above assumptions, let $\sigma_1,\sigma_2$ be two spin configurations which are in the support of some subsequential weak limit. Consider the subgraph consisting of all edges which are dual to edges $(u,v)$ where $\sigma_1(u) = \sigma_2(u)$ but $\sigma_1(v) = -\sigma_2(v)$, then this subgraph is a collection of two-sided infinite dual paths.
\end{proposition}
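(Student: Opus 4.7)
The plan is to combine the ground state support of any subsequential weak limit (Theorem \ref{thm:weak limits in loop dynamics}) with a genericity argument for the couplings. Observe first that the set of dual edges in question is exactly the dual boundary of the agreement set $A = \{v : \sigma_1(v) = \sigma_2(v)\}$. As a subgraph of the planar dual lattice $(\mathbb{Z}^{2})^{*}$, this boundary is Eulerian: every dual vertex is incident to an even number of such dual edges, because walking around the corresponding primal face must cross the $A$--$A^{c}$ frontier an even number of times. A locally finite Eulerian subgraph of $(\mathbb{Z}^{2})^{*}$ decomposes into edge-disjoint simple loops and two-sided infinite paths (rays are impossible precisely because all vertices have even degree), so the entire claim reduces to ruling out finite loops in the wall.

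Next I would carry out the energy bookkeeping. Fix a hypothetical finite dual loop $\gamma$ contained in the wall and let $R$ be the finite set of primal vertices it encloses. Flipping all spins of $\sigma_{1}$ inside $R$ alters only the edges dual to $\gamma$ and so changes the Hamiltonian by $-2\mathcal{H}^{\gamma}_{\mathcal{J}}(\sigma_{1})$; the analogous flip applied to $\sigma_{2}$ changes its Hamiltonian by $-2\mathcal{H}^{\gamma}_{\mathcal{J}}(\sigma_{2})$. The defining property of wall edges, $\sigma_{1}(u)=\sigma_{2}(u)$ and $\sigma_{1}(v)=-\sigma_{2}(v)$, yields the pointwise identity $\sigma_{1}(u)\sigma_{1}(v) = -\sigma_{2}(u)\sigma_{2}(v)$ on every edge of $\gamma^{*}$, hence $\mathcal{H}^{\gamma}_{\mathcal{J}}(\sigma_{1}) = -\mathcal{H}^{\gamma}_{\mathcal{J}}(\sigma_{2})$. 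Since both $\sigma_{1}$ and $\sigma_{2}$ are ground states, each $\mathcal{H}^{\gamma}_{\mathcal{J}}(\sigma_{i})$ must be $\leq 0$, forcing both to vanish.

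The conclusion then follows from the absolute continuity of the coupling distribution. The equation $\mathcal{H}^{\gamma}_{\mathcal{J}}(\sigma_{i}) = 0$ is a signed linear relation $\sum_{e \in \gamma^{*}} \varepsilon_{e} J_{e} = 0$ whose coefficients $\varepsilon_{e} = \pm 1$ are determined by the spin pattern on $V_{\gamma}$. For each of the countably many choices of a finite dual loop $\gamma$ and a sign pattern on $\gamma^{*}$, this relation holds with $\mathbf{P}_{\mathcal{J}}$-probability zero, so the union of all such events is $\mathbf{P}_{\mathcal{J},\sigma_{0},\omega}$-null. Consequently no finite loop can occur in the wall almost surely, proving the proposition.

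The main obstacle I anticipate is a purely cosmetic one, namely what to do at dual vertices of degree $4$ in the wall, where a decomposition into arcs is not canonical. This is not a real difficulty: fixing any convention (for instance, pairing opposite edges) produces an unambiguous decomposition, and once finite cycles are excluded by the genericity argument above, the remaining components are forced to be two-sided infinite paths.
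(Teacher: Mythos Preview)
Your argument is correct and essentially self-contained, whereas the paper's proof is a one-line citation: it observes that the subsequential limit is translation invariant (by Theorem~\ref{thm:weak limits in loop dynamics}) and then invokes \cite{Newman00natureof} for the structure of domain walls between ground states in the support of any translation-invariant coupling--ground-state law. What you have written is, in effect, a direct reconstruction of the relevant piece of the Newman--Stein argument: the Eulerian observation, the antisymmetry $\mathcal{H}^{\gamma}_{\mathcal{J}}(\sigma_1)=-\mathcal{H}^{\gamma}_{\mathcal{J}}(\sigma_2)$ on a putative finite wall loop, and the genericity of continuous couplings to force both sides to zero and then to a null event. Your route has the advantage of not needing translation invariance of the limiting measure at all---only that $\sigma_1,\sigma_2$ are ground states for a $\mathbf{P}_{\mathcal{J}}$-generic realization---so it is both more elementary and slightly more general than what the paper actually states. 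The paper's approach, on the other hand, places the result inside the broader Newman--Stein framework, which gives additional structural information (e.g.\ about metastates) that your local argument does not touch. One small cosmetic point: ``absolute continuity'' is stronger than what you use; atomlessness of the marginal of $J_e$ already gives $\mathbf{P}_{\mathcal{J}}(\sum_e \varepsilon_e J_e = 0)=0$, and this matches the paper's standing assumption of a continuous coupling distribution.
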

\begin{proof}
As the subsequential limits are translation invariant measures, this proposition is a direct result of \cite{Newman00natureof}.
\end{proof}
\begin{rmk}
These dual paths are called $\emph{domain walls}$.
Note that dual edges whose primal edges are $\emph{fixed}$ (according to the terminology of Section \ref{sec:spinglasses}) will be in no domain wall, and hence will flip only finitely many times.
\end{rmk}

\begin{claim}\label{clm:flexibility out of walls}
Given the couplings, the initial spin configuration, and the ring times, almost surely there exists an edge $e$ and a positive number $\varepsilon$ such that changing the coupling of $e$ by any number whose absolute value is smaller than $\varepsilon$ does not change $\{\sigma_t \}_{t\geq 0}$.
\end{claim}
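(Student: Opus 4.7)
The plan is to fix an arbitrary edge $e$ and show that almost surely $e$ itself is ``good''; the claimed existence is then immediate. Only rings of loops $\gamma$ whose dual contains $e^{*}$ can be affected by a perturbation of $J_e$, since for every other loop $\mathcal{H}^\gamma_\mathcal{J}$ does not involve $J_e$. At such a ring at time $t$, the flip decision is governed by the sign of
\[
\mathcal{H}^\gamma_\mathcal{J}(\sigma_t)=-J_e\sigma_u(t)\sigma_v(t)-\sum_{(x,y)\in\gamma^{*}\setminus\{e\}}J_{xy}\sigma_x(t)\sigma_y(t),
\]
which is preserved under $J_e\to J_e+\delta$ as long as $|\delta|<|\mathcal{H}^\gamma_\mathcal{J}(\sigma_t)|$. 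Hence it suffices to produce a random $\varepsilon>0$ that bounds $|\mathcal{H}^\gamma_\mathcal{J}(\sigma_t)|$ from below uniformly over all loops $\gamma\ni e^{*}$ and all ring times $t$ of $\gamma$'s clock. Using the continuity of the coupling distribution and a countable union over pairs $(\gamma,\sigma|_{V_\gamma})$, $\mathbf{P}_\mathcal{J}$-almost surely $\mathcal{H}^\gamma_\mathcal{J}(\sigma)\neq 0$ for every such pair, so the obstruction is purely quantitative.

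To extract a uniform bound I would take $e$ to be a \emph{fixed} edge in the sense of Section~\ref{sec:spinglasses}; such edges have positive density on $\mathbb{Z}^{2}$ and therefore exist almost surely. The remark following Proposition~\ref{prop:walls between weak limits in dynamics} then guarantees that $e^{*}$ lies in no domain wall, so $\sigma_u(t)\sigma_v(t)$ is eventually constant: there is a random finite time $T$ after which no loop through $e^{*}$ performs an actual flip. On $[0,T]$ the summability condition \eqref{eqn:WellDefinedness} yields that the total rate of ring events of loops through $e^{*}$ is finite, so only finitely many such events occur, and the minimum of $|\mathcal{H}^\gamma_\mathcal{J}(\sigma_t)|$ over this finite set is almost surely positive; this provides a first candidate bound $\varepsilon_1$.

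For $t>T$ every ring event at $\gamma\ni e^{*}$ satisfies $\mathcal{H}^\gamma_\mathcal{J}(\sigma_t)\leq 0$. Here I would split the family of loops through $e^{*}$ by length: loops of length at most some large $L$ form a finite collection and their Hamiltonians are uniformly controlled by the fixed-edge inequality $|J_e|>\sum_{z\sim u}|J_{uz}|$, which forces $|\mathcal{H}^\gamma_\mathcal{J}(\sigma_t)|$ away from $0$; loops of length exceeding $L$ have total ring rate tending to $0$ as $L\to\infty$ by \eqref{eqn:WellDefinedness}, and a Borel--Cantelli argument applied to the finitely many discrete nonzero values of $\mathcal{H}^\gamma_\mathcal{J}$ on $V_\gamma$ should keep these Hamiltonians bounded uniformly below $-\varepsilon_2$ for a suitable random $\varepsilon_2>0$. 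Taking $\varepsilon=\min(\varepsilon_1,\varepsilon_2)$ then yields the claim.

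The main obstacle will be the treatment of long loops: although the Hamiltonian of each fixed loop takes only finitely many nonzero values, the infimum of these values over loops of arbitrary length is typically zero, so converting the heuristic that ``only a sparse set of spin patterns on $V_\gamma$ is actually realized at ring times'' into a genuine uniform positive lower bound is the core technical difficulty, and is the step where the domination property of a fixed edge, together with the fast decay of long-loop ring rates, must be combined most carefully.
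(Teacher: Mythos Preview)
Your approach up to and including the definition of $\varepsilon_1$ coincides with the paper's: both pick a fixed edge $e$, observe that after some finite time $T$ the edge is permanently satisfied so no loop through $e^{*}$ ever flips again, and handle $[0,T]$ by minimizing over finitely many rings. The divergence, and the gap, is in the treatment of $t>T$.

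You propose to bound $|\mathcal{H}^{\gamma}_{\mathcal J}(\sigma_t)|$ from below uniformly over all long loops and all their ring times via Borel--Cantelli. But the obstacle you yourself flag is real and is not overcome by anything you wrote: for a loop of length $l$ the number of possible spin patterns on $V_\gamma$, and hence of possible Hamiltonian values, is exponential in $l$, and the infimum of the nonzero ones goes to $0$ as $l\to\infty$. Since the configuration $\sigma_t$ is not fixed but evolves, there is no a priori reason why the realized values at ring times should avoid an arbitrarily small neighborhood of $0$, and the rate decay of long loops gives you control over how often long loops ring, not over which spin pattern sits on $V_\gamma$ when they do. So the step ``a Borel--Cantelli argument \dots\ should keep these Hamiltonians bounded uniformly below $-\varepsilon_2$'' is not supported.

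The paper sidesteps this entirely by a monotonicity/countability trick instead of a uniform bound. After time $T$ the edge $e$ is satisfied, so \emph{increasing} $|J_e|$ can only make each $\mathcal{H}^{\gamma}_{\mathcal J}(\sigma_t)$ more negative for loops through $e^{*}$; combined with the $[0,T]$ argument this gives a one-sided $\varepsilon$. Then, conditioning on all other interactions and ring times, the set of values of $J_e$ at which the flip history changes is at most countable (each value has a one-sided open interval of constancy, and distinct intervals are disjoint), hence of measure zero. So $\mathbf{P}_{\mathcal J}$-almost surely $J_e$ lies in the interior of such an interval and a two-sided $\varepsilon$ exists. This argument never needs a uniform positive lower bound on $|\mathcal{H}^{\gamma}_{\mathcal J}(\sigma_t)|$ over long loops, which is why it succeeds where your direct approach stalls.
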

\begin{proof}
$\mathbf{Sketch.}$ Consider any fixed edge $e$. In any weak limit it must be satisfied. Thus, there must be a time after which both of its spins flip together (or maybe do not flip at all). Up to that time, $t_0$, for only finitely many dual loops which intersect it (and hence - contain its dual edge) - their Poisson clock had ringed (we count multiplicity). Due to continuity, there exist some positive $\delta$ such that changing the value of this edge's interaction by not more than $\delta$ would not change the sign of any of the dual loops whose clock rang (at the corresponding times). Note that if we increase the absolute value of the interaction by that $\delta$, we do not change the process up to time $t_0$. After time $t_0$, as our edge is always satisfied, we increased the value of any dual loop which intersects it, and hence we clearly did not cause any new flip. Thus, in this case the process does not change.
\\ What about lowering the absolute value of the interaction? 
Assume that we fix all the interactions other then that of $e$, and all the clock times (and coins, if needed), and we condition on $e$ being a fixed edge. From the above it follows that a.s. there are at most countably many distinct options for the series of loop flips. Indeed, for any value of the interaction of $e$ (which is conditioned to be a fixed edge), there is some interval with positive length containing this value (maybe as boundary), s.t. changing the value of $e$ within that interval does not change the evolution of flips. If $x,y$ are two possible interaction values for $e$, for which it is a fixed edge, either the interiors of their corresponding intervals are distinct or they coincide. Moreover the flip series depends on the interval and not on the interaction itself. As every interval contains a rational number - there are at most countably many such intervals.
Thus, to any sequence of ring times, and interactions of all edges except $e$ there corresponds a unique a.s. countable set of interval boundaries. For any value of interaction other then these boundaries (which is in the half line of values where $e$ is fixed) there is some $\varepsilon > 0$ such that changing the value of the interaction in no more than $\varepsilon$ (to either side) leaves the flip sequence unchanged. As the set of boundaries is of measure $0$ we are done.
\end{proof}
\newpage

\bibliography{bibli}
\bibliographystyle{alpha}


\end{document}